\title{Scheduling with Many Shared Resources} 
\author{Max A. Deppert}{Kiel University, Kiel, Germany}{made@informatik.uni-kiel.de}{https://orcid.org/0000-0003-3083-7998}{Supported by the German Research Foundation (DFG) project JA 612/25-1}
\author{Klaus Jansen}{Kiel University, Kiel, Germany}{kj@informatik.uni-kiel.de}{https://orcid.org/0000-0001-8358-6796}{Supported by the German Research Foundation (DFG) project JA 612/25-1}
\author{Marten Maack}{Paderborn University, Paderborn, Germany}{marten.maack@hni.uni-paderborn.de}{https://orcid.org/0000-0001-7918-6642}{Supported by the German Research Foundation (DFG) within the Collaborative Research Centre “On-The-Fly Computing” under the project number 160364472 — SFB 901/3.}
\author{Simon Pukrop}{Paderborn University, Paderborn, Germany}{simonjp@hni.uni-paderborn.de}{https://orcid.org/0000-0002-4473-5215}{Supported by the German Research Foundation (DFG) within the Collaborative Research Centre “On-The-Fly Computing” under the project number 160364472 — SFB 901/3.}
\author{Malin Rau}{Universität Hamburg, Hamburg, Germany}{malin.rau@uni-hamburg.de}{https://orcid.org/0000-0002-5710-560X}{Supported by DFG Research Group ADYN under grant DFG 411362735}
\authorrunning{M.\,A. Deppert, K. Jansen, M. Maack, S. Pukrop, and M. Rau} 
\keywords{Scheduling, Approximation, Parallel Identical Machines, Resource Constraints, Conflicts} 
\g@addto@macro\bfseries{\boldmath}
\g@addto@macro\normalfont{\unboldmath}
\newlist{steps}{enumerate}{1}
\setlist[steps, 1]{wide,labelindent=0pt,parsep=1ex, label ={\textbf{Step \arabic*. }}}
\newcommand\defaulthlines[1]{
  \hlines{#1}{
    {3/2}/$\frac32$/solid/,
    {5/4}/$\frac54$//,
    1/$1$/solid/,
    {3/4}/$\frac34$//,
    {1/2}/$\frac12$//,
    {1/4}/$\frac14$//,
    0/$0$/solid/
  };
}
\newcommand\machines{m}
\newcommand\classes{\mathcal{C}}
\newcommand\resClasses{\bar{\mathcal{C}}}
\newcommand\jobs{\mathcal{J}}
\newcommand\resources{\mathcal{R}}
\newcommand{\Oh}{\mathcal{O}}
\newcommand{\eps}{\varepsilon}
\newcommand{\opt}{\mathrm{OPT}}
\newcommand{\confs}{\mathcal{K}}
\newcommand{\windows}{\mathcal{W}}
\newcommand{\layers}{\Xi}
\newcommand{\problem}{\textnormal{\textsc{MSRS}}\xspace}
\newcommand{\ANoHuge}{\texttt{Algorithm\_no\_huge}\xspace}
\newcommand{\AFiveThird}{\texttt{Algorithm\_5/3}\xspace}
\newcommand{\AThreeHalf}{\texttt{Algorithm\_3/2}\xspace}
\DeclareMathOperator{\poly}{poly}
\DeclarePairedDelimiter\ceil{\lceil}{\rceil}
\DeclarePairedDelimiter\floor{\lfloor}{\rfloor}
\DeclarePairedDelimiter\abs{\lvert}{\rvert}
\DeclarePairedDelimiterX\sett[2]{\lbrace}{\rbrace}{ #1 \,\delimsize| \,\mathopen{} #2 }
\providecommand\given{}
\newcommand\SetSymbol[1][]{%
\nonscript\:#1\vert
\allowbreak
\nonscript\:
\mathopen{}}
\DeclarePairedDelimiterX\Sett[1]\{\}{%
\renewcommand\given{\SetSymbol[\delimsize]}
#1
}
\definecolor{myBlueGreen}{HTML}{00677C}
\definecolor{myVilet}{HTML}{8E217D}
\definecolor{myOrange}{HTML}{F29400}
\definecolor{myRed}{HTML}{E43117}
\definecolor{myGreen}{HTML}{39842E}
\definecolor{mdarkred}{RGB}{165,0,38}
\definecolor{mred}{RGB}{215,25,28}
\definecolor{mdarkorange}{RGB}{244,109,67}
\definecolor{morange}{RGB}{253,174,97}
\definecolor{myellow}{RGB}{254,224,144}
\definecolor{mverylightblue}{RGB}{224,243,248}
\definecolor{mlightblue}{RGB}{171,217,233}
\definecolor{mblue}{RGB}{116,173,209}
\definecolor{mdarkblue}{RGB}{69,117,180}
\definecolor{mverydarkblue}{RGB}{49,54,149}
\theoremstyle{remark}
\newtheorem{stepi}{Step}
\newtheorem{stepii}{Step}
\newtheorem{stepiii}{Step}
\newtheorem*{invariant}{Invariant}
\crefname{observation}{Observation}{Observations}
\crefname{step}{Step}{Steps}
\crefname{stepi}{Step}{Steps}
\crefname{stepii}{Step}{Steps}
\crefname{stepiii}{Step}{Steps}
\tikzset{drawscheduleapplystyle/.code={\tikzset{#1}}}
\def\schedule@draw<#1>[#2] at(#3) #4;{
  \edef\globalopts{#2}
  \if\relax\detokenize{#1}\relax
    \let\schedule@list\empty
    \foreach \x in {#4} {
      \ifx\schedule@list\empty
        \xdef\schedule@list{\x}
      \else
        \xdef\schedule@list{\x,\schedule@list}
      \fi
    }
  \else
    \xdef\schedule@list{#4}
  \fi
  \xdef\drawscheduleWidth{0.0}
  \begin{scope}[shift={(#3)}]
  \foreach [count=\i from 0] \machine/\w/\machineopts in \schedule@list {
    \edef\newWidth{\drawscheduleWidth+\w}
    \xdef\drawscheduleTime{0.0}
    \ifdim \w pt > 0 pt
      \foreach \t/\labl/\jobopts/\nodeopts in \machine {
        \edef\newTime{\drawscheduleTime+\t}
        \if\relax\detokenize{#1}\relax
          \draw [drawscheduleapplystyle/.expand once=\globalopts,drawscheduleapplystyle/.expand once=\machineopts,drawscheduleapplystyle/.expand once=\jobopts] (\drawscheduleTime,\drawscheduleWidth) rectangle (\newTime,\newWidth) node[pos=.5,drawscheduleapplystyle/.expand once=\nodeopts] {\labl};
        \else
          \draw [drawscheduleapplystyle/.expand once=\globalopts,drawscheduleapplystyle/.expand once=\machineopts,drawscheduleapplystyle/.expand once=\jobopts] (\drawscheduleWidth,\drawscheduleTime) rectangle (\newWidth,\newTime) node[pos=.5,drawscheduleapplystyle/.expand once=\nodeopts] {\labl};
        \fi
        \xdef\drawscheduleTime{\newTime}
      }
      \xdef\drawscheduleWidth{\newWidth}
    \else
      \xdef\drawscheduleWidth{\drawscheduleWidth-\w}
    \fi
  }
  \end{scope}
}
\newcommand\schedule{}
\def\schedule{\schedule@checkorient}
\def\schedule@checkorient{\pgfutil@ifnextchar u{\schedule@up}{\schedule@checkops<>}}
\def\schedule@up up#1;{\schedule@checkops<u>#1;}
\def\schedule@checkops<#1>{\pgfutil@ifnextchar[{\schedule@ops<#1>}{\schedule@checkat<#1>[]}}
\def\schedule@checkat<#1>[#2]{\pgfutil@ifnextchar a{\schedule@at<#1>[#2]}{\schedule@draw<#1>[#2] at(0,0) }}
\def\schedule@ops<#1>[#2]#3;{\schedule@checkat<#1>[#2]#3;}
\def\schedule@at<#1>[#2]#3at#4(#5) #6;{\schedule@draw<#1>[#2] at(#5) #6;}
\newcommand\schedule@drawlines[4][]{
  \foreach \t/\labl/\drawopts/\nodeopts in {#4} {
    \if\relax\detokenize{#1}\relax
      \draw [dash pattern={on 1.5pt off 0.8pt},drawscheduleapplystyle/.expand once=\drawopts] (\t,#3+#2) -- (\t,-#2) node [below,drawscheduleapplystyle/.expand once=\nodeopts] {\labl};
    \else
      \draw [dash pattern={on 1.5pt off 0.8pt},drawscheduleapplystyle/.expand once=\drawopts] (-#2,\t) -- (#3+#2,\t) node [right,drawscheduleapplystyle/.expand once=\nodeopts] {\labl};
    \fi
  }
}
\newcommand\vlines[3][.2]{\schedule@drawlines{#1}{#2}{#3}}
\newcommand\hlines[3][.2]{\schedule@drawlines[h]{#1}{#2}{#3}}
\begin{document}

\maketitle

\begin{abstract}
Consider the many shared resource scheduling problem where jobs have to be scheduled on identical parallel machines with the goal of minimizing the makespan.
However, each job needs exactly one additional shared resource in order to be executed and hence prevents the execution of jobs that need the same resource while being processed.
Previously a $(2m/(m+1))$-approximation was the best known result for this problem.
Furthermore, a $6/5$-approximation for the case with only two machines was known as well as a PTAS for the case with a constant number of machines.
We present a simple and fast 5/3-approximation and a much more involved but still reasonable 1.5-approximation.
Furthermore, we provide a PTAS for the case with only a constant number of machines, which is arguably simpler and faster than the previously known one, as well as a PTAS with resource augmentation for the general case.
The approximation schemes make use of the N-fold integer programming machinery, which has found more and more applications in the field of scheduling recently.
It is plausible that the latter results can be improved and extended to more general cases.
Lastly, we give a $5/4 - \varepsilon$ inapproximability result for the natural problem extension where each job may need up to a constant number (in particular $3$) of different resources.
\end{abstract}


\section{Introduction}

We consider the problem of makespan minimization on identical parallel machines with many shared resources, or many shared resources scheduling (\problem) for short. 
In this problem, we are given $m$ identical machines, a set $\jobs$ of $n$ jobs, and a processing time or size $p_j\in \mathbb{N}_{\geq 0}$ for each job $j\in\jobs$.
Furthermore, each job needs exactly one additional shared resource in order to be executed and no other job needing the same resource can be processed at the same time.
Hence, the jobs are partitioned into (non-empty) classes $\classes$, i.e., $\bigcup\classes = \jobs$, such that each class corresponds to one of the resources.
A schedule $(\sigma, t)$ maps each job to a machine $\sigma: \jobs \rightarrow \Sett{1,\dots,\machines}$ and a starting time $t: \jobs \rightarrow \mathbb{N}_{\geq 0}$.
It is called valid if no two jobs overlap on the same machine and no two jobs of the same class are processed in parallel, i.e.:
\begin{itemize}
    \item $\forall j, j' \in \jobs, j \neq j'$ with $\sigma(j) = \sigma(j')$: $t(j) + p_j \leq t(j')$ or $t(j') + p_j' \leq t(j)$
    \item $\forall c \in \classes:  j, j' \in c, j \neq j'$: $t(j) + p_j \leq t(j')$ or $t(j') + p_j' \leq t(j)$
\end{itemize}
The makespan $C_{\max}$ of a schedule is defined as $ \max_{j\in\jobs}t(j) + p_j$ and the goal is to find a schedule with minimum makespan.
Note that \problem also models the case in which some jobs do not need a resource since in this case private resources can be introduced.

\subparagraph{State of the Art and Motivation.}

The study of scheduling problems with additional resources has a long and rich tradition.
Already in 1983, Blazewicz et al. \cite{DBLP:journals/dam/BlazewiczLK83} provided a classification for such problems along with basic hardness results and several additional surveys have been published since then \cite{DBLP:journals/eor/EdisOO13,Blazewicz2019,DBLP:reference/crc/BlazewiczBF04}.
The \problem problem, in particular, was introduced by Hebrard et al. \cite{DBLP:journals/dam/HebrardHJMPV16} who considered the scheduling of download plans for Earth observation satellites and provided a $(2m/(m+1))$-approximation for the problem.
Strusevich \cite{doi:10.1080/01605682.2020.1772019} revisited \problem and presented an additional application in human resource management.
Moreover, he provided a faster, alternative $(2m/(m+1))$-approximation that is claimed to be simpler as well, and a $6/5$-approximation for the case with only two machines.
The work also extends the three field notation for scheduling problems based on the convention for additional resources introduced in \cite{DBLP:journals/dam/BlazewiczLK83} to encompass the problem at hand.
In particular, \problem is denoted as $P|res\cdot111|C_{\max}$ in this notation.
The most recent result regarding \problem is due to Dósa et al. \cite{DBLP:journals/tcs/DosaKT19} who provided an efficient polynomial time approximation scheme (EPTAS) for \problem with a constant number of machines.
In fact, the EPTAS even works for a more general setting where each job $j$ additionally may only be assigned to a machine belonging to a given set $\mathcal{M}(j)$ of eligible machines. 

We employ standard notation regarding approximation schemes:
A \emph{polynomial time approximation scheme} (PTAS) provides a polynomial time $(1+\eps)$-approximation for each $\eps > 0$.
It is called \emph{efficient}, or EPTAS, if its running time is of the form $f(1/\eps)\poly(|I|)$ where $f$ is some function and $|I|$ the encoding length of the instance $I$.
Moreover, an EPTAS is called \emph{fully polynomial time approximation scheme} (FPTAS) if the function $f$ is a polynomial.

Since \problem includes makespan minimization on identical machines (without resource constraints) as a subproblem, it is NP-hard already on two machines and strongly NP-hard if the number of machines is part of the input due to straightforward reductions from the partition and 3-partition problem, respectively.
Hence, approximation schemes are essentially the best we can hope for.

The \problem problem has also been considered with regard to the total completion time objective \cite{DBLP:journals/corr/abs-1809-05009,DBLP:phd/basesearch/Janssen19}.
The study of this variant is motivated by a scheduling problem in the semiconductor industry.
On one hand, the authors show NP-hardness for generalizations of the problem, and on the other, they argue that the approach yielding a polynomial time algorithm for total completion time minimization in the absence of resource constraints leads to a $(2 - 1/m)$-approximation for the considered problem.

Another way of looking at \problem is to consider it as variant of scheduling with conflicts, where a conflict graph is given in which the jobs are the vertices and no two jobs connected by an edge may be processed at the same time.
This problem was introduced for unit processing times by Baker and Coffman in 1996 \cite{DBLP:journals/tcs/BakerC96}.
It is known to be APX-hard \cite{DBLP:journals/scheduling/EvenHKR09} already on two machines with job sizes at most 4 and a bipartite agreement graph, i.e., the complement of the conflict graph.
There are many positive and negative results for different versions of this problem (see, e.g., \cite{DBLP:journals/tcs/BakerC96,DBLP:journals/scheduling/EvenHKR09} and the references therein).
For instance, the problem is NP-hard on cographs with unit-size jobs but polynomial time solvable if the number of machines is constant \cite{DBLP:conf/mfcs/BodlaenderJ93}.
Note that in the case of \problem, we have a particularly simple cograph, i.e., a collection of disjoint cliques.

\subparagraph{Results.}

We present a $5/3$-approximation in \cref{sec:53}, a $3/2$-approximation in \cref{sec:32}, approximation schemes in \cref{sec:schemes}, and inapproximability results in \cref{sec:inapprox}.
Note that the $5/3$- and $3/2$-approximation have better approximation ratios than the previously known $(2m/(m+1))$-approximation already for 6 and 4 machines, respectively.

The $5/3$-approximation is a simple and fast algorithm that is based on placing full classes of jobs taking special care of classes containing jobs with particularly big sizes and of classes with large processing time overall.
While the $3/2$-approximation reuses some of the ideas and observations of the first result, it is much more involved. 
To achieve the second result, we first design a $3/2$-approximation for the instances in which jobs cannot be too large relative to the optimal makespan and then design an algorithm that carefully places classes containing such large jobs and uses the first algorithm as a subroutine for the placement of the remaining classes.
Note that our approaches are very different to the one in \cite{DBLP:journals/dam/HebrardHJMPV16}, which successively chooses jobs based on their size and the size of the remaining jobs in their class and then inserts them with some procedure designed to avoid resource conflicts, and the one in \cite{doi:10.1080/01605682.2020.1772019}, which merges the classes into single jobs to avoid resource conflicts.

We provide an EPTAS for the variant of \problem where the number of machines is constant and an EPTAS with resource augmentation for the general case.
In particular, we need $\floor{(1+\eps)m}$ many machines in the latter result.
Both results make use of the basic framework introduced in \cite{JKMR21} which in turn utilizes relatively recent algorithmic results for integer programs (IPs) of a particular form -- so-called N-fold IPs.
Compared to the mentioned work by Dósa et al. \cite{DBLP:journals/tcs/DosaKT19} -- which provides an EPTAS for the case with a constant number of machines as well -- our result is arguably simpler and faster (going from at least triply exponential in $m/\eps$ to doubly exponential).
We also provide the result with resource augmentation for the general case, which may be refined in the future to work without resource augmentation as well.
Moreover, it seems plausible that the use of N-fold IPs in the context of scheduling with additional resources may lead to further results in the future, which do not have to be limited to approximation schemes.

Finally, we provide inapproximability results for variants of \problem where each job may need more than one resource.
In particular, we show that there is no better than $5/4$-approximation for the variant of \problem with multiple resources per job, unless $P = NP$, even if no job needs more than three resources and all jobs have processing time 1, 2 or 3. 
Previously, the APX-hardness result due to Even et al. \cite{DBLP:journals/scheduling/EvenHKR09} for scheduling with conflicts was known, which did focus on a different context and in particular does not provide bounds regarding the number of resources a job may require.

\subparagraph{Further Related Work.}

As mentioned above, there exists extensive research regarding scheduling with additional resources and we refer to the surveys \cite{DBLP:journals/dam/BlazewiczLK83,DBLP:journals/eor/EdisOO13,Blazewicz2019,DBLP:reference/crc/BlazewiczBF04} for an overview.
For instance, the variant with only one additional shared renewable resource where each job needs some fraction of the resource capacity has received a lot of attention (see \cite{DBLP:conf/spaa/KlingMRS17,DBLP:conf/esa/JansenR21,DBLP:journals/algorithmica/NiemeierW15,DBLP:journals/talg/JansenMR19} for some relatively recent examples).
Interestingly, Hebrard \cite{DBLP:journals/dam/HebrardHJMPV16} pointed out that this basic setting is more closely related \problem than it first appears: 
Consider the case that we have dedicated machines, i.e., each job is already assigned to a machine and we only have to choose the starting times, each job needs one unit of the singly additional shared resource, and the shared resource has some integer capacity.
This problem is equivalent to \problem if the multiple resources taken on the roles of the machines and the machines take the role of the single resource.
Hence, results for variants of this setting translate to \problem as well.
For instance, \problem can be solved in polynomial time if at most two classes include more than one job \cite{DBLP:journals/dam/KellererS03} and \cite{DBLP:journals/disopt/GrigorievU09} yields a $(3+\eps)$-approximation.

Scheduling with conflicts has also been studied from the orthogonal perspective, where jobs that are in conflict may not be processed on the same \emph{machines}.
This problem was already studied in the 1990's (see e.g. \cite{DBLP:journals/dam/BodlaenderJW94,DBLP:conf/mfcs/BodlaenderJ93}), and there has been a series of recent results \cite{DBLP:conf/esa/DasW17,DBLP:conf/spaa/GrageJK19,DBLP:journals/tcs/PageS20} regarding the setting corresponding to \problem where the conflict graph is a collection of disjoint cliques.

\subparagraph{Preliminaries.}

We introduce some additional notation, and a first observation that will be used throughout the following sections.

For any set of jobs $X$ let $p(X) = \sum_{j\in X}p_j$ denote its total processing time. 
Also let $p(j) = p_j$ for all jobs $j\in\jobs$.
While creating or discussing a schedule, for any machine $m$ denote by $p(m)$ the (current) total load of jobs on that machine $m$.
Subsequently, for a set of machines $M$, $p(M) = \sum_{m\in M} p(m)$.

For any combination of a set $X \in \set{\jobs,\classes}$, a relation $* \in \set{<,\leq,\geq,>}$, and a number $\lambda$, we define $X_{*\lambda} = \set{x \in X| p(x) * \lambda}$. Furthermore, given an interval $v$ let $X_v = \set{x \in X| p(x) \in v}$. For example it holds that $\jobs_{>1/2} = \set{j \in \jobs| p(j) > 1/2}$ and $\classes_{(1/2,3/4]} = \set{c \in \classes|p(c) \in (1/2,3/4]}$.

\begin{note}\label{lower-bounds}
	It holds that $\mathrm{OPT} \geq \max\set{\frac{p(\jobs)}{\machines} ,\max_{c \in \classes}p(c)}$.
\end{note}
Hence, we assume that $\machines < \abs{\classes}$ as otherwise there is a trivial schedule with one machine per class. 
Furthermore, let us assume that we sort the jobs in decreasing order of processing time. 
Consider the jobs $j_m$ and $j_{m+1}$ at position $m$ and $m+1$.
Note that it has to hold that $\mathrm{OPT} \geq p(j_m) + p(j_{m+1})$, since either $j_{m+1}$ has to be scheduled on the same machine as one of the first $m$ jobs, or two of the first $m$ jobs have to be scheduled at the same machine.

\section{A 5/3-approximation}\label{sec:53}

In this section we introduce a first simple algorithm that gives some intuition on the problem that will be used more cleverly in the next section.
We start by lower bounding the makespan $T$ of an optimal schedule and construct a schedule with makespan at most $\frac53 T$. 
The algorithm works by placing full classes of jobs in a specific order.
More precisely, first classes that contain a job  of size at least $\frac12 T$, then classes with total processing time larger than $\frac23 T$, and lastly all residual classes get placed.

\begin{theorem}\label{thm:5-3}
	There exists an algorithm that, for any instance $I$ of \problem, finds a schedule with makespan bounded by $\frac53 T$ in $\Oh(\abs{I})$ steps, where for the jobs $j_m$ and $j_{m+1}$ with $m$-th and $(m+1)$-st largest processing time we define
	$T := \max\set{\frac{1}{\machines}p(\jobs),\max_{c \in \classes}p(c), p(j_m) +p(j_{m+1})}$.
\end{theorem}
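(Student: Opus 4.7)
My plan is to verify that the three-step algorithm described before the statement produces a schedule with makespan at most $\frac{5}{3}T$. Partition the classes into three groups: $\classes_H = \sett{c \in \classes}{\exists j \in c,\ p_j \geq T/2}$ (the \emph{huge} classes), $\classes_L = \sett{c \in \classes \setminus \classes_H}{p(c) > 2T/3}$ (the \emph{large} classes), and $\classes_S = \classes \setminus (\classes_H \cup \classes_L)$ (the \emph{small} classes, which all satisfy $p(c) \leq 2T/3$). The three stages of the algorithm successively place the classes of $\classes_H$, then $\classes_L$, then $\classes_S$, in each case as a full block on a suitably chosen machine.

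The structural ingredients are the consequences of the definition of $T$: $p(c) \leq T$ for every class, $p(\jobs) \leq m T$, and $|\jobs_{>T/2}| \leq m$ (from $T \geq p(j_m) + p(j_{m+1})$). The last fact also implies that if a class contains two jobs of size $\geq T/2$, both of those jobs have size exactly $T/2$ and together constitute the whole class (so $p(c) = T$). I would use these observations to argue that Stage~1 can distribute the classes of $\classes_H$ (each of load in $[T/2, T]$) in an LPT-style fashion so that every machine ends this stage with load at most $T$; the only non-trivial case is when two huge classes share a machine, and the structural facts above force both to be ``$T/2$-only'' pairs that together sum to exactly $T$.

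In Stage~2, each large class $c \in \classes_L$ (of load in $(2T/3, T]$) is placed as a full block on a fresh machine whenever one exists; otherwise it joins the currently least-loaded machine. The critical claim is that whenever Stage~2 is forced to add $c$ to an already occupied machine, that machine's current load is at most $2T/3$, giving a combined load of at most $T + 2T/3 = \frac{5}{3}T$. This is verified by a counting argument leveraging $p(\jobs) \leq mT$ together with the fact that every previously placed class has $p(c) \geq T/2$. Stage~3 then list-schedules the small classes (each of load $\leq 2T/3$) on the least-loaded machine, and the standard LPT-style analysis closes the proof: if the final makespan is realized on a machine $M^{\ast}$ whose last assignment came from Stage~3, then right before that assignment $M^{\ast}$ was the least-loaded machine, so its load was at most $p(\jobs)/m \leq T$, and adding a class of load $\leq 2T/3$ keeps it within $\frac{5}{3}T$. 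Otherwise the load of $M^{\ast}$ is bounded directly by the Stage~1 or Stage~2 analysis.

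The main obstacle will be Stage~2: the counting argument must carefully distinguish between machines carrying a single ``big'' huge class (which cannot absorb a $\classes_L$-class without overflowing) and machines carrying ``$T/2$-only'' huge classes (which can). The bound $|\jobs_{>T/2}| \leq m$ is essential here, as it controls how many ``bad'' host machines can exist. A secondary, technical obstacle is the $\Oh(|I|)$ runtime requirement, which forces every sort to be replaced by bucketing jobs or classes into the constant number of size categories introduced above.
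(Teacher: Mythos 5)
Your three-stage outline mirrors the paper's structure (classes with a big job first, then heavy classes, then the rest), but the proposal has a genuine gap at its core: in Stage~2 you insist on placing every class of $\classes_L$ \emph{as a full block on a single machine}, and no counting argument over ``good'' and ``bad'' host machines can make that work. Concrete counterexample (sizes already normalized by $T$): let $m=4$, take four classes each consisting of one job of size $0.55$ and one of size $0.2$ (class load $0.75$), and one class of four jobs of size $0.25$ (class load $1$). Then $p(\jobs)/m=1$, every class has load at most $1$, and $p(j_4)+p(j_5)=0.55+0.25\leq 1$, so $T=1$. Your Stage~1 puts one huge class on each of the four machines (load $0.75$ each); Stage~2 must then add the load-$1$ class as a block to some occupied machine, giving load $1.75>\frac53$. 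Here \emph{every} host is ``bad'' and no fresh machine exists, so the distinction you hope to exploit cannot save the argument — the algorithm itself, not just its analysis, fails. The paper's Step~2 handles exactly this situation by \emph{splitting} such a class (possible because it contains no job larger than $\frac12$) into parts $c_1,c_2$ with $\frac13\leq p(c_1)\leq\frac23$ and $p(c_2)\leq\frac23$ (\cref{5over3-splitting-argument}), scheduling $c_1$ at the top of the current machine so that it ends at $\frac53$ and $c_2$ at the bottom of the next machine starting at $0$; since $p(c_1)+p(c_2)\leq 1$ the two parts cannot overlap in time, and the machine closed this way still has load above $1$, which is what keeps the volume argument (``closed machines carry average load at least $1$, hence an open machine always remains'') intact. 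This cross-machine splitting, together with the no-conflict argument for the two parts, is the heart of the proof and is absent from your proposal.

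A secondary error sits in Stage~1: with your threshold $p_j\geq T/2$ the bound $\abs{\jobs_{>T/2}}\leq m$ does not control $\abs{\classes_H}$, and your claim that two huge classes forced onto one machine must be ``$T/2$-only'' pairs summing to exactly $T$ is false. For instance, with $m=10$ take fourteen classes, each containing one job of size exactly $0.5$ and two jobs of size $0.1$ (class load $0.7$); the total load is $9.8\leq mT$ and $p(j_{10})+p(j_{11})=1=T$, yet all fourteen classes are huge in your sense, two of them must share a machine, and that machine then has load $1.4T>T$, contradicting your Stage~1 invariant. The paper avoids this entirely by defining $\classes_{B^+}$ via the strict inequality $p_j>\frac12$, which yields $\abs{\classes_{B^+}}\leq m$ and lets every such class have its own machine; classes like the ones above are instead treated as heavy classes and, if necessary, split. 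Your Stage~3 analysis (least-loaded machine has load at most $p(\jobs)/m\leq T$, plus a class of load at most $\frac23 T$) is fine, but the theorem does not follow without repairing Stages~1 and~2 as above.
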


As noted earlier, $T$ denotes a lower bound on the makespan.
We scale each job by $1/T$. 
As a consequence all jobs have a processing time in $(0,1]$ and the total load is bounded by $m$. 
Denote by $\classes_{B^+} := \set{c \in \classes | \abs{c \cap \jobs_{>1/2}} = 1}$ all classes containing a job of size greater than $1/2$.
We aim to find a schedule with makespan in $[1,5/3]$.
The following two observations are directly implied by the definition of $T$.

\begin{observation}
\label{obs:halfitems}
	For each class $c \in \classes$ it holds that $\abs{c \cap \jobs_{>1/2}} \leq 1$.
\end{observation}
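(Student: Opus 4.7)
The plan is to derive the observation directly from the scaling and the lower bound on $T$. Recall that $T \geq \max_{c \in \classes} p(c)$ by the definition of $T$, and after the scaling step every job size is divided by $T$, so in particular the total (scaled) processing time of any class $c$ satisfies $p(c) \leq 1$.

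First I would fix an arbitrary class $c \in \classes$ and argue by contradiction: suppose $\abs{c \cap \jobs_{>1/2}} \geq 2$, so there exist two distinct jobs $j, j' \in c$ with $p(j) > 1/2$ and $p(j') > 1/2$ (in the scaled instance). Then
\[
p(c) \;\geq\; p(j) + p(j') \;>\; 1/2 + 1/2 \;=\; 1,
\]
contradicting $p(c) \leq 1$. Hence at most one job of $c$ can have (scaled) processing time strictly greater than $1/2$.

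There is essentially no obstacle here; the only point to be careful about is that the inequality $p(c) \leq 1$ relies on the scaling convention introduced just before the observation and on the fact that $T$ is chosen to dominate $\max_{c \in \classes} p(c)$, which is one of the terms appearing in the definition of $T$ in \cref{thm:5-3}. Once that is noted explicitly, the argument is a one-line pigeonhole-style calculation.
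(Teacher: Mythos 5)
Your argument is correct and matches the paper's reasoning: the paper notes the observation is directly implied by the definition of $T$, namely that $T \geq \max_{c \in \classes} p(c)$ forces $p(c) \leq 1$ after scaling, so two jobs of size greater than $1/2$ in one class would contradict this. Your contradiction write-up just makes that one-line argument explicit.
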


\begin{observation}
\label{obs:machineBound1}
	It holds that $\abs{\classes_{B^+}} = \abs{\jobs_{>1/2}} \leq m$.
\end{observation}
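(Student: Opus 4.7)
The plan is to establish the two parts of the observation separately: the equality $\abs{\classes_{B^+}} = \abs{\jobs_{>1/2}}$ and the upper bound $\abs{\jobs_{>1/2}} \leq \machines$.

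For the equality, I would invoke the structural property recorded in Observation \ref{obs:halfitems}. Since the classes in $\classes$ form a partition of $\jobs$, every job $j \in \jobs_{>1/2}$ lies in exactly one class $c$, and by Observation \ref{obs:halfitems} that class satisfies $\abs{c \cap \jobs_{>1/2}} = 1$, so $c \in \classes_{B^+}$. Conversely, by its very definition each $c \in \classes_{B^+}$ contains exactly one job from $\jobs_{>1/2}$. Mapping each class in $\classes_{B^+}$ to its unique large job therefore yields a bijection onto $\jobs_{>1/2}$, which gives the desired equality.

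For the inequality, I would exploit the third term in the definition of the lower bound $T$ from \cref{thm:5-3}. After the initial normalization that divides every processing time by $T$, the inequality $T \geq p(j_m) + p(j_{m+1})$ for the jobs of $m$-th and $(m+1)$-st largest processing time becomes $p(j_m) + p(j_{m+1}) \leq 1$ in the scaled instance. Assume for contradiction that $\abs{\jobs_{>1/2}} \geq \machines + 1$. Because the jobs are sorted in nonincreasing order of size, this forces $p(j_{\machines+1}) > 1/2$, and hence also $p(j_\machines) > 1/2$. But then $p(j_\machines) + p(j_{\machines+1}) > 1$, contradicting the scaled bound. Therefore $\abs{\jobs_{>1/2}} \leq \machines$.

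No substantial obstacle arises here; the argument is essentially a direct unpacking of definitions combined with the fact that $T$ dominates the sum $p(j_m) + p(j_{m+1})$. The only detail requiring a bit of care is correctly carrying the original inequality from \cref{thm:5-3} through the normalization step so that the threshold $1/2$ lines up with the scaled upper bound of $1$ on $p(j_m)+p(j_{m+1})$.
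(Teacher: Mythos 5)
Your proof is correct and follows exactly the reasoning the paper intends: the paper gives no explicit argument, stating only that the observation is "directly implied by the definition of $T$", and your unpacking—the bijection via \cref{obs:halfitems} plus the scaled bound $p(j_m)+p(j_{m+1})\leq 1$ from the third term of $T$—is precisely that implication made explicit.
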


Lastly, we address classes with a large total processing time.

\begin{lemma}\label{5over3-splitting-argument}
	Each class $c \in \classes_{>2/3} \setminus \classes_{B^+}$
	can be partitioned into parts $c_{1}$ and $c_{2} = c \setminus c_{1}$ such that $1/3 \leq p(c_{1}) \leq 2/3$ and $p(c_{2}) \leq 2/3$. 
	This partition can be found in time $\Oh(\abs{c})$.
\end{lemma}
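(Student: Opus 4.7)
After the scaling by $1/T$ introduced just before Observation~\ref{obs:halfitems}, the class $c$ satisfies $p(c)\in(2/3,1]$: the upper bound $p(c)\le 1$ comes from $T\ge\max_{c'\in\classes}p(c')$, and the lower bound from $c\in\classes_{>2/3}$. Moreover, since $c\notin\classes_{B^+}$, \cref{obs:halfitems} says that every job in $c$ has size at most $1/2$. My plan is to reduce the statement to finding a subset $c_1\subseteq c$ with $p(c_1)\in[1/3,2/3]$: the bound on $p(c_2)$ is then free, since $p(c_2)=p(c)-p(c_1)\le 1-1/3=2/3$.

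To construct $c_1$, I would distinguish two cases according to the largest job size in $c$. If there is a job $j\in c$ with $p(j)\ge 1/3$, then setting $c_1:=\{j\}$ already works, because $1/3\le p(j)\le 1/2\le 2/3$. Otherwise every job in $c$ has size strictly less than $1/3$; in this case I would iterate through the jobs of $c$ in any order, inserting them into $c_1$ one after the other, and stop as soon as $p(c_1)\ge 1/3$. Such a moment is guaranteed to occur because $p(c)>2/3>1/3$.

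The only nontrivial point to check is that the greedy step in the second case does not overshoot the upper threshold $2/3$. This is exactly where the case split pays off: in the second case, immediately before the final insertion we still had $p(c_1)<1/3$, and the inserted job contributes less than $1/3$, so afterwards $p(c_1)<2/3$ as required. A naive single-pass greedy without the case distinction could in principle add a job of size up to $1/2$ as its final element and land anywhere in $[1/3,5/6)$, which is why the separation of the "medium job" case is essential. Since each case traverses the job list of $c$ at most once, the overall running time is $\Oh(\abs{c})$, matching the claim of the lemma.
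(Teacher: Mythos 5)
Your proposal is correct and follows essentially the same route as the paper's proof: a case split on whether $c$ contains a job of size at least (the paper: strictly above) $1/3$, taking that single job as $c_1$ in the first case and greedily filling $c_1$ past the $1/3$ threshold in the second, with $p(c_2)\le 2/3$ following from $p(c)\le 1$ and $p(c_1)\ge 1/3$ in both cases. The only difference is the immaterial choice of boundary ($\ge 1/3$ versus $>1/3$) in the case distinction.
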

\begin{proof}
	If there exists a job $j_{\top}$ in $c$ with $p(j_{\top}) >1/3$, we define $c_{1} = \set{j_{\top}}$ and $c_2 = c \setminus c_1$. Note that $c$ does not contain a job with processing time larger than $1/2$ and hence, $p(c_1) \in (1/3,1/2] $ and $p(c_2) = p(c) - p(c_1) < 1-1/3 = 2/3$.

	Otherwise, greedily add jobs from $c$ to an empty set $c_{1}$ until $p(c_{1}) \geq 1/3$ and set $c_2 = c \setminus c_1$.
	Since all the jobs of $c$ have processing time at most $1/3$, it holds that $p(c_{1}) \in [1/3,2/3]$.
	Consequently, it holds that $p(c_{2}) \leq 2/3$ as well.
\end{proof}

\subparagraph*{Algorithm: \AFiveThird}
\begin{stepi}
Consider all classes containing a job with processing time larger than $1/2$, $\classes_{B^+}$. 
Each of these classes is assigned to an individual machine, and all jobs from such a class are scheduled consecutively, see \cref{fig:five-over-three-step1}. 
\end{stepi}
\begin{stepi}
Consider all remaining classes with total processing time larger than $2/3$, $\classes_{> 2/3}\setminus\classes_{B^+}$.
Try to add these classes on the machines filled with the classes $\classes_{B^+}$ and afterward proceeds to empty machines, see \cref{fig:five-over-three-step2}.
If the considered machine has load in $(1,5/3]$, \emph{close} the machine and no longer attempt to place any other job on it. 
Note that after placing the classes $\classes_{B^+}$ all machines remained open.
Let $m_i$ be the machine we try to place class $c \in \classes_{> 2/3}\setminus\classes_{B^+}$ on.
If $m_i$ has load $p(m_i) \leq 5/3 - p(c)$, place the entire class on this machine and close it.  
Otherwise, partition the class $c$ in two parts $c_{1}$ and $c_{2}$ such that $p(c_{2}) \leq p(c_{1}) \leq 2/3$ (cf. \cref{5over3-splitting-argument}).
Place the larger part $c_1$ on the current machine starting at $5/3 - p(c_1)$ and close it, moving to the next machine.
All jobs on this machine are delayed such that the first job starts at $p(c_2)$. 
All jobs from $c_2$ are scheduled between $0$ and $p(c_2)$ on this machine.
If it has load of at least $1$, this machine is closed as well.

\end{stepi}
\begin{stepi}[Greedy]
Finally, place the classes $\classes_{\leq 2/3}\setminus\classes_{B^+}$, see \cref{fig:five-over-three-step3}.
Consider the residual machines one after another and add each class $c \in \classes_{\leq 2/3}\setminus\classes_{B^+}$ entirely to the considered machine.
As soon as the load of a machine exceeds $1$ close it and move to the next.
\end{stepi}

\begin{figure}[ht]
	\centering
	\begin{subfigure}{0.32\textwidth}
		\centering
		\tikzset{external/export next=false}
		\tikz[xscale=0.6,yscale=3]{
\hlines{6}{
    {5/3}/$\frac53$/solid/,
    {4/3}/$\frac43$//,
    1/$1$/solid/,
    {2/3}/$\frac23$//,
    {1/3}/$\frac13$//,
    0/$0$/solid/
};

\schedule up [fill=lightgray] {
    {
        .75/$J_1$//,
        .17///
    }/1/,
    {
        .55/$J_2$//,
        .10///
    }/1/,
    {
        .75/$J_3$//,
        .12///
    }/1/,
    {
        .6/$J_4$//,
        .3///
    }/1/,
    {
        .57/$J_5$//,
        .12///
    }/1/
};
}
		\caption{Classes with large jobs}
		\label{fig:five-over-three-step1}
	\end{subfigure}
	\begin{subfigure}{0.32\textwidth}
		\centering
		\tikzset{external/export next=false}
		\tikz[xscale=0.6,yscale=3]{
\hlines{6}{
    {5/3}/$\frac53$/solid/,
    {4/3}/$\frac43$//,
    1/$1$/solid/,
    {2/3}/$\frac23$//,
    {1/3}/$\frac13$//,
    0/$0$/solid/
};

\schedule up [fill=lightgray] {
    {
        .75/$J_1$//,
        .17///,
        {.08+2/3-.40}//opacity=0/,
        .40//fill=morange/
    }/1/,
    {
        .30//fill=morange/,
        .55/$J_2$//,
        .10///,
        {.05+2/3-.50}//opacity=0/,
        .50//fill=mred/
    }/1/,
    {
        .18//fill=mred/,
        .75/$J_3$//,
        .12///
    }/1/,
    {
        .6/$J_4$//,
        .3///,
        {.1+2/3-.44}//opacity=0/,
        .44//fill=mblue/
    }/1/,
    {
        .26//fill=mblue,
        .57/$J_5$//,
        .12///
    }/1/
};
}
		\caption{Placing large classes}
		\label{fig:five-over-three-step2}
	\end{subfigure}
	\begin{subfigure}{0.32\textwidth}
		\centering
		\tikzset{external/export next=false}
		\tikz[xscale=0.6,yscale=3]{
\hlines{6}{
    {5/3}/$\frac53$/solid/,
    {4/3}/$\frac43$//,
    1/$1$/solid/,
    {2/3}/$\frac23$//,
    {1/3}/$\frac13$//,
    0/$0$/solid/
};

\schedule up [fill=lightgray] {
    {
        .75/$J_1$//,
        .17///,
        {.08+2/3-.40}//opacity=0/,
        .40//fill=morange/
    }/1/,
    {
        .30//fill=morange/,
        .55/$J_2$//,
        .10///,
        {.05+2/3-.50}//opacity=0/,
        .50//fill=mred/
    }/1/,
    {
        .18//fill=mred/,
        .75/$J_3$//,
        .12///
    }/1/,
    {
        .6/$J_4$//,
        .3///,
        {.1+2/3-.44}//opacity=0/,
        .44//fill=mblue/
    }/1/,
    {
        .26//fill=mblue,
        .57/$J_5$//,
        .12///,
        .34//fill=myellow/
    }/1/,
    {
        .23///,
        .5///,
        .04///,
        .17//
    }/1/{fill=myellow}
};
}
		\caption{Adding all other classes}
		\label{fig:five-over-three-step3}
	\end{subfigure}
	\caption{The three steps of the algorithm (where $\jobs_{>1/2} = \set{J_1,\dots,J_5}$)}
	\label{fig:five-over-three}
\end{figure}

\subparagraph*{Algorithm Correctness.}

\begin{lemma}
Given any instance $I = (m,\classes)$ of \problem,  \AFiveThird produces a feasible schedule with makespan at most $\frac{5}{3}\opt(I)$.
\end{lemma}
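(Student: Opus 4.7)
The plan is to verify that \AFiveThird produces a feasible schedule (no machine or resource conflict) whose makespan is at most $\frac{5}{3}T\leq\frac{5}{3}\opt(I)$, and simultaneously that the algorithm never asks for more than $m$ machines. I rescale all processing times by $T$, so every job has size in $(0,1]$, every class has total size at most $1$, and $p(\jobs)\leq m$.

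Step~1 is trivially feasible: by \cref{obs:halfitems} and \cref{obs:machineBound1} there are at most $m$ classes in $\classes_{B^+}$, each of which occupies its own machine with load at most~$1$. For Step~2 I would maintain the invariant that any machine the pointer currently considers is still open and therefore has load at most $1$ (the closing rule removes anything heavier). In the whole-class subcase the if-condition $p(m_i)+p(c)\leq 5/3$ bounds the load directly, and only one new class is added so no resource conflict arises. In the split subcase \cref{5over3-splitting-argument} gives parts with $p(c_1),p(c_2)\leq 2/3$ placed on the two \emph{distinct} machines $m_i$ and $m_{i+1}$, which preserves the class constraint; the start position $5/3-p(c_1)\geq 5/3-2/3=1\geq p(m_i)$ prevents overlap of $c_1$ with the existing jobs on $m_i$, and shifting the existing jobs on $m_{i+1}$ by $p(c_2)$ yields a total load of $p(m_{i+1})+p(c_2)\leq 1+2/3=5/3$ with no collisions. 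Step~3 adds full classes of size at most $2/3$ on top of load-at-most-$1$ machines, so the load ends at most $5/3$ and again no resource conflict arises since each class lives on a single machine.

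The case analysis above simultaneously proves the makespan bound: every used machine ends with load at most $\frac{5}{3}$ in scaled units, i.e., at most $\frac{5}{3}T\leq\frac{5}{3}\opt(I)$ in the original units.

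The main obstacle, and the subtle point of the proof, is showing the pointer never exceeds machine $m$. The key observation is that whenever a machine is closed its load strictly exceeds $1$: in the Step~2 split case $m_i$ reaches $5/3$ and $m_{i+1}$ is closed only once its load reaches at least $1$, while whole-class placements in Step~2 and every closure in Step~3 are triggered through the ``load exceeds $1$'' rule. Hence, if $k$ machines have been closed so far then $k<p(\jobs)\leq m$, so the pointer sits at position at most $k+1\leq m$. The only remaining pitfall is splitting a class while the pointer is already on machine $m$, since then $m_{i+1}$ would not exist; however, such a split requires $p(m_i)+p(c)>5/3$ with $p(m_i)\leq 1$ and $p(c)\leq 1$, forcing $p(m_i),p(c)>2/3$, which combined with the $m-1$ already-closed machines of load $>1$ would push the total deposited load strictly past $m$, contradicting $p(\jobs)\leq m$. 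Therefore the algorithm terminates with every class placed, completing the proof.
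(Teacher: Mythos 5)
Your overall plan coincides with the paper's: verify feasibility and the $5/3$ completion bound step by step, and argue that closed machines carry load at least $1$, so that $p(\jobs)\leq m$ guarantees the algorithm never runs past machine $m$ (your explicit treatment of a split being requested while already on the last machine is a nice addition the paper leaves implicit). However, two of your justifications do not hold as written. First, you support the key counting claim ``whenever a machine is closed its load strictly exceeds $1$'' in the split case by saying that $m_i$ ``reaches $5/3$''. That is a statement about the completion time, not the load: after placing $c_1$ so that it ends at $5/3$ there is idle time between the jobs already on $m_i$ and $c_1$, so ending at $5/3$ gives no lower bound on the load, and the load is what the counting argument needs. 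The paper closes this hole by noting $p(c_2)\leq 1/2$ (from $p(c_2)\leq p(c_1)$ and $p(c_1)+p(c_2)\leq 1$), whence $p(m_i)+p(c_1)=p(m_i)+p(c)-p(c_2)>5/3-1/2=7/6>1$; alternatively, your own observation that a split forces $p(m_i)>2/3$ and $p(c)>2/3$, together with $p(c_1)\geq p(c)/2>1/3$, would suffice, but you only deploy it for the last-machine edge case, not for the load bound.

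Second, your feasibility argument for the split class is backwards: you write that $c_1$ and $c_2$ being ``placed on the two distinct machines $m_i$ and $m_{i+1}$ preserves the class constraint''. Distinct machines are precisely the situation in which the shared resource could be violated; what has to be checked is that the two parts do not overlap in time. This does hold, and is exactly the check the paper performs: $c_2$ occupies $[0,p(c_2)]$ with $p(c_2)\leq 2/3$, while $c_1$ occupies $[5/3-p(c_1),5/3]$ and starts at $5/3-p(c_1)\geq 1>p(c_2)$. With these two repairs your argument is essentially the paper's proof; the remaining pieces (Step~1 via \cref{obs:halfitems,obs:machineBound1}, the whole-class case, Step~3, and the $5/3$ completion bound) match what the paper does.
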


\begin{proof}
To prove the correctness and approximation ratio of the algorithm, we have to prove the following points:

\begin{itemize}
	\item All jobs can be scheduled
	\item The processing times of two jobs from the same class never overlap.
	\item The latest completion time of a job is given by $5/3$
\end{itemize}

We start by proving that all jobs are scheduled, by showing that the algorithm closes only machines that have a total load of at least $1$. 
Since the total load of the jobs is bounded by $m$, when attempting to schedule the last class, there has to exist a non closed machine.
The only time the algorithm potentially closes a machine with load less than $1$ is in step 2 when a class $\classes_{> 2/3}\setminus\classes_{B^+}$ is split into two parts. 
Let $c_{B^+}$ be the class already on the machine and $c_1$ and $c_2$ be the parts of the class the algorithm tries to schedule in this step, such that $p(c_1) \geq p(c_2)$.
Since the class was split in two by the algorithm it holds that $p(c_{B^+}) + p(c_{1}) + p(c_{2}) > 5/3$. 
Furthermore, since $p(c_{1}) + p(c_{2}) \leq 1$ and $p(c_{1}) \geq p(c_{2})$ it holds that $p(c_{2}) \leq 1/2$ and hence $p(c_{B^+}) + p(c_{1}) > 7/6$.
Hence that closed machine has a load of at least $1$.

Next, we prove that the processing of two jobs from the same class never overlaps in time.
Again, the only time one class is scheduled on more than one machine is step 2. 
When placing the two parts these parts do not overlap, since they have a processing time of at most $1$ and one of the parts starts at $0$ while the other ends at $5/3$.
The algorithm does not generate any overlapping by shifting jobs already on the machine, since those have to originate from classes in $\classes_{B^+}$, which each got placed on an individual machine.

Finally, we prove that the latest completion time of a job is given by $5/3$.
After step 1 all the machines have a load of at most $1$, since each class has a total processing time of at most $1$.
In step 2, we only add an entire class if the total load is bounded by $5/3$. 
If a class is split, the part that is added has a total processing time of at most $2/3$. 
Since before adding this part the machine had a load of at most $1$, the load of the closed machine is bounded by $5/3$.
This concludes the proof of \Cref{thm:5-3}.
\end{proof}

The existence and correctness of \AFiveThird proofs \cref{thm:5-3}.

\section{A 3/2-approximation}\label{sec:32}

In this section we introduce the more involved algorithm hinted at earlier.
While the general idea is similar, finding a lower bound $T$ for the makespan and then placing classes depending on included big jobs and total processing time, the steps are a lot more granular.
We first give a $3/2$-approximation algorithm for instances without jobs of size bigger than $3/4T$.
After that we introduce a second $3/2$-approximation algorithm that places classes with jobs of size bigger than $3/4T$ on distinct machines and fills them with other jobs in a clever way such that we can reuse the first algorithm for the remaining classes.

\begin{theorem}
\label{thm:ThreeHalfAlgorithm}
	There exists an algorithm that for any given instance $I$ of \problem finds a schedule with makespan bounded by $\frac32 \opt$ in $\Oh(n + m \log(m))$ steps.
\end{theorem}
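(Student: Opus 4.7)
The plan is to prove \cref{thm:ThreeHalfAlgorithm} in two steps, following the outline given just before the statement. Let $T = \max\set{p(\jobs)/\machines,\max_{c\in\classes}p(c),p(j_\machines)+p(j_{\machines+1})}$ be the lower bound on $\opt$ from \cref{lower-bounds} and scale all sizes by $1/T$, so that every job has size in $(0,1]$, every class satisfies $p(c)\leq 1$, the total load is at most $\machines$, and the target makespan becomes $\frac{3}{2}$. Call a job \emph{huge} if its size exceeds $\frac{3}{4}$, and a class huge if it contains a huge job.

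I would first design a subroutine \ANoHuge that $3/2$-approximates instances with no huge job, in the spirit of \AFiveThird but with tighter thresholds adapted to the $\frac{3}{2}$ target. The key observation is that in the absence of huge jobs every class fits on a single machine, so the question reduces to packing whole classes (or, in certain cases, carefully split classes) into machines of capacity $\frac{3}{2}$. Classes would be processed in decreasing order of total processing time: those with $p(c) \in (\frac{3}{4},1]$ would be placed one per machine first, those with $p(c)\in (\frac{1}{2}, \frac{3}{4}]$ paired up where possible, and the remaining small classes packed greedily, with any class that would otherwise overflow a machine split between two adjacent machines via a variant of \cref{5over3-splitting-argument}. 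A case analysis on the load configurations arising at machine closure would then show that each closed machine carries load in $[1,\frac{3}{2}]$, so $\machines$ machines suffice and the makespan stays at most $\frac{3}{2}$.

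For the main algorithm \AThreeHalf, let $\mathcal{H}$ be the set of huge classes; by \cref{obs:halfitems} each class of $\mathcal{H}$ contains exactly one huge job and $|\mathcal{H}| = |\jobs_{>3/4}| \leq |\jobs_{>1/2}| \leq \machines$. I would dedicate one machine to each class of $\mathcal{H}$, placing the entire class as a consecutive block on its machine, and then fill the free portion of length more than $\frac{3}{4}$ at the free end with carefully chosen jobs drawn from non-huge classes, enforcing the rule that at most one job per non-huge class is placed on any single dedicated machine. The jobs left behind form a residual instance which still has no huge jobs, fits on $\machines - |\mathcal{H}|$ machines within the load bound, and is solved by invoking \ANoHuge. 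Concatenating the two partial schedules yields a total makespan of $\frac{3}{2} T$.

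The main obstacle will be the analysis of the filling step: the fillers must be chosen so that (i) each dedicated machine ends with total load at most $\frac{3}{2}$, (ii) the residual instance fits within the $\frac{3}{2}$ guarantee of \ANoHuge on the remaining $\machines - |\mathcal{H}|$ machines, and (iii) when \ANoHuge schedules the residual instance, no job of a class from which a job was moved to a dedicated machine overlaps in time with its moved counterpart. A natural way to enforce (iii) is to place moved jobs flush with one end of the dedicated machine and to align \ANoHuge's output so that any ``donor'' job sits in a disjoint time interval; the delicate part is bounding the load moved per class so that both the dedicated machine's remaining capacity and \ANoHuge's load bound are respected simultaneously. The claimed running time $\Oh(n + \machines\log \machines)$ follows because the algorithm performs a single initial sort of jobs and classes by size and then only linear passes through jobs and machines.
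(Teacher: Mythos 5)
Your outline reproduces the high-level shape of the paper's argument (a no-huge-jobs subroutine plus a wrapper that dedicates machines to huge classes), but it has a genuine gap at the point you yourself flag as "the delicate part," and it is missing the ingredient the paper uses to close exactly that gap. The weak lower bound $T=\max\{p(\jobs)/m,\max_c p(c),p(j_m)+p(j_{m+1})\}$ is not enough to make the machine accounting work. The paper first proves a structural lower bound (\cref{lem:available machines}): in any schedule of makespan $T$, counting the load forced into the time corridor $[T/4,3T/4]$ yields $\abs{\classes_H}+\max\{\abs{\classes_B},\lceil(\abs{\classes_B}+\abs{\classes_{\geq3/4}\setminus(\classes_H\cup\classes_B)})/2\rceil\}\leq m$, and \cref{lem:scaleValueT} shows how to binary-search over $\Oh(m)$ class thresholds to find the smallest $T\leq\opt$ for which the scaled instance satisfies this inequality. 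This inequality is what the wrapper algorithm maintains as an invariant ($\abs{\bar M_u}\geq\max\{\abs{\resClasses_B},\lceil(\abs{\resClasses_B}+\abs{\resClasses_{\geq3/4}\setminus\classes_B})/2\rceil\}$), and it is the only reason fresh machines are guaranteed to exist in the steps that open them and the reason the residual instance handed to the subroutine satisfies the subroutine's own lower-bound hypothesis (\cref{lem:ANoHuge}). Your proposal asserts without proof that the residual instance "fits on $m-|\mathcal H|$ machines within the load bound"; a load argument alone does not give this, because classes containing big jobs and classes of total size $\geq 3/4$ each tie up machine capacity in a way that only the corridor-counting bound controls.

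Second, your conflict-avoidance condition (iii) — that fillers moved onto dedicated huge machines never overlap in time with the rest of their class as scheduled by the subroutine — is precisely what the paper's long case analysis is designed to make trivial: classes are moved as whole classes or as coordinated split pairs $(\hat c,\check c)$ placed flush at time $0$ and $3/2$ on machines closed in the same step, so that with a single exception no class is ever shared between a closed machine and the residual instance; the one exception (a single remaining open huge machine) is handled by the "rotation" argument, which only works because exactly one class, with a part of size in $(1/4,1/2]$, is split there. Your plan of filling each dedicated machine with up to one arbitrary job per non-huge class and then "aligning" the subroutine's output would have to resolve this simultaneously for many classes and many machines, and no argument is given that this is possible; as stated it is not a proof. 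Finally, a minor point: sorting all jobs by size costs $\Oh(n\log n)$, which exceeds the claimed $\Oh(n+m\log m)$; the paper avoids this by using linear-time selection to find the $(m{+}1)$-st largest job and sorting only $\Oh(m)$ threshold values in the binary search for $T$.
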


In the following let us assume that we have scaled the instance such that $\opt = 1$.
In order to provide a $3/2$-approximation algorithm, we consider four different types of jobs. 
We split the jobs of a given instance into \emph{huge} jobs $\jobs_H = \jobs_{>3/4} = \set{j \in \jobs\mid p_j> 3/4}$, \emph{big} jobs $\jobs_B = \jobs_{(1/2,3/4]} = \set{j \in \jobs\mid p_j \in (1/2,3/4]}$, \emph{medium} jobs $\jobs_M = \jobs_{(1/4,1/2]} = \set{j \in \jobs\mid p_j \in  (1/4,1/2]}$, and all residual jobs (with a processing time of at most $1/4$) which we refer to as \emph{small} jobs.

Furthermore, turning to the classes $\classes$ we define the subset $\classes_H = \set{c \in \classes : \abs{\jobs_H \cap c} = 1}$ of all classes containing a huge job, the subset $\classes_B = \set{c \in \classes : \abs{\jobs_B \cap c} = 1}$ of all classes containing a big job, the subset $\classes_{\geq 3/4} = \set{c \in \classes | p(c)\geq 3/4}$ of all classes with a total processing time of at least $3/4$, and the subset $\classes_{(1/2,3/4)} = \set{c \in \classes | p(c)\in (1/2,3/4)}$ of all classes with a total processing time in $(1/2,3/4)$.

\begin{lemma}
	\label{lem:available machines}
	For any normalized optimal schedule and the corresponding partition of $\classes$ into $\classes_H, \mathcal{C}_B, \classes_{\geq3/4} \setminus (\classes_H \cup \classes_B)$ and $\classes \setminus \classes_{\geq3/4}$ it holds that
	\[\abs{\classes_H} + \max\Set{\abs{\mathcal{C}_B} , \ceil*{\frac12\left(\abs{\classes_B} + \abs{\classes_{\geq3/4} \setminus (\classes_H \cup \classes_B)}\right)}} \leq m.\]
\end{lemma}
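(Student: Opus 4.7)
The plan is to reduce the lemma to two separate inequalities, (i) $|\classes_H| + |\classes_B| \le m$ and (ii) $|\classes_H| + \lceil(|\classes_B| + |\classes_{\geq 3/4} \setminus (\classes_H \cup \classes_B)|)/2\rceil \le m$; taking the maximum of the inner expressions then yields the claim. Throughout I work with the given normalized optimal schedule of makespan $1$ and write $h = |\classes_H|$, $b = |\classes_B|$, $a = |\classes_{\geq 3/4} \setminus (\classes_H \cup \classes_B)|$.

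For (i), I would inspect the single time point $t = 1/2$: every huge job has size $> 3/4$ and every big job has size $> 1/2$, so each such job (being of length $> 1/2$ and fitting in $[0,1]$) must start strictly before $1/2$ and finish strictly after $1/2$; in particular, it is processed at time $1/2$. Since at any instant at most $m$ jobs run in parallel and these jobs belong to pairwise distinct classes, $h + b \le m$.

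For (ii), set $X = \classes_B \cup (\classes_{\geq 3/4} \setminus (\classes_H \cup \classes_B))$, so $|X| = a + b$ and every $c \in X$ satisfies $p(c) > 1/2$. Let $M_H$ be the set of $h$ machines hosting huge jobs. Because any huge job has size $>3/4$, it covers the whole interval $[1/4, 3/4]$ on its machine, so the free time on each $\mu \in M_H$ has measure strictly less than $1/4$ and lies inside $[0,1/4)\cup(3/4,1]$. I would rely on two consequences: (a) any job of size $> 1/4$ must be placed on a non-$M_H$ machine, so the big job of every $c\in\classes_B$ and every medium job of every $c \in \classes_{\geq 3/4}\setminus(\classes_H\cup\classes_B)$ lives on non-$M_H$; and (b) for $c \in \classes_{\geq 3/4}\setminus(\classes_H\cup\classes_B)$, whose jobs all have size $\le 1/2$, the total time of $c$ on $M_H$ is bounded by the measure of the union of $M_H$ free times, which is at most $|[0,1/4)\cup(3/4,1]| = 1/2$. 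Hence $c$'s non-$M_H$ time is $\ge p(c) - 1/2 \ge 1/4$, while for $c \in \classes_B$ its non-$M_H$ time exceeds $1/2$.

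I would combine these per-class lower bounds with the non-$M_H$ capacity bound $\sum_{c \in X}(\text{time on non-}M_H) \le m - h$ and with the total-time bound $\sum_c p(c) \le m$, which after substituting $p(c) > 3/4$ on $\classes_H$, $p(c) > 1/2$ on $\classes_B$, and $p(c) \ge 3/4$ on $\classes_{\geq 3/4}\setminus(\classes_H\cup\classes_B)$ gives $3h + 2b + 3a < 4m$. A short case analysis on the size of $h$ relative to $m$, using (i) as an additional constraint in the small-$h$ regime, then closes (ii). The main obstacle is the large-$h$ regime, where the total-time inequality alone is too loose to give $2h + a + b \le 2m$; the structural observation above—that every class in $\classes_{\geq 3/4}\setminus(\classes_H\cup\classes_B)$ must place at least its $[1/4, 3/4]$-portion (of measure $\ge 1/4$) on non-$M_H$ machines—supplies the additional constraint needed to dispatch this case.
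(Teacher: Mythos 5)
Your reduction to the two inequalities $|\classes_H|+|\classes_B|\leq m$ and $|\classes_H|+\ceil*{\tfrac12(|\classes_B|+|\classes_{\geq3/4}\setminus(\classes_H\cup\classes_B)|)}\leq m$ is exactly the structure of the paper's statement, and your intermediate facts are correct: jobs of size $>1/4$ cannot sit on a machine of $M_H$, and each class $c\in\classes_{\geq3/4}\setminus(\classes_H\cup\classes_B)$ has at most $1/2$ of its load on $M_H$ (since its jobs never run in parallel and the idle time of $M_H$ lies in $[0,1/4)\cup(3/4,1]$), hence at least $1/4$ on the other machines. The gap is in the final combination step for (ii). Writing $h=|\classes_H|$, $b=|\classes_B|$, $a=|\classes_{\geq3/4}\setminus(\classes_H\cup\classes_B)|$, the constraints you propose to combine are $h+b\leq m$, the non-$M_H$ capacity bound $h+\tfrac12 b+\tfrac14 a\leq m$, and the total-load bound $\tfrac34h+\tfrac12 b+\tfrac34 a< m$; but these are jointly consistent with a violation of the target $2h+a+b\leq 2m$. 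For instance $m=7$, $h=6$, $b=0$, $a=3$ satisfies all three (e.g.\ huge classes of load $0.76$ and the other classes of load $0.75$ give total load $6.81<7$ and non-$M_H$ demand $0.75\leq 1$), yet $2h+a+b=15>14$. So no case analysis on $h$ can close the argument from these inequalities alone; the approach as stated bounds the non-$M_H$ machines by their \emph{full} capacity $m-h$, which only yields $a\lesssim 4(m-h)$ instead of the needed $a+b\leq 2(m-h)$.

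The missing ingredient is precisely the capacity restriction to the corridor $[1/4,3/4]$, which is the heart of the paper's proof: every class in $\classes_B\cup(\classes_{\geq3/4}\setminus(\classes_H\cup\classes_B))$ must process load at least $1/4$ \emph{inside} $[1/4,3/4]$ (for $\classes_B$ because the big job straddles $1/2$; for the others by your "at most $1/2$ outside the corridor" argument), while each machine offers only $1/2$ unit of time there and the machines of $M_H$ offer none, being fully covered by their huge jobs. This gives $\tfrac14(a+b)\leq\tfrac12(m-h)$, i.e.\ $a+b\leq 2(m-h)$, with no case analysis and without the total-load bound. Your closing remark about the "$[1/4,3/4]$-portion" gestures at the demand side of this argument, but you never state the supply side (corridor capacity $\tfrac12$ per non-$M_H$ machine); combined with the full capacity $m-h$ it reproduces only the insufficient constraint above. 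The paper phrases the same count globally: total corridor load is at least $\tfrac12 h+\tfrac14(a+b)$ and at most $\tfrac12 m$, then rounds.
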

\begin{proof}
	Clearly, it holds that $\abs{\classes_H}+\abs{\classes_B} \leq m$. 
	
	Let us consider the total load processed in the time corridor between $1/4$ and $3/4$ (over the entire schedule).
	For each class $c\in\classes_H$ we have to schedule at least load $1/2$ in this corridor, since the tallest job in $c$, which has a processing time of at least $3/4$, has to start before $1/4$ and has to end after $3/4$.
	For each class $c \in \classes_B$, at least load $1/4$ is scheduled in this corridor since its big job, which has a processing time in $(1/2,3/4)$, has to end after $1/2$ and has to start before $1/2$.
	Finally, each class in $\classes_{\geq3/4} \setminus (\classes_H \cup \classes_B)$ has load of at least $3/4$. Since at most $1/2$ of this load can be scheduled outside of the corridor, there has to be load of at least $1/4$ scheduled inside of this corridor.
	Hence the total load scheduled in this corridor is at least $\frac12\abs{\classes_H} + \frac{1}{4}(\abs{\classes_B} + \abs{\classes_{\geq3/4} \setminus (\classes_H \cup \classes_B)})$.
	
	Since each machine covers at most $1/2$ of this load, it holds that
	\[
	    m
	    \geq \ceil*{\frac{\frac12\abs{\classes_H} + \frac{1}{4}(\abs{\classes_B} + \abs{\classes_{\geq3/4} \setminus (\classes_H \cup \classes_B)}}{\frac12}}\\
	    = \abs{\classes_H} + \ceil*{\frac12\left(\abs{\classes_B} + \abs{\classes_{\geq3/4} \setminus (\classes_H \cup \classes_B)}\right)}
	\]
	and that proves the claim.
\end{proof}

Next, we prove that in $\Oh(n + m\log(m))$ steps it is possible to find the smallest value $T$ with $\max\set{\frac{1}{\machines} p(\jobs),\max_{c \in \classes}p(c)} \leq T \leq \opt$ such that the instance scaled by $1/T$ fulfills the properties from \cref{obs:halfitems,obs:machineBound1,lem:available machines}. 
The algorithms presented in this section will find a schedule with makespan at most $3/2$ for this scaled instance, i.e. the schedule for the original instance will have a makespan of at most $(3/2) T$.

\begin{lemma}\label{lem:scaleValueT}
In $\Oh(n + m\log(m))$ for any given instance $I$, it is possible to find a lower bound $T \leq \opt$ such that for the instance normalized by $1/T$ and the corresponding partition of $\classes$ into $\classes_H, \mathcal{C}_B, \classes_{\geq3/4} \setminus (\classes_H \cup \classes_B)$ and $\classes \setminus \classes_{\geq3/4}$ it holds that
\[\abs{\classes_H} + \max\Set{\abs{\mathcal{C}_B} , \ceil*{\frac12\left(\abs{\classes_B} + \abs{\classes_{\geq3/4} \setminus (\classes_H \cup \classes_B)}\right)}} \leq m.\]
\end{lemma}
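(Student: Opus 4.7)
The plan is to enumerate $\Oh(\machines)$ candidate values of $T$ starting from the trivial lower bound and return the smallest one at which the displayed inequality is satisfied. The key insight is that the partition of $\classes$ is piecewise constant in $T$ with only $\Oh(\machines)$ breakpoints, once one restricts attention to those classes that can possibly contribute to $\classes_H$, $\classes_B$, or $\classes_{\geq 3/4}$. Since \cref{lem:available machines} guarantees the inequality at $T=\opt$, a sweep from the initial lower bound upward must terminate at some $T \leq \opt$.

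First, I compute $T_0 := \max\{p(\jobs)/\machines,\,\max_{c \in \classes}p(c)\}$ in $\Oh(n)$ time, which is a valid lower bound on $\opt$ by \cref{lower-bounds}. Next, I identify the relevant classes and jobs: a class $c$ can only lie in $\classes_H$, $\classes_B$, or $\classes_{\geq 3/4}$ at some scale $T \geq T_0$ if $p(c) > T_0/2$, and a job $j$ can only be huge or big at such a scale if $p_j > T_0/2$. Because $p(\jobs) \leq \machines T_0$, fewer than $2\machines$ classes and fewer than $2\machines$ jobs satisfy these conditions, and both sets can be extracted in $\Oh(n)$ time.

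Then I produce the set of breakpoints at which the partition may change: for each relevant job $j$ the values $T = 4p_j/3$ and $T = 2p_j$ (where $j$ transitions between huge and big, and between big and medium), and for each relevant class $c$ the value $T = 4p(c)/3$ (where $c$ leaves $\classes_{\geq 3/4}$). Sorting these $\Oh(\machines)$ breakpoints dominates the running time at $\Oh(\machines \log \machines)$. Initializing the counters $|\classes_H|$, $|\classes_B|$, and $|\classes_{\geq 3/4}\setminus(\classes_H \cup \classes_B)|$ at $T_0$, I sweep the breakpoints in increasing order; each event updates at most two counters in $\Oh(1)$ and is followed by a constant-time check of the displayed inequality. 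The first $T$ at which the inequality holds is returned, and it satisfies $T \leq \opt$, since otherwise the partition at $T = \opt$ would violate \cref{lem:available machines}.

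The main obstacle will be handling the transitions at breakpoints carefully, because the partition is defined by a mix of strict and non-strict inequalities on $p_j$ and $p(c)$: depending on the type of boundary, the correct representative value of $T$ is either the breakpoint itself or an infinitesimally larger value, and the sweep must be organized so that the counters always reflect the partition used to test the inequality at the chosen $T$. Once this bookkeeping is settled, the total running time is $\Oh(n)$ for the preprocessing plus $\Oh(\machines \log \machines)$ for sorting and sweeping, matching the claim.
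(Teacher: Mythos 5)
Your overall strategy is the paper's: reduce to the $\Oh(\machines)$ classes that can ever be in $\classes_H$, $\classes_B$, or $\classes_{\geq 3/4}$, compute per-class threshold values of $T$ at which their category changes, sort them in $\Oh(\machines\log\machines)$, and search for the smallest admissible $T$, concluding $T\leq\opt$ because \cref{lem:available machines} certifies the inequality at $T=\opt$ (the paper uses a binary search over the sorted thresholds where you use a linear sweep, which is immaterial for the bound).

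The one point you defer — the ``bookkeeping'' at breakpoints — is, however, exactly the step that needs an actual idea, and as written it is a gap. Membership in $\classes_{\geq 3/4}$ is governed by the non-strict inequality $p(c)\geq \tfrac34 T$, so a class leaves that set only \emph{strictly above} the breakpoint $\tfrac43 p(c)$, while jobs change category \emph{at} the breakpoints $\tfrac43 p_j$ and $2p_j$. Consequently, the partition realized at $\opt$ need not coincide with the partition at any of your candidate values: if the largest candidate below $\opt$ is of the form $\tfrac43 p(c)$, the inequality can fail there yet hold at $\opt$, so your sweep could run past $\opt$; and ``an infinitesimally larger value'' is not a concrete $T$ you can return and normalize by. The paper resolves this by exploiting integrality: $p_j\in\mathbb{N}$ and the search is over integer $T$, so the thresholds are taken as the smallest integers at which a class leaves $\classes_H$, $\classes_B$, $\classes_{\geq3/4}$, namely $\ceil{\tfrac43(\max_{j\in c}p_j)+\tfrac13}$, $2(\max_{j\in c}p_j)+1$, and $\ceil{\tfrac43 p(c)+\tfrac13}$; with these candidates the partition at $\opt$ is realized by some candidate $\leq\opt$ and the argument closes. (A cosmetic difference: the paper also folds the lower bound $\tilde p_m+\tilde p_{m+1}$, found by linear-time selection, into $T$; the lemma statement itself does not force this, but the later steps and \cref{lem:ANoHuge} rely on it, so you would want to include it as well.)
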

\begin{proof}
	By \cref{lower-bounds}, we know that we can set $T \geq \max\set{\frac{1}{\machines} p(\jobs),\max_{c \in \classes}p(c)}$.
	Let $\tilde{p}_{i}$ denote the $(m+1)$-st largest processing time (in a list of processing times containing one entry per job).
	Since each machine can contain at most one job with processing times larger than $\opt/2$, we set
	$T \geq \max\set{\frac{1}{\machines} p(\jobs),\max_{c \in \classes}p(c), \tilde{p}_m + \tilde{p}_{m+1}}$.
	It is possible to find $p_{m+1}$ in $\Oh(n)$ steps, by using the famous median algorithm of Blum et al. \cite{DBLP:journals/jcss/BlumFPRT73}.
	
	Since each class in $\classes_H\cup\classes_B$ contains an item with processing time $\geq 1/2$, only the $m$ classes containing the largest items are candidates for these sets.
	These classes can be found in $\Oh(n)$ by identifying the largest item of each class and comparing it to $p_{m+1}$.
	Similarly the number of classes in $\classes_{\geq3/4}$ is bounded by $(4/3)m$, which can be identified in $\Oh(n)$ by comparing their processing time to $\max\set{\frac{1}{\machines} p(\jobs),\max_{c \in \classes}p(c), \tilde{p}_m + \tilde{p}_{m+1}}$.
	
	After identifying the potential classes, we have to deal with at most $\Oh(m)$ classes.
	For each of these classes there exist three threshold values for $T \in \mathbb{N}$ (i.e., $\lceil\frac{4}{3}(\max_{j\in c}p_j) + 1/3\rceil, 2(\max_{j\in c}p_j)+1$, and $\lceil\frac{4}{3}p(c) + 1/3\rceil$), that would categorize these classes to be no longer in  $\classes_H$, $\classes_B$, and $\classes_{\geq3/4}$, respectively, which after the first two steps can be found in $\Oh(m)$ for all the classes, since they depend on the largest processing time in the class and the total processing time of that class. 
	
	The algorithm can take all these values and sort them by size in $\Oh(m \log(m))$. Via binary search in $\Oh(m \log(m))$, it is possible to find the smallest value $T$ such that $T \geq \max\set{\frac{1}{\machines} p(\jobs),\max_{c \in \classes}p(c), 2p_{m+1}}$ and for the instance normalized by $1/T$ and the corresponding partition into of $\classes$ into $\classes_H, \mathcal{C}_B, \classes_{\geq3/4} \setminus (\classes_H \cup \classes_B)$ and $\classes \setminus \classes_{\geq3/4}$ it holds that
	\[\abs{\classes_H} + \max\Set{\abs{\mathcal{C}_B} , \ceil*{\frac12\left(\abs{\classes_B} + \abs{\classes_{\geq3/4} \setminus (\classes_H \cup \classes_B)}\right)}} \leq m.\qedhere\]
\end{proof}

In the following, we only consider the instance that was scaled by $1/T$. 
We present two Lemmas stating the possibility to partition some classes into two parts that will be scheduled on two different machines.

\begin{lemma}\label{lem:split-classes}
	Let $c \in \classes_{\geq 3/4}$ and $\max_{j\in c}p_j \leq 3/4$. Then $c$ can be partitioned into two parts $\Check{c}$ and $\hat{c}$ with $p(\Check{c}) \leq 1/2$ and $p(\hat{c}) \leq  3/4$ and  $p(\Check{c}) \leq p(\hat{c})$. 
	Furthermore, if $\max_{j\in c}p_j \leq 1/2$, it holds that $p(\Check{c}) \in (1/4,1/2]$ or $p(\hat{c}) \in (1/4,1/2]$.
\end{lemma}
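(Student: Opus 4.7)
My plan is to distinguish two cases according to the size of the largest job in $c$.

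In the first case, if some $j^{\star} \in c$ satisfies $p(j^{\star}) > 1/2$, I set $\hat{c} = \{j^{\star}\}$ and $\check{c} = c \setminus \{j^{\star}\}$. Then $p(\hat{c}) = p(j^{\star}) \leq 3/4$ by the hypothesis of the lemma, and $p(\check{c}) = p(c) - p(j^{\star}) < 1/2$ using $p(c) \leq \opt = 1$ from the section's normalization; the ordering $p(\check{c}) \leq p(\hat{c})$ then follows from $p(\check{c}) < 1/2 < p(\hat{c})$, and the ``furthermore'' clause is vacuous.

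In the second case, every job in $c$ has size at most $1/2$ and the ``furthermore'' clause has teeth. The key idea is to first construct a subset $S \subseteq c$ with $p(S) \in (1/4, 1/2]$ and only afterwards assign the labels $\hat{c}$ and $\check{c}$. Such an $S$ can be found as follows: if some $j \in c$ has $p(j) \in (1/4, 1/2]$, take $S = \{j\}$; otherwise every job in $c$ has size at most $1/4$, and greedily adding jobs to $S$ until $p(S) > 1/4$ lands $p(S)$ in $(1/4, 1/2]$ because the final addition contributes at most $1/4$ and the process terminates since $p(c) \geq 3/4$. With $S$ in hand, let $\hat{c}$ be whichever of $S$ and $c \setminus S$ has the larger total and $\check{c}$ the other: the ordering $p(\check{c}) \leq p(\hat{c})$ is built in, the ``furthermore'' clause holds because one of the two parts equals $S$, and a short case check handles the remaining size bounds. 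Namely, if $\hat{c} = S$ then $p(\hat{c}) \leq 1/2$ and $p(\check{c}) \leq p(S) \leq 1/2$; if instead $\hat{c} = c \setminus S$ then $p(\hat{c}) = p(c) - p(S) < 3/4$ (using $p(c) \leq 1$ and $p(S) > 1/4$) while $p(\check{c}) = p(S) \leq 1/2$.

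The only real subtlety I anticipate lies in the ``furthermore'' guarantee for Case~2. A naive balancing greedy that tries to split around $p(c)/2$ can leave both pieces outside $(1/4, 1/2]$ when $p(c) < 1$ -- for instance, three nearly-equal jobs of size slightly above $1/6$ together with smaller residuals can push one part just above $1/2$ and the other just below $1/4$. Committing to the target subset $S$ first and only afterwards orienting the larger/smaller labels sidesteps this pitfall cleanly.
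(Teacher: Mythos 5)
Your proposal is correct and follows essentially the same route as the paper: split off the single job of size greater than $1/2$ if one exists, and otherwise build a subset of total size in $(1/4,1/2]$ (a single medium job or a greedy pile of jobs of size at most $1/4$) and let the labels $\hat{c},\check{c}$ be assigned by size afterwards. You merely merge the paper's two subcases into one step and spell out the size verifications (using $p(c)\leq 1$ from the normalization) that the paper leaves implicit.
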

\begin{proof}
	Let $c \in \classes_{\geq 3/4}$ and $\max_{j\in c}p_j \leq 3/4$.
	If $\max_{j\in c}p_j > 1/2$, we set $\hat{c}$ to include the job from $c$ with size bigger than $1/2$ and $\Check{c} = c \setminus \hat{c}$.
	If $\max_{j\in c}p_j \in (1/4,1/2]$, then let $c'$ include a maximal job from $c$ and let $\hat{c},\Check{c}\in \Sett{c',c \setminus c'} $ be distinct such that $p(\Check{c}) \leq p(\hat{c})$.
	Lastly, if $\max_{j\in c}p_j \leq  1/4$, then we construct $c'$ by greedily adding jobs from $c$ to $c'$ until $p(c') > 1/4$ and again define $\hat{c},\Check{c}\in \Sett{c',c \setminus c'} $ to be distinct such that $p(\Check{c}) \leq p(\hat{c})$.
\end{proof}

\begin{lemma}\label{lem:split-classes2}
	Let $c \in \classes$ with $p(c) \in (1/2,3/4)$ and $\max_{j\in c}p_j \leq 1/2$. Then $c$ can be partitioned into two parts $\Check{c}$ and $\hat{c}$ with $p(\Check{c}) \leq p(\hat{c}) \leq 1/2$ and $1/4<p(\hat{c})$.
\end{lemma}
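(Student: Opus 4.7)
The plan is to follow the same case analysis as in the proof of \cref{lem:split-classes}, distinguishing two subcases based on the maximum job size in $c$, and then to verify that both resulting parts have total processing time at most $1/2$ while the larger part exceeds $1/4$.

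First I would consider the case in which some job $j^\star \in c$ has $p_{j^\star} > 1/4$. Since by assumption $p_{j^\star} \leq 1/2$, I would place $j^\star$ alone in one part. The complementary part then has total size $p(c) - p_{j^\star} < 3/4 - 1/4 = 1/2$, so both parts are bounded by $1/2$.

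If instead every job in $c$ has $p_j \leq 1/4$, I would greedily add jobs to one part $P$, stopping the moment $p(P)$ first exceeds $1/4$. Since each job added contributes at most $1/4$, we have $p(P) \in (1/4, 1/2]$, and the complement satisfies $p(c \setminus P) = p(c) - p(P) < 3/4 - 1/4 = 1/2$. The greedy step terminates because $p(c) > 1/2 > 1/4$.

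Finally, I would relabel the two resulting parts as $\check{c}$ and $\hat{c}$ so that $p(\check{c}) \leq p(\hat{c})$. Both parts are $\leq 1/2$ by the above, and since $p(c) > 1/2$, the larger part must satisfy $p(\hat{c}) > 1/4$ (otherwise both parts would be at most $1/4$ and their sum at most $1/2$, contradicting $p(c) > 1/2$). The main obstacle is essentially bookkeeping: one has to arrange the cut so that both upper bounds $p(\hat{c}) \leq 1/2$ and $p(\check{c}) \leq 1/2$ hold at once, which the two cases above guarantee, while the lower bound $p(\hat{c}) > 1/4$ follows for free by choosing $\hat{c}$ as the larger side.
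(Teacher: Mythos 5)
Your proof is correct and follows essentially the same route as the paper's: the same case split on whether $c$ contains a job of size exceeding $1/4$ (isolating such a job, or otherwise greedily filling one part until it exceeds $1/4$), and the same observation that the larger part must exceed $1/4$ since the total exceeds $1/2$. No issues to report.
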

\begin{proof}
	Let $c \in \classes$ with $p(c) \in (1/2,3/4)$ and $\max_{j\in c}p_j \leq 1/2$.
	If $\max_{j\in c}p_j \in (1/4,1/2]$, then let $c'$ include a maximal job from $c$ and let $\hat{c},\Check{c}\in \Sett{c',c \setminus c'} $ be distinct such that $p(\Check{c}) \leq p(\hat{c})$.
	If $\max_{j\in c}p_j \leq  1/4$, then we construct $c'$ by greedily adding jobs from $c$ to $c'$ until $p(c') > 1/4$ and again define $\hat{c},\Check{c}\in \Sett{c',c \setminus c'} $ to be distinct such that $p(\Check{c}) \leq p(\hat{c})$.
	$1/4<p(\hat{c})$ follows directly from the fact that $p(\Check{c}) \leq p(\hat{c})$ and $1/2 < p(\Check{c}) + p(\hat{c})$.
\end{proof}

In the following, we will present two algorithms. 
The first can only handle instances with classes that do not possess an item with processing time larger than $3/4$. 
This algorithm will be used as a subroutine for the second algorithm, which can handle all instances.

\subsection{Algorithm for Instances without Huge Jobs}

	Here we give an algorithm for instances with $\abs{\classes_H}=0$. 
	We assume that the instance was scaled by a value $1/T$ and the classes are categorized as described earlier. 
	The main idea is to repeatedly take combinations of classes with specific parameters which conveniently fill one, two or three machines, without opening additional ones.
	\emph{Fill} in this case means that the average load of \emph{full} machines is in $[1,3/2]$.
	We start with taking two classes with total size in $(1/2,3/4)$ each, as those fill one machine.
	Then we continue with four classes with total size $\geq 3/4$ each, and show how those can be arranged to fill three machines.
	The procedure continues with different combinations of classes until all jobs are scheduled. 
	We show the correctness of the algorithm by arguing that closed machines have on average load of at least $1$, and every scheduled jobs is finished at $3/2$.
	At some point in the algorithm we reach a state where only jobs of classes with total load at most $1/2$ are left.
	Those can be scheduled greedily, by placing full classes on residual machines, until a machine has load at least $1$.
	
	Since we repeatedly have to refer to the jobs which have not been scheduled, we introduce the notation of $\resClasses_X \subseteq \classes_X$ to denote the subset of classes that have not been scheduled at the described step for any class specifier $X$. Note that in the beginning of the algorithm, we have $\resClasses_X = \classes_X$ for all the sets.
	Furthermore, the algorithm will close some of the machines during the construction of the schedule and will not add jobs to closed machines. We denote the set of closed machines as $M_c$. 
	The algorithm is as follows:

\paragraph*{Algorithm: \ANoHuge}
\begin{stepii}
    By applying \cref{lem:split-classes}, partition every class $c \in \classes_{>3/4}$ into two parts $\Check{c},\hat{c}\subseteq c$ with $p(\Check{c}) \leq p(\hat{c}) \leq 3/4$ and $p(\Check{c}) \leq 1/2$.
\end{stepii}
\begin{stepii}\label{step:scheduleClassesWithMediumProcessingTime}
While $\abs{\resClasses_{(1/2,3/4)}} \geq 2$:
	Take $c_1,c_2 \in \resClasses_{(1/2,3/4)}$. 
	Schedule $c_1$ and $c_2$ on one machine such that $c_1$ starts at $0$ and $c_2$ ends at $3/2$.
\end{stepii}
\begin{claim*}
The load of each machine closed in this step is in $(1,3/2)$.
After this step it holds that $\abs{\resClasses_{(1/2,3/4)}} \leq 1$, the partial schedule is feasible, and the total load of closed machines $M_c$ is at least $|M_c|$.
\end{claim*}
\begin{stepii}
\label{step:scheduleClassesWithLargeProcessingTime1}
    While $\abs{\resClasses_{\geq 3/4}} \geq 4$:
	Take $c_1,c_2,c_3,c_4 \in \resClasses_{\geq 3/4}$.
	On the first machine schedule $\hat{c}_1$ and $\hat{c}_2$, such that $\hat{c}_1$ starts at $0$ and $\hat{c}_2$ ends at $3/2$.
	On the second machine schedule $\Check{c}_1$ and $c_3$, such that $\Check{c}_1$ ends at $3/2$ and starts after $1$.
	On the third machine schedule $\Check{c}_2$ and $c_4$, such that $\Check{c}_2$ starts at $0$ and ends before $1/2$ followed by $c_4$, see \cref{fig:4HugeClasses} for an example.
	Close all three machines.
\end{stepii}
\begin{claim*}
After this step $\abs{\resClasses_{(1/2,3/4)}} \leq 1$ and  $\abs{\resClasses_{\geq 3/4}} \leq 3$, the partial schedule is feasible, and the total load of closed machines $M_c$ is at least $|M_c|$.
Furthermore, all scheduled jobs are finished by $3/2$.
\end{claim*}
	
	\begin{figure}[ht]
		\centering
		\begin{subfigure}[b]{0.15\textwidth}
		\tikzset{external/export next=false}
		    \tikz[xscale=0.7,yscale=3.2]{
\defaulthlines{1}
\schedule up [fill=lightgray] {
	{
		.57/$c_1$//,
		.31//opacity=0/,
		.62/$c_2$//
	}/1/
};
}		
	    	\caption{\cref{step:scheduleClassesWithMediumProcessingTime}}
		    \label{fig:scheduleClassesWithMediumProcessingTime}
		\end{subfigure}
		\begin{subfigure}[b]{0.25\textwidth}
		    \tikz[xscale=0.7,yscale=3.2]{
\defaulthlines{3}
\schedule up [fill=lightgray] {
	{
		.68/$\noexpand\hat{c}_1$//,
		.20//opacity=0/,
		.62/$\noexpand\hat{c}_2$//
	}/1/,
	{
		.82/$c_3$//,
		.41//opacity=0/,
		.27/$\noexpand\Check{c}_1$//
	}/1/,
	{
		.30/$\noexpand\Check{c}_2$//,
		.26//opacity=0/,
		.94/$c_4$//
	}/1/
};
}		
	    	\caption{\cref{step:scheduleClassesWithLargeProcessingTime1}}
		    \label{fig:4HugeClasses}
		\end{subfigure}
		\begin{subfigure}[b]{0.2\textwidth}
			\tikz[xscale=0.7,yscale=3.2]{
\defaulthlines{2}
\schedule up [fill=lightgray] {
	{
		.68/$c_3$//,
		.20//opacity=0/,
		.62/$\noexpand\hat{c}_1$//
	}/1/,
	{
		.30/$\noexpand\Check{c}_1$//,
		.26//opacity=0/,
		.94/$c_2$//
	}/1/
};
}
		    \caption{\cref{step:scheduleClassesWithLargeProcessingTime2}}
		    \label{fig:2HugeClasses1Medium}
		\end{subfigure}
		\begin{subfigure}[b]{0.25\textwidth}
		\tikzset{external/export next=false}
			\tikz[xscale=0.7,yscale=3.2]{
\defaulthlines{3}
\schedule up [fill=lightgray] {
	{
		.68/$c$//,
		.35///,
		.42///
	}/1/,
	{
		.22///,
		.12///,
		.37///,
		.28///,
		.25///
	}/1/,
	{
		.42///
	}/1/
};
}
		    \caption{\cref{step:onlyonelargeclass}}
		    \label{fig:onlyonelargeclass}
		\end{subfigure}
		\caption{Examples for \cref{step:scheduleClassesWithMediumProcessingTime,step:scheduleClassesWithLargeProcessingTime1,step:scheduleClassesWithLargeProcessingTime2,step:onlyonelargeclass}}
	\end{figure}
	
\begin{stepii}
\label{step:scheduleClassesWithLargeProcessingTime2}
    If $\abs{\resClasses_{\geq 3/4}} \geq 2$ and  $\abs{\resClasses_{(1/2,3/4)}} = 1$:
	Take $c_1,c_2 \in \resClasses_{\geq 3/4}$ and  $c_3 \in \resClasses_{(1/2,3/4)}$.
	Schedule $c_3$ on the first machine, followed by $\hat{c}_1$ such that it ends at $3/2$.
	Schedule $\Check{c}_1$ on the second machine followed by the jobs from $c_2$ and close both machines, see \cref{fig:2HugeClasses1Medium} for an example.
\end{stepii}
\begin{claim*}
After this step it holds that $\abs{\resClasses_{(1/2,3/4)}} = 0$ and  $\abs{\resClasses_{\geq 3/4}} \leq 3$ or it holds that  $\abs{\resClasses_{(1/2,3/4)}} = 1$ and $\abs{\resClasses_{\geq 3/4}} \leq 1$.
This implies that $\abs{\resClasses_{> 1/2}} \leq 3$ after this step and that $\resClasses_{> 1/2}$ contains at most one class with total processing time less than $3/4$.
Furthermore, the partial schedule is feasible, the total load of closed machines $M_c$ is at least $|M_c|$, and no scheduled job finishes after $3/2$.
\end{claim*}

Depending on the size of $\abs{\resClasses_{> 1/2}}$ the algorithm chooses one of three procedures:

\begin{stepii}\label{step:onlyonelargeclass}
    If $\abs{\resClasses_{>1/2}} \leq 1$:
	Place this class $c$ on one machine. 
	Fill this machine and the residual machines greedily with the residual classes in $\resClasses_{\leq 1/2}$. 
\end{stepii}
\begin{claim*}
After this step it either holds that $2 \leq \abs{\resClasses_{> 1/2}} \leq 3$ or all jobs have been scheduled feasibly with no job finishing after $3/2$.
\end{claim*}
\begin{proof}
In the latter case, we can place all remaining jobs, since there are at least as many open machines as there is open load because before this step we had $p(M_c) \geq |M_c|$. Each opened machine will be filled with load in $[1,3/2]$, since each residual job has a size of at most $1/2$.
\end{proof}
\begin{stepii} \label{step:2greater12}
    If $\abs{\resClasses_{>1/2}} = 2$:
	Let $\resClasses_{>1/2} = \set{c_1,c_2}$ with $p(c_1) \geq p(c_2)$.
	We know that $p(c_1) \geq 3/4$.
		\begin{enumerate}
    		\item If $p(c_2) \leq 3/4$:
    		\begin{enumerate}
    			\item If $p(c_1) + p(c_2) \leq 3/2$:
    			Schedule both on one machine (with $c_1$ starting at $0$ and $c_2$ ending at $3/2$), close it, and continue greedily with the residual jobs.
    			\item If $p(c_1) + p(c_2) > 3/2$:   
    			Place $c_2$ on one machine followed by $\hat{c}_1$ such that $\hat{c}_1$ ends at $3/2$ and close the machine. 
    			Place $\Check{c}_1$  on the next machine and continue greedily with the residual jobs in $\resClasses_{\leq 1/2}$.
    		\end{enumerate}
    		\item If $p(c_2) \geq 3/4$:
    		\begin{enumerate}
    			\item\label{two-large-classes-2a} If $p(\hat{c}_1) + p(\hat{c}_2) \leq 1$:
    			Schedule $c_2$ followed by $\hat{c}_1$ on one machine and close it. 
    			Start $\Check{c}_1$ at $0$ on the next machine and continue greedily with the residual jobs in $\resClasses_{\leq 1/2}$. 
    			\item\label{two-large-classes-2b} If $p(\hat{c}_1) + p(\hat{c}_2) > 1$:
    			Then place $\hat{c}_1$ and $\hat{c}_2$ on one machine such that $\hat{c}_1$ starts at $0$ and $\hat{c}_2$ ends at $3/2$. Place $\check{c}_2$ at the bottom and $\check{c}_1$ at the top of the next machine.			
    			Continue greedily with the residual classes in $\resClasses_{\leq 1/2}$. Start placing them between $\check{c}_2$ and $\check{c}_1$ until the load of that machine is at least $1$ and then continue with the empty machines.
    		\end{enumerate}
		\end{enumerate}

	\begin{figure}[ht]
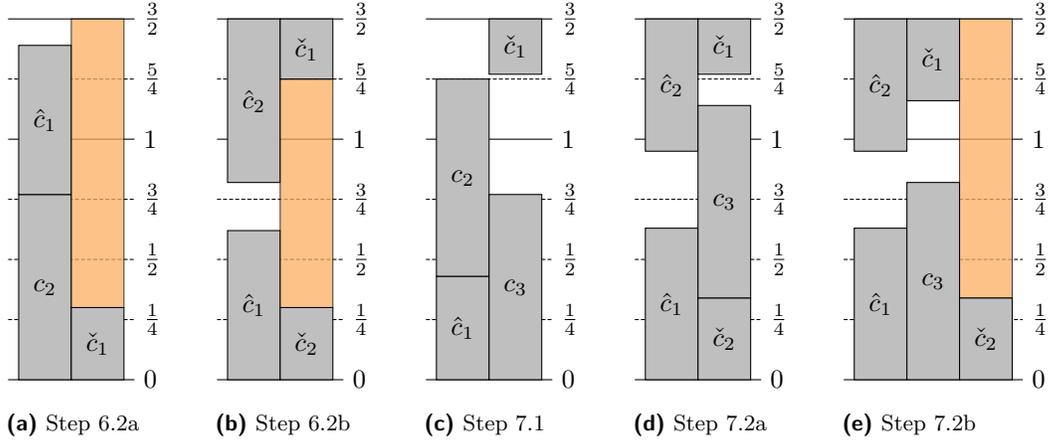

		\centering
		\begin{subfigure}[b]{0.19\textwidth}
    		\tikz[xscale=0.7,yscale=3.2]{
\defaulthlines{2}
\schedule up [fill=lightgray] {
	{
		.77/$c_2$//,
		.62/$\noexpand\hat{c}_1$//,
		.11//opacity=0/
	}/1/,
	{
		.30/$\noexpand\Check{c}_1$//,
		1.20//{opacity=0.75,fill=morange}/
	}/1/
};
}
    		\caption{\Cref{step:2greater12}.\ref{two-large-classes-2a}}
		\end{subfigure}
		\begin{subfigure}[b]{0.19\textwidth}
    		\tikz[xscale=0.7,yscale=3.2]{
\defaulthlines{2}
\schedule up [fill=lightgray] {
	{
		.62/$\noexpand\hat{c}_1$//,
		.20//opacity=0/,
		.68/$\noexpand\hat{c}_2$//
	}/1/,
	{
		.30/$\noexpand\Check{c}_2$//,
		0.95//{opacity=0.75,fill=morange}/,
		.25/$\noexpand\Check{c}_1$//
	}/1/
};
}
    		\caption{\Cref{step:2greater12}.\ref{two-large-classes-2b}}
		\end{subfigure}
		\begin{subfigure}[b]{0.19\textwidth}
    		\tikz[xscale=0.7,yscale=3.2]{
\defaulthlines{2}
\schedule up [fill=lightgray] {
	{
		.43/$\noexpand\hat{c}_1$//,
		.82/$c_2$//
	}/1/,
	{
		.77/$c_3$//,
        {1.5-.77-.23}//opacity=0/,
        .23/$\noexpand\Check{c}_1$//
	}/1/
};
}
    		\caption{\cref{three-large-classes}.\ref{three-large-classes-1}}
		\end{subfigure}
		\begin{subfigure}[b]{0.19\textwidth}
    		\tikz[xscale=0.7,yscale=3.2]{
\defaulthlines{2}
\schedule up [fill=lightgray] {
	{
		.63/$\noexpand\hat{c}_1$//,
        {1.5-.63-.55}//opacity=0/,
		.55/$\noexpand\hat{c}_2$//
	}/1/,
	{
		.34/$\noexpand\Check{c}_2$//,
        .80/$c_3$//,
        {1.5-.34-.23-.80}//opacity=0/,
        .23/$\noexpand\Check{c}_1$//
	}/1/
};
}
    		\caption{\cref{three-large-classes}.\ref{three-large-classes-2a}}
		\end{subfigure}
  		\begin{subfigure}[b]{0.19\textwidth}
    		\tikz[xscale=0.7,yscale=3.2]{
\defaulthlines{3}
\schedule up [fill=lightgray] {
	{
		.63/$\noexpand\hat{c}_1$//,
        {1.5-.63-.55}//opacity=0/,
		.55/$\noexpand\hat{c}_2$//
	}/1/,
	{
        .82/$c_3$//,
        {1.5-.34-.82}//opacity=0/,
		.34/$\noexpand\Check{c}_1$//
	}/1/,
 	{
		.34/$\noexpand\Check{c}_2$//,
		1.16//{opacity=0.75,fill=morange}/
	}/1/
};
}
    		\caption{\cref{three-large-classes}.\ref{three-large-classes-2b}}
		\end{subfigure}
		\caption{Examples for \Cref{step:2greater12} and \Cref{three-large-classes}. Orange blocks represent space for residual classes.}
		\label{fig:2BigClasses}
	\end{figure}
\end{stepii}
\begin{claim*}
After this step it either holds that $\abs{\resClasses_{> 1/2}} = 3$ or all jobs have been scheduled feasibly with no job finishing after $3/2$.
\end{claim*}
\begin{proof}
We will prove the latter case. If $p(c_2) \leq 3/4$, the load of the machines that contains either $c_1$ and $c_2$ or only $c_2$ and $\hat{c}_1$ has a load in $(1,3/2]$. Each residual class (or part of a class) has a total processing time of at most $1/2$. 
Furthermore, up to this step, it holds that $p(M_c) \geq |M_c|$. 
As a consequence, greedily scheduling the residual classes starting with $\check{c}_1$ is possible.

If, on the other hand, $p(c_2) > 3/4$ holds, the machine containing $c_2$ and $\hat{c}_1$ (or $\hat{c}_2$ and $\hat{c}_1$ respectively) has a total load of at least $1$ in either case, and placing $\check{c}_1$ (and $\check{c}_2$) as described does not provoke an overlapping of two jobs requiring the same resource (see \cref{fig:2BigClasses}).
Furthermore, the machine containing $c_2$ and $\hat{c}_1$ (or $\hat{c}_2$ and $\hat{c}_1$ respectively) has a total load of at most $3/2$ since $p(\Check{c}_2)+p(\hat{c}_1) + p(\hat{c}_2)\leq 1/2 + 1$ if $p(\hat{c}_1) + p(\hat{c}_2)\leq 1$ and $p(\hat{c}_i) \leq 3/4$ for $i\in\Sett{1,2}$.
The residual classes again can be scheduled greedily.
This is easy to see in the case $p(\hat{c}_1) + p(\hat{c}_2) \leq 1$ and otherwise we have $p(\check{c}_2) + p(\check{c}_1) \in[0,1)$ and hence the remaining gap has a size of at least $1/2$. 
Since all remaining classes have total load of at most $1/2$ it is possible to greedily add such classes until the total load of that machine is at least $1$ or all remaining classes have been placed.
\end{proof}

\begin{stepii}\label{three-large-classes}
    If $\abs{\resClasses_{>1/2}} = 3$:
	Then $\resClasses_{>1/2} = \resClasses_{\geq 3/4}$.
	Let $\resClasses_{\geq 3/4} = \set{c_1,c_2,c_3}$.
	\begin{enumerate}
		\item\label{three-large-classes-1} If there exists an $i \in \set{1,2,3}$ such that $\hat{c}_i \leq 1/2$:
		Let w.l.o.g. $\hat{c}_1 \leq 1/2$.
		On the first machine schedule $\hat{c}_1$ followed by all the jobs from $c_2$.
		On the next machine schedule all the jobs from $c_3$ and the job $\Check{c}_1$ such that it ends at $3/2$ and close both machines.
        Greedily schedule the jobs in $\resClasses_{\leq 1/2}$ on the non-closed machines.
		\item If $\hat{c}_i > 1/2$ for all $i \in \set{1,2,3}$:
		Place $\hat{c}_1$ and $\hat{c}_2$ on one machine such that $\hat{c}_1$ starts at $0$ and $\hat{c}_2$ ends at $3/2$.
		\begin{enumerate}
			\item\label{three-large-classes-2a} If $p(\Check{c}_1) + p(\Check{c}_2) + p(c_3) \leq 3/2$:
			On the next machine place $\Check{c}_2$ followed by $c_3$ and $\Check{c}_1$ and let $\Check{c}_1$ end at $3/2$.
			Close both machines.
			\item\label{three-large-classes-2b} If $p(\Check{c}_1) + p(\Check{c}_2) + p(c_3) > 3/2$:
			Then w.l.o.g. $p(\Check{c}_1) > 1/4$ and we place $c_3$ and $\Check{c}_1$ on the next machine, such that $\Check{c}_1$ ends at $3/2$. 
			Close both machines.
			On the next machine place $\Check{c}_2$ such that it starts at $0$. 
		\end{enumerate}
		Greedily schedule the jobs in $\resClasses_{\leq 1/2}$ on the non-closed machines.
	\end{enumerate}
\end{stepii}
\begin{claim*}
After this step, all scheduled jobs are finished by $3/2$ and the schedule is feasible.
\end{claim*}
\begin{proof}
Note that the two machines containing the classes $c_1$, $c_2$, and $c_3$ (or $c_1$, $\hat{c}_2$, and $c_3$ respectively) have a total load of at least $2$. As a consequence, all machines $M_c$ closed to this point have a load of at least $|M_c|$. As a consequence there residual load fits on the residual machines. When greedily scheduling the classes each machine is overloaded by at most $1/2$, since each residual class has a processing time of at most $1/2$.
\end{proof}

\begin{lemma}
\label{lem:ANoHuge}
    Given an instance $I=(m,\jobs,\classes)$ that does not contain a huge job, the algorithm \ANoHuge finds a schedule with makespan at most $\frac{3}{2}T$, where $T = \max\set{\frac{1}{\machines} p(\jobs),\max_{c \in \classes}p(c), \tilde{p}_m + \tilde{p}_{m+1}}$.
\end{lemma}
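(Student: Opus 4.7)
The plan is to normalize so that $T = 1$ and argue that the algorithm produces a feasible schedule with makespan at most $3/2$. The argument consists of verifying the claim stated after each step of \ANoHuge and then showing that the greedy clean-up steps always succeed. Throughout the algorithm I would maintain three invariants: (i) no two jobs of the same class are scheduled in parallel and no two jobs overlap on a machine (\emph{feasibility}); (ii) no scheduled job completes after $3/2$; and (iii) for the current set $M_c$ of closed machines we have $p(M_c) \geq |M_c|$ (which I call the \emph{load invariant}). The first two items directly give the approximation guarantee once all jobs are placed, while the load invariant is exactly what is needed to guarantee that the greedy placement of residual classes succeeds.

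First I would verify that \cref{step:scheduleClassesWithMediumProcessingTime,step:scheduleClassesWithLargeProcessingTime1,step:scheduleClassesWithLargeProcessingTime2} preserve the three invariants. In each of these steps classes (or the parts $\check c, \hat c$ obtained from \cref{lem:split-classes}) are packed so that the bottom piece starts at $0$ and the top piece ends at $3/2$, with a gap in the middle; feasibility follows because the two pieces of the same class never collide in time (their sizes sum to at most $1$) and because pieces of different classes are placed on different machines. The makespan bound is immediate from the construction. For the load invariant one simply checks by case analysis that every machine closed in these steps carries load strictly greater than $1$: for instance, in \cref{step:scheduleClassesWithLargeProcessingTime1} the three closed machines carry total load at least $p(c_1)+p(c_2)+p(c_3)+p(c_4) \geq 3$, and similar inequalities hold for the other steps. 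This already matches the individual claims stated in the excerpt.

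Next I would analyze \cref{step:onlyonelargeclass,step:2greater12,three-large-classes}, which cover the three possible sizes of $\resClasses_{>1/2}$ after the earlier steps have terminated. The case $|\resClasses_{>1/2}| \leq 1$ is the easiest: a single class of total load at most $p(c) \leq p(\jobs) \leq m$ fits on one machine with load at most $1$, and the residual classes of size at most $1/2$ can be distributed greedily since the load invariant plus $p(\jobs) \leq m$ guarantees that the open machines have enough remaining capacity. For $|\resClasses_{>1/2}| = 2$ and $|\resClasses_{>1/2}| = 3$ I would walk through each sub-case of \cref{step:2greater12,three-large-classes}, checking that (a) the explicit placements shown in \cref{fig:2BigClasses} respect the resource constraints (two pieces of the same split class never overlap, because the small piece starts at $0$ or ends at $3/2$ while the large piece is placed in the opposite end of a different machine), (b) the closed machines have load at least their count (this is where the split thresholds $p(\hat c_1) + p(\hat c_2) \leq 1$ versus $>1$ are essential), and (c) no job completes after $3/2$. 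The two secondary \cref{lem:split-classes2,lem:split-classes} provide exactly the decompositions that make these case distinctions work.

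The main obstacle is the final greedy phase. Here I would argue as follows: at the moment greedy begins, every remaining class has total size at most $1/2$, every unplaced job is small or medium, and the load invariant $p(M_c) \geq |M_c|$ still holds. Let $M_o$ be the machines that are still open and possibly partially filled; since $p(\jobs) \leq m$ and closed machines absorb at least their count of load, the total load yet to be placed is at most $m - |M_c| = |M_o|$. Greedily packing full classes onto these machines until each reaches load $\geq 1$ and then closing it can only overshoot $3/2$ if a single class of size $> 1/2$ is added, which cannot happen. Hence all residual classes fit inside makespan $3/2$, completing the argument and yielding the bound $\frac{3}{2} T$ after undoing the scaling.
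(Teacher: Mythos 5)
Your plan is correct and follows essentially the same route as the paper: you verify the per-step claims while maintaining exactly the paper's invariants (feasibility, completion by $3/2$, and the aggregate bound $p(M_c)\geq\abs{M_c}$ for closed machines), and you close with the same accounting argument that the residual load is at most the number of open machines so the greedy phase with classes of size at most $1/2$ cannot exceed $3/2$. One small slip to fix: in \cref{step:scheduleClassesWithLargeProcessingTime1} it is not true that \emph{every} closed machine has load greater than $1$ (the machine holding $\hat{c}_1,\hat{c}_2$ may have load as low as $3/4$); only the aggregate bound over the three machines holds, which is what you actually use, matching the paper.
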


\subsection{Algorithm for the General Case}

Now we present the above-mentioned algorithm that can handle any instance of the problem and uses the previous algorithm in a subroutine.
More specifically, this algorithm places all classes which contain a huge job on a separate machine and fills those machines with jobs from other classes.
This is done by working through different combinations of classes until we reach a point where we can handle the remaining classes and machines as a separate problem instance, at which point the previous algorithm is used.
As before we assume that the instance is scaled by a value $1/T$ and the classes are categorized as described earlier.

We keep the following invariant of the remaining instance over the whole algorithm.

\begin{invariant}
The total load of unscheduled jobs and jobs placed on open machines is bounded by the number of open machines (open machines are all machines not explicitly closed) and in each step the cardinality of the set of unused machines $\bar{M}_u$ is at least 
\[\abs{\bar{M}_u} \geq \max\set{\abs{\bar{\classes}_B}, \ceil{(\abs{\bar{\classes}_B} + \abs{\bar{\classes}_{\geq3/4} \setminus ({\classes}_H \cup {\classes}_B)})/2}}.\]
\end{invariant}

\subparagraph*{Algorithm: \AThreeHalf}
\begin{stepiii}
	Combine specific jobs of the same class into one job. The simplification is done as follows: Iterate all classes $c \in \classes$
	\begin{itemize}
	    \item If $c \in \classes_H$ combine all jobs in $c$ to one huge job.
		\item Else if $p(c) > 3/4$ partition it into parts $\hat{c}$ and $\Check{c}$ with $p(\Check{c}) \leq p(\hat{c}) \leq 3/4$ and $p(\Check{c}) \leq 1/2$.
		Introduce for each part a new job with processing time $p(\hat{c})$ and $p(\Check{c})$, see \cref{lem:split-classes}. 
		\item Else if $c \in \classes_{(1/2,3/4)} \cap \classes_B$:
		partition it into $\hat{c}$ and $\Check{c}$, such that $\hat{c}$ contains the largest job and $\Check{c}$ contains the rest.
		\item Else if $c \in \classes_{(1/2,3/4)} \setminus \classes_B$ partition it into parts $\hat{c}$ and $\Check{c}$ with $p(\Check{c}) \leq p(\hat{c}) \leq 1/2$, see \cref{lem:split-classes2}.
		\item Else if $p(c) \leq 1/2$ introduce one job of size $p(c)$.
	\end{itemize}
\end{stepiii}
\begin{claim*}
    This partition is feasible and every solution for this simplified instance, will still be a solution for the original instance.
\end{claim*}
\begin{stepiii}
    For each $c \in \classes_H$: Open one new machine and assign class $c$ to it. Let $M_H$ be the set of these opened machines. Close all the machines that have load exactly $1$. Denote by $\bar{M}_H$ the set of currently open machines containing a class from $\classes_H$.
\end{stepiii}
\begin{claim*}
	 After this step, there are $\abs{\bar{M}_H}$ many open machines with load in $(3/4,1)$,  $\abs{\resClasses_H}=0$. For the residual empty machines $\bar{M}_u$ it holds that
	 \[\abs{\bar{M}_u} \geq \max\set{\abs{\bar{\classes}_B}, \ceil{(\abs{\bar{\classes}_B} + \abs{\bar{\classes}_{\geq3/4} \setminus ({\classes}_H \cup {\classes}_B)})/2}} \text{\ \ and \ \ } p(\bar{M}_H) + p(\resClasses) \leq \abs{\bar{M}_u} + \abs{\bar{M}_H}.\] 
\end{claim*}	
\begin{stepiii}
    Assign classes $c_s$ with $p(c_s) \leq 1/2$ greedily to machines $\bar{M}_H$ and close each machine with load at least $1$. Continue until either no machines in $\bar{M}_H$ with load less than $1$ is left, or no class with load at most $1/2$ is left.
    If $\abs{\bar{M}_H} = 0$, continue with \ANoHuge on the residual instance.
\end{stepiii}
\begin{claim*}
    After this step 
    either all jobs are scheduled feasibly 
    or it holds that $\abs{\bar{M}_H} \geq 1$ and  $\abs{\resClasses_{\leq 1/2}} = 0$. 
    Furthermore, the partial schedule is feasible, all scheduled jobs are finished by $3/2$ and for the residual empty machines $\bar{M}_u$ it holds that
    \[\abs{\bar{M}_u} \geq \max\set{\abs{\resClasses_B}, \ceil{(\abs{\resClasses_B} + \abs{\resClasses_{\geq3/4} \setminus  \classes_B})/2}} \text{\ \ and \ \ } p(\bar{M}_H) + p(\resClasses) \leq \abs{\bar{M}_u} + \abs{\bar{M}_H}.\]
\end{claim*}	
\begin{proof}
Since we only close machines with load at least one in this step and did not open any new machine, the invariant on the number of unused machines is trivially true. 
Hence, if we have used \ANoHuge on the residual instance, by \cref{lem:ANoHuge} it generates a schedule with makespan at most $3/2$ because $p(\resClasses) \leq |\bar{M}_u|$ at that point and no class was scheduled partially.
\end{proof}
\begin{stepiii}
\label{step:pairMachinesMHwithJobs2}
	While $\abs{\bar{M}_H} \geq 2$ and $\abs{\resClasses_{(1/2,3/4)} \setminus \resClasses_B} \geq 1$:
	Take $m_1,m_2 \in \bar{M}_H$, $c \in \resClasses_{(1/2,3/4)} \setminus \classes_B$.
	Shift the huge job on $m_2$ up such that it ends at $3/2$ and starts at or after $1/2$.
	Schedule $\hat{c}$ on $m_1$ such that it ends at $3/2$, schedule $\Check{c}$ on $m_2$ starting at $0$ and close both machines, see \cref{fig:pairMachinesMHwithJobs2}.
    If $\abs{\bar{M}_H} = 0$, continue with \ANoHuge on the residual instance.
	\begin{figure}[ht]
		\centering
		\begin{subfigure}[b]{0.3\textwidth}
  \tikzset{external/export next=false}
		\tikz[xscale=0.7,yscale=3.2]{
\defaulthlines{2}
\schedule up [fill=lightgray] {
	{
		.78///,
		{1.5-.78-.37}//opacity=0/,
		.37/$\noexpand\hat{c}_1$//
	}/1/,
	{
		.30/$\noexpand\Check{c}_1$//,
		.26//opacity=0/,
		.94///
	}/1/
};
\node at (.5,-.1) {$m_1$};
\node at (1.5,-.1) {$m_2$};
}
		\caption{\Cref{step:pairMachinesMHwithJobs2}}
		\label{fig:pairMachinesMHwithJobs2}
		\end{subfigure}
		\hspace{1ex}
		\begin{subfigure}[b]{0.3\textwidth}
  \tikzset{external/export next=false}
		\tikz[xscale=0.7,yscale=3.2]{
\defaulthlines{2}
\schedule up [fill=lightgray] {
	{
		.78///,
		{1.5-.78-.37}//opacity=0/,
		.37/$\noexpand\Check{c}$//
	}/1/,
	{
		.40/$\noexpand\hat{c}$//,
		.38//opacity=0/,
		.72/$b$//
	}/1/
};
\node at (.5,-.1) {$m_1$};
\node at (1.5,-.1) {$m_2$};
}
		\caption{\Cref{step:HugeJobWithMediumJob1}}
		\label{fig:HugeJobWithMediumJob1}
		\end{subfigure}
		\hspace{1ex}
		\begin{subfigure}[b]{0.3\textwidth}
  \tikzset{external/export next=false}
		\tikz[xscale=0.7,yscale=3.2]{
\defaulthlines{3}
\schedule up [fill=lightgray] {
	{
		.78///,
		{1.5-.78-.37}//opacity=0/,
		.37/$\noexpand\Check{c}_1$//
	}/1/,
	{
		.40/$\noexpand\Check{c}_2$//,
		.32//opacity=0/,
		.78///
	}/1/,
	{
		.52/$\noexpand\hat{c}_1$//,
		.51//opacity=0/,
		.47/$\noexpand\hat{c}_2$//
	}/1/
};
\node at (.5,-.1) {$m_1$};
\node at (1.5,-.1) {$m_2$};
\node at (2.5,-.1) {$m_3$};
}
		\caption{\Cref{step:HugeJobAndHugeClass}}
		\label{fig:HugeJobAndHugeClass}
		\end{subfigure}
		\caption{Examples for \Cref{step:pairMachinesMHwithJobs2}, \Cref{step:HugeJobWithMediumJob1}, and \Cref{step:HugeJobAndHugeClass}}
	\end{figure}
\end{stepiii}
\begin{claim*}
    After this step either all jobs are scheduled feasibly or one of the following two conditions holds: $\abs{\bar{M}_H} =1$ and $\abs{\resClasses_{\leq 1/2}} = 0$,
    or $\abs{\bar{M}_H} \geq 2$ and $\abs{\resClasses \setminus (\classes_B \cup \resClasses_{\geq3/4})} = 0$.
    Furthermore, the partial schedule is feasible, all scheduled jobs are finished by $3/2$ and all machines not in $\bar{M}_H$ are either closed or empty and for the residual empty machines $\bar{M}_u$ it holds that
    \[\abs{\bar{M}_u} \geq \max\set{\abs{\resClasses_B}, \ceil{(\abs{\resClasses_B} + \abs{\resClasses_{\geq3/4} \setminus  \classes_B})/2}} \text{\ \ and \ \ } p(\bar{M}_H) + p(\resClasses) \leq \abs{\bar{M}_u} + \abs{\bar{M}_H}.\]
\end{claim*}
\begin{proof}
Note that we have not opened any other machine in this step, hence the lower bound on $\abs{\bar{M}_u}$ is trivially true.
The total load of $m_1,m_2$ and $c$ is at least $2 \cdot 3/4 + 1/2 = 2$.
Hence in each of these steps, we close two machines but also reduce the residual load by at least $2$, proving the upper bound on the residual load.
Hence, if we have used \ANoHuge on the residual instance, by \cref{lem:ANoHuge} it generates a schedule with makespan at most $3/2$ because $p(\resClasses) \leq |\bar{M}_u|$ at that point and no class was scheduled partially.
\end{proof}
\begin{stepiii}\label{step:OneHugeMach}
    If $\abs{M_H} = 1$: 
	\begin{itemize}	
	\item If there exists $c \in \resClasses \setminus \classes_B$:
		Choose $c'  \in \set{\hat{c},\Check{c}}$ with $c' \in (1/4,1/2]$. 
		Schedule $c'$ on the last open machine $m_0$.
		Use \ANoHuge to schedule the residual instance, including the job $c'' \in c \setminus c'$.
		"Rotate" the load on $m_0$, such that $c'$ does not overlap with $c''$.
	\item If $\resClasses \setminus \classes_B$ is empty:
	Assign all the residual classes to an individual machine.
	\end{itemize}
\end{stepiii}
\begin{claim*}
    After this step all jobs have been scheduled feasibly or
    $\abs{\bar{M}_H} \geq 2$ and 
    $\abs{\resClasses \setminus (\classes_B \cup \resClasses_{\geq3/4})} = 0$.
    Additionally the partial schedule is feasible, all scheduled jobs are finished by $3/2$ and for the residual empty machines $\bar{M}_u$ it holds that
    \[\abs{\bar{M}_u} \geq \max\set{\abs{\resClasses_B}, \ceil{(\abs{\resClasses_B} + \abs{\resClasses_{\geq3/4} \setminus  \classes_B})/2}} \text{\ \ and \ \ } p(\bar{M}_H) + p(\resClasses) \leq \abs{\bar{M}_u} + \abs{\bar{M}_H}.\]
\end{claim*}    
\begin{proof}   
First consider the case that $\resClasses \setminus \classes_B \not = \emptyset$. 
We know that such a required $c'$ exists. 
This is given by \Cref{lem:split-classes2} and \cref{lem:split-classes} for classes in $\resClasses_{(1/2,3/4)} \setminus \classes_B$ and $\resClasses_{\geq 3/4}\setminus \classes_B$, respectively.
The residual instance will be scheduled with the algorithm for instances without huge jobs. 
This generates a feasible schedule, since all machines that are non empty before the start of this subroutine have load at least $1$. Furthermore only class $c$ is partially scheduled and the load on $m_0$ can be rotated, such that $\hat{c}$ and $\check{c}$ do not overlap.
This rotation is always possible: 
The residual job $c''$ of the class is smaller than $3/4$ and will therefore be scheduled consecutively by \ANoHuge.
No matter when $c''$ gets scheduled, before or after it will be a large enough gap that fits $c'$, since $c'$ got scheduled starting at $0$ (or ending at $3/2$ after the rotation). 
Therefore, a correct rotation is possible.

In the case that $\resClasses \setminus \classes_B = \emptyset$,
we put the residual classes to individual machines. This is possible since only classes in $\classes_B$ are left and the number of residual machines is at least $\abs{\resClasses_B}$.
\end{proof}	

\begin{stepiii}
\label{step:HugeJobWithMediumJob1}
    While  $\abs{\bar{M}_H} \geq 1$, $\abs{\resClasses_{(1/2,3/4)} \cap\classes_B} \geq 1$, and $\abs{\resClasses_{\geq3/4}} \geq 1$:
	Take $m_1 \in \bar{M}_H$, $b \in \resClasses_{(1/2,3/4)} \cap\classes_B$ and $c  \in \resClasses_{\geq 3/4}$.
	Open one new machine $m_2$.
	Schedule $\Check{c}$ on $m_1$ such that it ends at $3/2$.
	Schedule $\hat{c}$ on $m_2$ such that it starts at $0$ and ends before $3/4$.
	Schedule $b$ at $m_2$ such that it ends at $3/2$, see \cref{fig:HugeJobWithMediumJob1}.
	Close both machines.
	If $\abs{\bar{M}_H} = 0$, continue with \ANoHuge on the residual instance.
\end{stepiii}
\begin{claim*}
    After this step all jobs are scheduled feasibly or 
    $\abs{\bar{M}_H} \geq 1$ and $\abs{\resClasses \setminus (\resClasses_{(1/2,3/4)} \cap\classes_B)}=0$ or $\abs{\resClasses \setminus \resClasses_{\geq3/4} }=0$.
    Furthermore, all jobs are scheduled feasibly in this step, all scheduled jobs are finished by $3/2$ and for the residual empty machines $\bar{M}_u$ it holds that
    \[\abs{\bar{M}_u} \geq \max\set{\abs{\resClasses_B}, \ceil{(\abs{\resClasses_B} + \abs{\resClasses_{\geq3/4} \setminus  \classes_B})/2}} \text{\ \ and \ \ } p(\bar{M}_H) + p(\resClasses) \leq \abs{\bar{M}_u} + \abs{\bar{M}_H}.\]
\end{claim*}
\begin{proof}
	Note that we open one more machine in each iteration of the step. 
	This machine has to exist, since in each of these steps, we have $\abs{\resClasses_B} + \abs{\resClasses_{\geq3/4} \setminus  \classes_B} \geq 2$.
	In this step, we have reduced $\abs{\resClasses_B} + \abs{\resClasses_{\geq3/4} \setminus  \classes_B}$ by $2$ and $\abs{\resClasses_B}$ at least by $1$.
	Hence there still have to exist $\max\set{\abs{\resClasses_B}, \ceil{(\abs{\resClasses_B} + \abs{\resClasses_{\geq3/4} \setminus \classes_B})/2}}$ unused machines.
	In each iteration of this step, we close two machines but also reduce the residual load by at least $3/4 + 1/2 +3/4 = 2$, proving the upper bound on the residual load.
	
Hence, if we have used \ANoHuge on the residual instance, by \cref{lem:ANoHuge} it generates a schedule with makespan at most $3/2$ because $p(\resClasses) \leq |\bar{M}_u|$ at that point and no class was scheduled partially.
\end{proof}

\begin{stepiii}
    If $\abs{\resClasses_{(1/2,3/4)} \cap\classes_B} \not =0$, open one machine for each of these classes.
\end{stepiii}
\begin{claim*}
    After this step all jobs are feasibly scheduled or it holds that $\abs{\bar{M}_H} \geq 1$ and 
    all residual classes have a total processing time of at least $3/4$, all scheduled jobs are finished by $3/2$ and for the residual empty machines $\bar{M}_u$ it holds that
    \[\abs{\bar{M}_u} \geq \max\set{\abs{\resClasses_B}, \ceil{(\abs{\resClasses_B} + \abs{\resClasses_{\geq3/4} \setminus  \classes_B})/2}} \text{\ \ and \ \ } p(\bar{M}_H) + p(\resClasses) \leq \abs{\bar{M}_u} + \abs{\bar{M}_H}.\]
\end{claim*}	
\begin{proof}
Note that if $\abs{\resClasses_{(1/2,3/4)} \cap\classes_B} \not =0$ this set is the only set containing unscheduled classes. 
Since we still have $\abs{\bar{M}_u} \geq \abs{\resClasses_B}$ unused machines, we can feasibly open one machine for each of these classes and are done.
\end{proof}
\begin{stepiii}
\label{step:HugeJobAndHugeClass}
    While $\abs{\bar{M}_H} \geq 2$ and $\abs{\resClasses_{\geq3/4}} \geq 2$: Take $m_1,m_2 \in \bar{M}_H$, $c_1,c_2 \in \resClasses_{\geq3/4}$ starting with the classes in $\resClasses_B$.
	Shift all jobs on $m_2$ to the top, such that the last job ends at $3/2$.
	Schedule $\Check{c}_{1}$ on $m_1$ as one block that ends at $3/2$ and all the jobs from $\Check{c}_{2}$ as one block on $m_2$ that starts at $0$. 
	Open one more machine $m_3$ where we start the jobs from $\hat{c}_{1}$ at $0$ and let the last job from $\hat{c}_{2}$ end at $3/2$, see \cref{fig:HugeJobAndHugeClass}.
	Close all three machines $m_1, m_2, m_3$.
	If $\abs{\bar{M}_H} = 0$, continue with \ANoHuge on the residual instance.

\end{stepiii}
\begin{claim*}
    After this step all jobs are scheduled or it holds that 
    either $\abs{\bar{M}_H} = 1$ or $\abs{\resClasses_{\geq3/4}} \leq 1$.
    Furthermore $\abs{\resClasses \setminus \resClasses_{\geq3/4}}=0$ and in each iteration the partial schedule is feasible, all scheduled jobs are finished by $3/2$ and for the residual empty machines $\bar{M}_u$ it holds that
    \[\abs{\bar{M}_u} \geq \max\set{\abs{\resClasses_B}, \ceil{(\abs{\resClasses_B} + \abs{\resClasses_{\geq3/4} \setminus  \classes_B})/2}} \text{\ \ and \ \ } p(\bar{M}_H) + p(\resClasses) \leq \abs{\bar{M}_u} + \abs{\bar{M}_H}.\]
\end{claim*}
\begin{proof}
In each of these steps, no two jobs from the same class overlap.
Note that we open one more machine in each iteration of the step. 
This machine has to exist, since $\abs{\resClasses_B} + \abs{\resClasses_{\geq3/4} \setminus \classes_B} \geq 2$ before this step.
Since all remaining classes have load at least $3/4$ it holds that $\abs{\resClasses_B} = \abs{\resClasses_B} \cap \resClasses_{\geq3/4}$.
Therefore, if $\abs{\resClasses_B} \neq 0$ we used at least one such class and reduced $\abs{\resClasses_B}$ by at least $1$.
We also reduced $\abs{\resClasses_B} + \abs{\resClasses_{\geq3/4} \setminus \classes_B}$ by 2 and hence there are still $\max\set{\abs{\resClasses_B}, \ceil{(\abs{\resClasses_B} + \abs{\resClasses_{\geq3/4} \setminus \classes_B})/2}}$ unused machines.
Lastly, in each of these steps, we close three machines but also reduce the residual load by at least $3$ (4 classes with processing time at least $3/4$ each), proving the upper bound on the residual load.

Hence, if we have used \ANoHuge on the residual instance, by \cref{lem:ANoHuge} it generates a schedule with makespan at most $3/2$ because $p(\resClasses) \leq |\bar{M}_u|$ at that point and no class was scheduled partially.
\end{proof}

\begin{stepiii}
    If $\abs{\bar{M}_H} \geq 2$ or $\abs{\resClasses \setminus \classes_B} = 0$, open one machine for each of the remaining classes.
\end{stepiii}
\begin{claim*}
    After this step either all jobs are scheduled or it holds that $\abs{\bar{M}_H} = 1$, $\abs{\resClasses \setminus \resClasses_{\geq3/4}}=0$, $\resClasses \setminus \classes_B\neq\emptyset$, the partial schedule is feasible, all scheduled jobs are finished by $3/2$, and for the residual empty machines $\bar{M}_u$ it holds that
    \[\abs{\bar{M}_u} \geq \max\set{\abs{\resClasses_B}, \ceil{(\abs{\resClasses_B} + \abs{\resClasses_{\geq3/4} \setminus  \classes_B})/2}} \text{\ \ and \ \ } p(\bar{M}_H) + p(\resClasses) \leq \abs{\bar{M}_u} + \abs{\bar{M}_H}.\]
\end{claim*}
\begin{proof}
Due to the previous steps $\abs{\bar{M}_H} \geq 2$ implies $\abs{\resClasses_{\geq3/4}} \leq 1$, and if $\abs{\resClasses_{\geq3/4}} = 0$ we have already scheduled all the jobs.
Otherwise if $\abs{\resClasses_{\geq3/4}} = 1$, there has to be one unused machine because there are at least $\max\set{\abs{\resClasses_B}, \ceil{(\abs{\resClasses_B} + \abs{\resClasses_{\geq3/4} \setminus \classes_B})/2}}$ unused machines.

If, on the other hand, $\abs{\resClasses \setminus \classes_B} = 0$, we still have $\abs{\bar{M}_u} \geq \abs{\resClasses_B}$ unused machines, we can feasibly open one machine for each of these classes and are done.
\end{proof}

\begin{stepiii}
    If $\abs{\bar{M}_H} = 1$, take $c \in \resClasses \setminus \classes_B$.
	It holds that $p(c) \geq 3/4$ and there exists $c' \in \set{\hat{c},\Check{c}}$ with $p(c') \in (1/4,1/2]$.
	Place $c'$ on $m_0 \in \bar{M}_H$. 
	Continue with \ANoHuge to schedule the residual jobs including the job  $c'' \in c \setminus \set{c'}$.
	Rotate the load on $m_0$ such that $c'$ does not overlap with $c''$.
\end{stepiii}
\begin{claim*}
    After this step all jobs are scheduled feasibly and all scheduled jobs are finished by $3/2$.
\end{claim*}

\begin{proof}
The algorithm for instances without huge jobs, can feasibly finish the schedule with makespan at most $3/2$ by \cref{lem:ANoHuge} since $p(\resClasses) \leq |\bar{M}_u|$ at that point, all non empty machines have load at least $1$ on average, and every class except $c$ is either fully scheduled, or not scheduled at all.
Like in \cref{step:OneHugeMach}, the rotation makes sure that there is no conflict within $c$.
\end{proof}

\begin{lemma}
Given any instance $I = (m,\classes)$ of \problem,  \AThreeHalf produces a feasible schedule with makespan at most $\frac{3}{2}\opt(I)$.
\end{lemma}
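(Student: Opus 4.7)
The plan is to scale by $1/T$ using Lemma~\ref{lem:scaleValueT}, so that $T \le \opt(I)$ and the rescaled instance satisfies the partition-size bound of Lemma~\ref{lem:available machines}. It then suffices to show that \AThreeHalf produces a feasible schedule of makespan at most $3/2$ on the rescaled instance, since multiplying all processing times by $T$ afterwards recovers a feasible schedule of makespan at most $\frac{3}{2}T \le \frac{3}{2}\opt(I)$ for the original input.

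Next, I would verify that the invariant stated before the algorithm description holds after Steps~1 and~2. Step~1 only coalesces jobs within single classes according to Lemmas~\ref{lem:split-classes} and~\ref{lem:split-classes2}; any feasible schedule of the coalesced instance yields a feasible schedule of the original by unpacking each coalesced block into its constituent jobs scheduled consecutively, which is why the first per-step claim is immediate. Step~2 opens one machine per class in $\classes_H$ and closes those that already reach load~$1$; combining this with Lemma~\ref{lem:available machines} gives both the machine-count bound on $|\bar{M}_u|$ and the load bound $p(\bar{M}_H)+p(\resClasses) \le |\bar{M}_u|+|\bar{M}_H|$ required by the invariant.

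The bulk of the proof is then a case chase over Steps~3 through~11. Each step's local claim, already proved, asserts that the partial schedule is feasible, that completion times are bounded by $3/2$, and that the invariant is preserved. What remains is to argue that the guard conditions of consecutive steps are mutually exclusive and jointly exhaustive, so that every execution path terminates with all jobs scheduled; this is a direct induction on the steps using the residual-cardinality bounds in the claims (for instance, after Step~4 either all jobs are placed, or $|\bar{M}_H|=1$ which is passed to Step~6, or $|\bar{M}_H|\ge 2$ with $\resClasses \setminus (\classes_B\cup\resClasses_{\ge 3/4})=\emptyset$ which feeds Steps~7--11). Whenever a step invokes \ANoHuge, I would check that the precondition ``no huge job remains'' holds (true because Step~2 removed all of $\classes_H$ and no subsequent step introduces huge jobs), and that the invariant together with the fact that every non-empty closed machine has load at least~$1$ implies $p(\resClasses)\le|\bar{M}_u|$ on the subinstance handed over; Lemma~\ref{lem:ANoHuge} then produces a schedule of makespan at most $3/2$ on the remaining open machines, which combines with the already-closed machines to a global schedule of makespan at most $3/2$.

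The main obstacle I expect is the rotation argument used in Steps~6 and~11, where a single class is split across the last open $\bar{M}_H$-machine $m_0$ and the subinstance fed to \ANoHuge. Here I would argue, mirroring the claim proofs already given, that the counterpart $c''$ is of total processing time at most $3/4$ and, by inspection of \ANoHuge, is scheduled as a contiguous block; since $c'$ on $m_0$ is placed either starting at $0$ or ending at $3/2$ and $p(m_0)\le 1$ excluding $c'$, cyclically rotating the load on $m_0$ by the offset of $c''$ keeps the makespan at $3/2$ while separating $c'$ and $c''$ in time and thus avoiding a resource conflict. Once this is in place, assembling the step-claims along any path through the algorithm yields the desired bound.
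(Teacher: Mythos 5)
Your proposal is correct and follows essentially the same route as the paper: the paper's own proof of this lemma simply aggregates the per-step claims (feasibility, completion by $3/2$, preservation of the invariant) established during the description of \AThreeHalf, together with the scaling via Lemma~\ref{lem:scaleValueT} and the guarantee of Lemma~\ref{lem:ANoHuge} whenever the subroutine is invoked. Your additional checks (exhaustiveness of the guard conditions, the precondition $p(\resClasses)\le\abs{\bar{M}_u}$ before calling \ANoHuge, and the rotation argument for the split class on $m_0$) are exactly the points the paper handles inside the individual claim proofs, so nothing is missing.
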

\begin{proof}
This is a direct consequence when considering the state after each step of the algorithm.
\end{proof}

The existence and correctness of algorithm \AThreeHalf proofs \cref{thm:ThreeHalfAlgorithm}.

\section{Approximation Schemes}\label{sec:schemes}

In this section, we consider approximation schemes for the problem at hand.
An approximation scheme is an algorithm which is given a parameter $\eps > 0$ and an instance $I$ and computes a feasible solution to $I$ whose objective value is guarantied to differ from the optimum $\opt(I)$ by at most $\eps\opt(I)$.
Such an algorithm is called an efficient polynomial-time approximation scheme (EPTAS), if its running time is $f(\eps)\cdot\abs{I}^{\Oh(1)}$ for some computable function $f$.
We present two results:
\begin{theorem}
There exists an EPTAS for \problem if either the number $m$ of machines is constant or $\floor{\eps m}$ additional machines may be used, i.e., some resource augmentation is allowed. 
\end{theorem}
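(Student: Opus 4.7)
The plan is to follow the standard EPTAS blueprint while leveraging the N-fold IP framework of \cite{JKMR21} for the harder general case. First I would binary-search for the optimum makespan $T$ over a polynomial-size set of candidates (e.g., $T\in\{p(\jobs)/m,\ldots,2p(\jobs)/m\}$ refined to multiples of $\eps/n$) and rescale so $T=1$; losing an additive $\eps$ here is harmless. Partition jobs into \emph{large} ($p_j>\eps$) and \emph{small} ($p_j\leq\eps$), and round every large size to the nearest multiple of $\eps^2$, giving at most $K=\Oh(1/\eps^2)$ distinct large sizes and at most $\Oh(1/\eps)$ large jobs per machine in any $(1+\eps)$-schedule. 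For each class $c$ the small jobs contribute only an aggregate load $s_c=\sum_{j\in c,\,p_j\leq\eps}p_j$ that will eventually be re-inserted via a greedy end-game.

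A machine is then described by a \emph{configuration} $k\in\confs$: a multiset of large rounded sizes plus an allotment of aggregated small load, with total length at most $1+\eps$. Standard counting gives $|\confs|\leq (1/\eps)^{\Oh(1/\eps)}$. Classes collapse into a constant number of \emph{class types} $\tau\in\mathcal{T}$, where a type is the multiset of rounded large sizes together with the rounded value of $s_c$; again $|\mathcal{T}|=f(1/\eps)$. The resource constraint is now: for each class $c$ of type $\tau$, the large jobs of $c$ and its small allotment must be distributed across \emph{distinct} machines so that no two of $c$'s large jobs share a machine-configuration slot in overlapping time windows.

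For the constant-$m$ case I would introduce variables $x_{k,\tau,c}\in\{0,1\}$ indicating that class $c$ of type $\tau$ uses configuration $k$ on some particular machine, together with per-machine choice variables $y_{i,k}\in\{0,1\}$. Because $m$ and $|\confs|$ are constant, this IP has $\Oh(|\classes|)$ variables but only constant block width, so after aggregating identical classes into counts $n_\tau$ it reduces to an ILP of constant dimension, solvable in time $f(m,1/\eps)$ by Lenstra. Feasibility of the continuous relaxation implies a schedule of makespan $1+\Oh(\eps)$ after distributing small jobs greedily into the slack.

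For the general case with $\floor{(1+\eps)m}$ machines, the IP above is no longer of constant dimension in $m$, so I would cast it as an N-fold IP following \cite{JKMR21}: one \emph{brick} per class type $\tau$, with variables counting, for each configuration $k$ and each internal assignment pattern of $\tau$'s large jobs to slots of $k$, how many classes of type $\tau$ use that pattern; the brick-local constraints ensure each such class is fully covered, and the global linking constraints cap, for every configuration $k$, the total number of machines using $k$ by the augmented machine budget $\floor{(1+\eps)m}$. Brick dimension and the number of linking rows are bounded by $f(1/\eps)$, so the recent N-fold solvers run in $f(1/\eps)\cdot\poly(|I|)$ time. The hardest step and the reason resource augmentation appears here is handling small jobs of the \emph{same} class: two small jobs of one class still conflict, but in an N-fold formulation we can only control their aggregate load. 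The extra $\floor{\eps m}$ machines provide exactly the slack needed to serialize each class's small-job stream across machines without blowing the makespan, by a simple end-game that processes small jobs in class order and shifts at most one small job per machine into the added capacity; the additional makespan incurred is bounded by $\Oh(\eps)$, which I absorb into the guessed $T$. Removing the augmentation would require a sharper structural understanding of small-job conflicts and is left open.
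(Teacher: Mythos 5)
Your IP never encodes time, and that is the crux of \problem. A configuration in your sense is a multiset of rounded large sizes plus an aggregate small allotment, so feasibility of your program (whether solved by Lenstra for constant $m$ or as an N-fold IP in general) says nothing about \emph{when} the jobs of a class run on their respective machines; the defining constraint of \problem is temporal and cross-machine, not an assignment constraint. The phrase ``no two of $c$'s large jobs share a machine-configuration slot in overlapping time windows'' presupposes that slots have fixed time windows, which your configurations do not provide. The paper resolves exactly this point by first proving that one may restrict attention to $\eps\delta T$-layered schedules (every start time a multiple of $\eps\delta T$), via a container-stretching argument for big jobs and a flow-integrality argument that redistributes per-class small load into placeholder slots, and only then sets up the IP: configurations are conflict-free selections of time windows $(\ell,p)$, and a per-class, per-layer constraint (at most one reserved window of a class intersecting any layer) guarantees that same-class jobs never overlap. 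Without this (or an equivalent) discretization of start times, a feasible solution of your IP does not yield a feasible schedule; and your greedy ``end-game'' for small jobs fails for the same reason, since a single class may carry small load close to $T$ whose pieces must be serialized in time across machines, which cannot be repaired by shifting one small job per machine.

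Your diagnosis of why resource augmentation appears is also not the actual obstruction. In the paper, the extra $\floor{\eps m}$ machines are needed for \emph{medium} jobs: a pigeonhole choice of $\delta\in\{\eps,\eps^2,\dots\}$ creates a medium scale $(\mu T,\delta T]$ with $\mu=\eps^2\delta$ whose total load is at most $\eps^2 mT$, and the (fewer than $\eps m$) classes carrying more than $\eps T$ of medium load are removed wholesale and parked one per extra machine; when $m$ is constant the total medium load is at most $\eps T$ and no augmentation is needed. Your two-tier classification (large versus small at the fixed threshold $\eps$, with no instance-dependent $\delta$) cannot express this step, and the gap between the scales is also what makes the reinsertion of small jobs work (after stretching, each placeholder slot grows by $\mu T$, enough to absorb the original small jobs of its class). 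So beyond the missing layering idea, the structural simplification that actually drives both the constant-$m$ case and the augmented general case is absent from your outline.
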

To achieve these results, we follow a framework that was introduced in \cite{JKMR21} and also used in~\cite{DBLP:conf/spaa/JansenLM20}.
In particular, we consider a simplified version of the problem and prove the existence of a certain well-structured solution with only bounded loss in the objective compared to an optimal solution.
The problem of finding such a solution then can be formulated as an integer program (IP) of a particular form.
This IP can be solved efficiently using n-fold integer programming algorithms.
Furthermore, we guarantee that the solution for the simplified problem can be used to derive a solution for the original one with only little loss in the objective value.
The main challenge lies in the design of the well-structured solution and the proof of its existence.
This also causes the limitations of our result:
A certain group of jobs may cause problems in the respective construction and to deal with them we either use a more fine-grained approach, yielding a polynomial running time if $m$ is constant, or place the respective jobs on (few) additional machines using resource augmentation.

\subsection{Simplification}

We use the standard technique (see \cite{DBLP:journals/jacm/HochbaumS87}) of applying a binary search framework to acquire a makespan guess $T$.
The goal is then to either find a schedule of length $(1+\Oh(\eps)) T$ or correctly report that no schedule of length $T$ exists.
We introduce parameters $\delta$ and $\mu$ and call jobs $j$ \emph{big}, \emph{medium}, or \emph{small}, if $p_j\in(\delta T, T]$, $p_j\in(\mu T, \delta T]$, or $p_j\in(0, \mu T]$, respectively. 
Furthermore, we assume $\eps < 0.5$.

\subparagraph{Choosing the Parameters.}

We set $\mu = \eps^2 \delta$ and choose $\delta$ depending on the instance and on whether we consider the case with a constant number of machines or not.
If $m$ is part of the input, we choose $\delta\in\Sett{\eps,\eps^2,\dots,\eps^{2/\eps^2}}$ such that the following two conditions hold:
\begin{enumerate}
\item The overall size of jobs $j$ with size $p_j \in (\mu T,\delta T]$ is at most $\eps^2 mT$.
\item The overall size of jobs $j$ with size $p_j \leq \delta T$ from classes in which these jobs have overall size in $(\mu T,\delta T]$ is at most $\eps^2 mT$.
\end{enumerate}
If, on the other hand, $m$ is fixed, we choose $\delta\in\Sett{\eps,\eps^2,\dots,\eps^{2m/\eps}}$ such that:
\begin{enumerate}
\item The overall size of jobs $j$ with size $p_j \in (\mu T,\delta T]$ is at most $\eps T$.
\item The overall size of jobs $j$ with size $p_j \leq \delta T$ from classes in which these jobs have overall size in $(\mu T,\delta T]$ is at most $\eps T$.
\end{enumerate}
Such a choice is possible in both cases due to the pigeonhole principle.

\subparagraph{Removing the Medium Jobs for fixed $m$.}

Let $I$ be the input instance and $I_1$ the instance we get if we remove all the medium jobs.
\begin{lemma}\label{lem:PTAS_medium_m_constant}
Let $m$ be a constant.
If there is a schedule with makespan $T'$ for $I$, then there is also a schedule with makespan $T'$ for $I_1$; and if there is a schedule with makespan $T'$ for $I_1$, then there is also a schedule with makespan $T' + \eps T$ for $I$.
\end{lemma}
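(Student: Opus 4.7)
The forward direction is immediate: given any feasible schedule of $I$ with makespan $T'$, I simply delete all medium jobs. Deleting jobs can neither introduce a machine overlap nor a class conflict, so the restriction is a feasible schedule of $I_1$, and its makespan is at most $T'$.

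For the reverse direction, my plan is to take a feasible schedule $S$ of $I_1$ of makespan $T'$ and append all medium jobs on a single designated machine, say machine~$1$, consecutively in arbitrary order starting at time $T'$. By the pigeonhole choice of $\delta$ (condition~1 in the fixed-$m$ case), the total processing time of medium jobs is at most $\eps T$, so the resulting makespan is bounded by $T' + \eps T$. To verify feasibility I would check the three types of potential conflicts: two appended medium jobs on machine~$1$ do not overlap because they are placed sequentially; an appended medium job and any job of $I_1$ do not overlap on machine~$1$ since the latter ends by time $T'$, which is the earliest start of the former; and for any class $c$, any non-medium jobs of $c$ lie in $S$ and hence finish by $T'$, while the medium jobs of $c$ are placed consecutively on machine~$1$ after $T'$, so no two jobs of the same class ever overlap.

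The main substance of the argument is already packed into the parameter selection for $\delta$, which pins the total volume of medium jobs to at most $\eps T$; the construction itself is a one-line concatenation, and I do not expect any genuine obstacle. The appending step works for any number of machines, so the hypothesis that $m$ is constant is not used in this lemma directly; it enters only through the pigeonhole argument that selects $\delta$ from the longer sequence $\Sett{\eps,\eps^2,\dots,\eps^{2m/\eps}}$ in order to secure the sharper bound $\eps T$ of condition~1 (as opposed to the weaker $\eps^2 m T$ bound available in the general-$m$ setting).
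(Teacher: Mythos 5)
Your proposal is correct and matches the paper's argument: the forward direction is trivial by deletion, and for the reverse direction the paper likewise appends all medium jobs (whose total size is at most $\eps T$ by the choice of $\delta$) at the end of the schedule on a single machine. Your explicit feasibility checks merely spell out what the paper leaves implicit.
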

\begin{proof}
The first implication is obvious.
For the other direction, note that the overall size of the medium jobs is upper bounded by $\eps T$ in this case and hence we can place all of them at the end of the schedule on some arbitrary machine.
\end{proof}

\subparagraph{Removing the Medium Jobs for $m$ Part of Input.} 

Let $I$ be the input instance and $I_1$ the instance we get if we remove all the medium jobs from classes including at most $\eps T$ medium load and the entire classes containing more than $\eps T$ medium load.

\begin{lemma}\label{lem:PTAS_medium_m_part_of_input}
Let $m$ be part of the input.
If there is a schedule with makespan $T'$ for $I$, then there is also a schedule with makespan $T'$ for $I_1$; and if there is a schedule with makespan $T'$ for $I_1$, then there is also a schedule with makespan $T' + \eps T$ for $I$ using at most $\floor{\eps m}$ additional machines.
\end{lemma}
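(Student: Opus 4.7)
The plan is to split the equivalence into its two directions. The forward direction is immediate: restricting a feasible schedule for $I$ to the jobs that remain in $I_1$ yields a valid $I_1$-schedule of the same makespan, since all resource-class and machine constraints are preserved by restriction. For the reverse direction, the jobs that were removed to form $I_1$ fall into two groups---(a) the medium jobs from classes whose total medium load is at most $\eps T$ (such classes retain their big and small jobs in $I_1$), and (b) all jobs of classes whose total medium load exceeds $\eps T$ (such classes vanish entirely from $I_1$). I plan to route the group-(b) jobs onto the $\floor{\eps m}$ additional machines and to append the group-(a) jobs to the existing $I_1$-schedule in the time window $[T', T'+\eps T]$ on the original $m$ machines.

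For group (b), condition~(1) on the choice of $\delta$ bounds the total medium load in $I$ by $\eps^2 mT$. Since each group-(b) class contributes more than $\eps T$ of medium load, there are strictly fewer than $\eps m$ such classes, hence at most $\floor{\eps m}$. I would place each such class as a whole on its own additional machine, scheduling its jobs consecutively; since $T$ is a valid makespan guess in the surrounding binary-search framework, each class has total load at most $T \leq T' + \eps T$, so the makespan on the new machines is within budget.

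For group (a), I would schedule all medium jobs of each light-medium class consecutively on a single machine inside $[T', T'+\eps T]$. This avoids intra-class conflicts (the remaining jobs of the same class live in $[0, T']$ in the given $I_1$-schedule) and inter-class conflicts (each machine hosts at most one class in this window). The remaining task, which I view as the main obstacle, is the bin-packing feasibility: the per-class blocks have sizes in $(0, \eps T]$, total at most $\eps^2 mT$ by condition~(1), and must fit into $m$ bins of capacity $\eps T$. I would verify this by first placing each block of size larger than $\eps T / 2$ on its own machine---there are at most $\eps^2 mT / (\eps T/2) = 2\eps m \leq m$ such blocks since $\eps < 1/2$---and then packing the remaining smaller blocks greedily on any machine with enough residual capacity, arguing that failure would force the total remaining load to exceed $\eps^2 mT$, a contradiction. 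A mild additional subtlety is checking that $T' + \eps T$ is large enough to accommodate each removed class's total load; this follows in context from $T' \geq (1-\eps)T$, which holds whenever the lemma is invoked during the binary search.
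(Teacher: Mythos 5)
Your proof is correct and takes essentially the same route as the paper's: restriction for the first direction, the counting argument $\abs{\classes'}\eps T < \eps^2 mT$ to place the heavy-medium classes whole on at most $\floor{\eps m}$ extra machines, and a greedy packing of the glued per-class medium blocks into the window $[T',T'+\eps T]$ justified by the $\eps^2 mT$ bound and $\eps<1/2$ (the paper orders the blocks decreasingly and fills machines sequentially, arguing each closed machine carries load above $\eps T/2$, while you split into blocks above and below $\eps T/2$ -- the same counting in a slightly different guise). One cosmetic remark: your parenthetical ``each machine hosts at most one class in this window'' is neither maintained by your packing nor needed, since distinct classes never conflict in this model; only the intra-class separation argument matters.
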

\begin{proof}
The first direction is again obvious. 
For the other direction, we first consider the medium jobs from classes including at most $\eps T$ medium load.
We again place these jobs at the end of the schedule.
In particular, they can be placed using the following greedy approach.
We always place all the respective jobs belonging to the same class on the same machine and hence we may glue them together, i.e., assume that each class only contains one job (of size at most $\eps T$).
The jobs are considered ordered decreasingly by size.
On the current machine, we place the jobs starting at time $T'$ one after another until the placement of the next job would result in a makespan greater than $T' + \eps T$ or until no job is left.
If there are jobs left, we continue on the next machine.
Note that due to the ordering of the jobs, we can guarantee that each machine on which we stopped placing jobs to avoid a makespan greater than $T' + \eps T$, we can guarantee that they did receive a load of at least $0.5 \eps T> \eps^2 T$ (using $\eps < 0.5$).
Since the overall load of the medium jobs is at most $\eps^2 m T$, all the considered jobs can be placed.

Next, we consider the classes including more than $\eps T$ medium load which were removed completely. 
Let $\classes'$ be the set of classes containing more than $\eps T$ medium load and let $p_m(c)$ the corresponding load.
Then $\abs{\classes'} \eps T < \sum_{c\in \classes'}p_m(c) \leq \eps^2 mT $ yielding $\abs{\classes'} < \eps m$.
Hence, we can place these classes on $\floor{\eps m}$ additional machines such that each machine receives exactly one class.
Since we place the entire classes, this cannot cause conflicts.
\end{proof}

\subparagraph{Removing Some Small Jobs.}

A $(T',L')$-schedule is a schedule with makespan at most $T'$ and at least $L'$ idle time throughout the schedule.
Let $I_2$ be the instance we get if we remove all the small jobs from classes in which these jobs have overall size of at most $\delta T$ from $I_1$.
Let $L$ be the overall size of the jobs removed in this step.
We obviously have:
\begin{lemma}\label{lem:PTAS_small_removal}
If there is a schedule with makespan $T'$ for $I_1$, then there is also a $(T',L)$-schedule for $I_2$.
\end{lemma}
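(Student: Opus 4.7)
The plan is to simply take the given schedule for $I_1$ and delete the jobs that were removed in the construction of $I_2$, leaving their slots empty. Since removing jobs from a valid schedule cannot create any resource or machine conflicts and cannot increase the makespan, the resulting schedule is a valid schedule for $I_2$ with makespan at most $T'$.

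The only remaining point is to check the idle-time lower bound. If $P$ denotes the total processing time of all jobs in $I_1$ and the schedule has makespan at most $T'$ on $m$ machines, its idle time is at least $mT' - P$. After removing jobs of total size exactly $L$ (by definition of $L$), the total processing time of the remaining schedule is $P - L$, while the makespan is still at most $T'$. Hence the idle time of the resulting schedule is at least
\[
mT' - (P - L) \;=\; (mT' - P) + L \;\geq\; L,
\]
which is precisely the $(T',L)$-schedule property.

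The only possible obstacle is a notational one, namely to make sure ``idle time'' is understood in the cumulative sense (summed over all machines up to time $T'$), which is clearly the intended meaning in this framework; under that reading the argument is immediate and the lemma is indeed ``obvious'' as the authors remark.
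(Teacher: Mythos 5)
Your proposal is correct and matches the paper's (implicit) argument: the paper states the lemma as obvious, and the intended reasoning is exactly what you wrote — delete the removed jobs from the schedule for $I_1$, which preserves feasibility and the makespan bound, and the freed space of total size $L$ guarantees at least $L$ idle time. Your accounting via $mT' - (P-L) \geq L$ is a fine way to make the idle-time bound explicit.
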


\subparagraph{Layered Schedule and Rounded Processing Times.}

Next, we will consider certain well-structured schedules called \emph{layered schedules}.
For some positive number $\xi$, we call a schedule $\xi$-layered, if the processing of each job starts at a multiple of $\xi$.
The time between two such multiples is called a \emph{layer} and the corresponding time on a single machine a \emph{slot}.

In the following we show, that such a layered schedule can be generated, by rounding processing times and fusing jobs.
Let $I_3$ be the instance we get if we round the processing times of the big jobs and replace the remaining small jobs with placeholders.
In particular, let $p'_j = \ceil{p_j/(\eps\delta T)}\eps\delta T$ be the rounded size for each big job $j$.
Furthermore, for each class $c\in\classes$ with $p_s(c)=\sum_{j\in c, p_j \leq \mu T}p_j >\delta T$, we remove the small jobs and introduce $\ceil{p_s(c)/(\eps\delta T)}$ new jobs with processing time $\eps\delta T$ each.
\begin{lemma}\label{lem:PTAS_rounded_sizes_layered}
If there is a $(T',L)$-schedule for $I_2$, then there is also an $\eps \delta T$-layered $((1+ 2 \eps)T',L)$-schedule for $I_3$.
\end{lemma}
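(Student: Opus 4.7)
My plan is to transform the $(T',L)$-schedule $\sigma$ for $I_2$ into an $\eps\delta T$-layered $((1+2\eps)T',L)$-schedule for $I_3$ by first stretching time in $\sigma$ by a factor of $(1+2\eps)$ and then embedding the $I_3$ jobs into layered slots within the stretched schedule. Stretching preserves all conflict constraints and produces a schedule of makespan $(1+2\eps)T'$ and idle time $(1+2\eps)L\geq L$.

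For each big job $j$ with $p_j>\delta T$, the stretched interval has length $(1+2\eps)p_j\geq p_j+2\eps\delta T$, which accommodates the rounded-up $I_3$ version of size $p'_j\leq p_j+\eps\delta T$ with at least $\eps\delta T$ of slack, enough to align it to the earliest $\eps\delta T$-multiple within the interval. For each class $c$ with $p_s(c)>\delta T$, the original small jobs of $c$ in $\sigma$ form a union of pairwise non-overlapping time intervals across machines of total length $p_s(c)$; after stretching, this footprint has total length $(1+2\eps)p_s(c)\geq p_s(c)+2\eps\delta T$, which fits the $k_c=\ceil{p_s(c)/(\eps\delta T)}$ replacement blocks of total length at most $p_s(c)+\eps\delta T$ with slack at least $\eps\delta T$ for layer alignment. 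Because the replacement blocks stay inside $c$'s original (stretched) footprint, the class non-overlap property is preserved automatically.

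Finally, the idle-time bound is easy: before stretching the total scheduling area is $mT'$ with at least $L$ idle time, and after stretching the area becomes $m(1+2\eps)T'$; since the total processing time in $I_3$ exceeds that of $I_2$ by at most $\eps\cdot(\text{total load in }I_2)\leq\eps mT'$ (rounding contributes at most $\eps\delta T$ per big job and replacement contributes at most $\eps\delta T$ per large-small class, both proportional to load over $\delta T$), the resulting idle time remains at least $L$.

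The main obstacle I anticipate is that on a single machine the stretched footprint of a class $c$ may consist of many sub-intervals, each individually too short to host an $\eps\delta T$-block. I would address this with a preliminary consolidation step on $\sigma$: within each machine, reorder jobs so that all small jobs of every large-small class occupy a single contiguous block on that machine. The reordering must be coordinated across machines so that no new cross-machine class conflict is introduced; one way is to handle classes one at a time, for each class $c$ shifting all of its small-job blocks simultaneously to a shared time window disjoint from any previously scheduled class-$c$ block. The $(1+2\eps)$-stretch then provides enough slack to absorb the induced displacements without violating the makespan bound, making the embedding above go through.
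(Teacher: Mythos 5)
The stretching step and the treatment of the big jobs in your proposal match the paper's argument, but the crucial part --- placing the $\lceil p_s(c)/(\eps\delta T)\rceil$ placeholder jobs of size $\eps\delta T$ --- has a genuine gap. You insist that the placeholders of class $c$ stay inside $c$'s own (stretched) footprint and you acknowledge that this footprint may be fragmented, but your consolidation fix does not work. First, it is insufficient even if it could be carried out: the small jobs have size at most $\mu T=\eps^2\delta T$, so a class with $p_s(c)$ just above $\delta T$ may have its small load spread over $\Omega(1/\eps)$ machines with strictly less than $\eps\delta T$ of load on each; after perfect per-machine consolidation and a $(1+2\eps)$-stretch every per-machine footprint still has length below $\eps\delta T$ (really below $(1+2\eps)\eps\delta T$, and in any case too short to contain a layer-aligned block of length $\eps\delta T$ together with the alignment slack), so not a single placeholder fits inside the footprint, let alone $\lceil p_s(c)/(\eps\delta T)\rceil$ of them. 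Second, the consolidation procedure itself is unjustified: shifting all of class $c$'s small-job blocks "to a shared time window" displaces small jobs of other classes on the same machines, those displacements are not bounded by anything proportional to $\eps T'$ in your argument, and processing classes one at a time can destroy the contiguity achieved for earlier classes; no termination or feasibility argument is given, and this is exactly where the difficulty of the lemma lies.

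The paper circumvents this by \emph{not} confining a class's placeholders to its own footprint. It only records, per layer $\ell$, how many slots' worth of small load (of any class) the layer carries ($k_\ell=\lceil p^*_\ell/(\eps\delta T)\rceil$) and, per class and layer, whether class $c$ has any small load there ($\gamma_{c,\ell}\in\{0,1\}$), and then sets up a bipartite flow network from classes to layers with these capacities. The fractional placement induced by the original schedule is a maximum flow, and flow integrality yields an integral assignment of placeholders to layers: at most one placeholder of each class per layer (so no intra-class conflicts, and no clash with a big job of the same class, since $\gamma_{c,\ell}=0$ in layers occupied by such a job), and at most $k_\ell$ placeholders per layer (so they fit into slots that contained only small load). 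Some such global rebalancing argument --- rather than a local repacking of each class within its own time intervals --- is the missing idea you would need to make your approach go through.
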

\begin{proof}
Consider a $(T',L)$-schedule for $I_2$ and picture each job in a container.
We stretch the schedule by a factor of $(1+ 2 \eps)$, and, in doing so, we move and stretch the containers correspondingly.
The jobs inside the container, however, are not stretched.
Note that each job can be moved inside its container without creating conflicts and we initially move each job to the bottom of its container.
Next, we move each \emph{big} job inside its container up such that it starts at the next layer border and increase its size to the rounded size.
Since each big job $j$ has size at least $\delta T$ its container has size at least $p_j + 2\eps\delta T$ and therefore each job remains inside its container.
At this point, each big job starts and ends at a layer border and has the correct rounded size.

Next, we consider the small jobs.
Let $S$ be the set of small jobs in $I_2$ and $n_c$ the number of placeholders we want to introduce for class $c$, i.e., $n_c = \ceil{p_s(c)/(\eps\delta T)}$.
For each class $c$, we grow the size of the small jobs inside their containers until the overall size of the jobs in $c\cap S$ is equal to $n_c \eps\delta T$. 
This is again possible since $p_s(c)> \delta T$.
We denote the changed processing time of a small job $j$ as $p^*_j$. 
Now, slots in which (parts of) some small job are placed can only contain small load due to the steps we performed for the big jobs.
We will use the initial distribution of the small jobs as a starting point to find a feasible placement of the placeholder jobs in the layers.
To do so, we first need some additional notation.
Let $\layers = \Sett[\big]{\ell\in\mathbb{Z}_{>0} \given (\ell-1) \eps\delta T \leq (1+2\eps)T'}$ be the set of layers;
$\lambda(j,\ell)$ the fraction of job $j\in S$ placed in layer $\ell\in\layers$ (i.e. $\sum_{\ell \in \layers} \lambda_{j,\ell} =1 $);
$p^*_{\ell} = \sum_{j\in S} \lambda_{j,\ell}p^*_j$ the small load placed in layer $\ell$; 
$k_{\ell} =\ceil{ p^*_{\ell} / (\eps\delta T) }$ the rounded up number of slots needed for the overall small load in $\ell$; 
and $\gamma_{c,\ell} = \ceil{\sum_{j\in c\cap S}\lambda_{j,\ell} }\in \Sett{0,1}$ a parameter indicating whether a small job belonging to class $c$ is scheduled in layer $\ell$.
Note that we have $\sum_{\ell \in \layers}\sum_{j\in S\cap c} \lambda(j,\ell) = n_c$
We now construct a flow network with integral capacities for which the given placement of the small jobs yields a maximum flow and utilize flow integrality to find a feasible placement for the placeholder.
A very similar approach was taken in \cite{JKMR21,DBLP:conf/spaa/JansenLM20}.
The flow network is defined as follows and visualized in \cref{fig:flow-network}.
\begin{itemize}
\item There is a source $\alpha$, a sink $\omega$, a node $u_c$ for each class $c$, and a node $v_\ell$ for each layer $\ell$.
\item The source is connected to each class node $u_c$ via an edge $(\alpha, u_c)$ with capacity $n_c$.
\item Each class node $u_c$ is connected to each node $v_{\ell}$ via an edge $(u_c,v_\ell)$ with capacity $\gamma_{c,\ell}$.
\item Each layer node $v_{\ell}$ is connected to the sink via an edge $(v_\ell, \omega)$ with capacity $k_{\ell}$.
\item A maximum flow $f$ is given by $f(\alpha, u_c) = n_c$, $f(u_c,v_\ell) = \sum_{j\in S\cap c} \lambda(j,\ell)$, and $f(v_\ell, \omega) = \sum_{j\in S} \lambda(j,\ell)$.
\end{itemize}
\begin{figure}
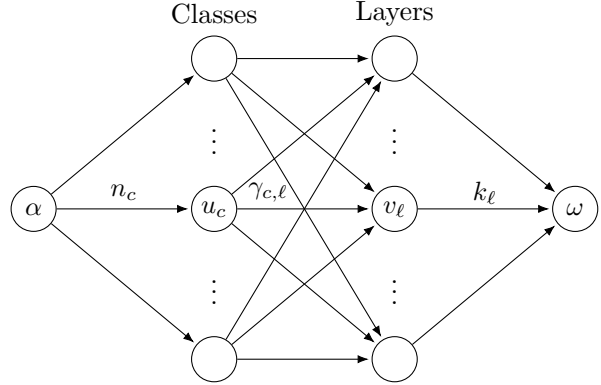

\centering
\tikz[>={Latex[length=1.5mm]}, shorten <= 0pt, shorten >= 1pt]{
\pgfmathsetmacro{\w}{2.4}
\pgfmathsetmacro{\h}{2}
\pgfmathsetmacro{\nw}{0.6}


\node[draw, circle] (source) at (0,0) {$\alpha$};

\node[draw, circle, inner sep=0, minimum size = \nw cm] (class1) at (\w,\h) {};
\node at (\w,0.5*\h) {$\vdots$};
\node[draw, circle, inner sep=0, minimum size = \nw cm] (class2) at (\w,0) {$u_c$};
\node at (\w,-0.5*\h) {$\vdots$};
\node[draw, circle, inner sep=0, minimum size = \nw cm] (class3) at (\w,-\h) {};

\node[draw, circle, inner sep=0, minimum size = \nw cm] (layer1) at (2*\w,\h) {};
\node at (2*\w,0.5*\h) {$\vdots$};
\node[draw, circle, inner sep=0, minimum size = \nw cm] (layer2) at (2*\w,0) {$v_{\ell}$};
\node at (2*\w,-0.5*\h) {$\vdots$};
\node[draw, circle, inner sep=0, minimum size = \nw cm] (layer3) at (2*\w,-\h) {};

\node[draw, circle, inner sep=0, minimum size = \nw cm] (sink) at (3*\w,0) {$\omega$};


\foreach \x in {1,2,3}
{
\draw[->] (source) -- (class\x);
}

\foreach \y in {1,2,3}
{
\foreach \x in {1,2,3}
{
\draw[->] (class\x) -- (layer\y);
}
}

\foreach \x in {1,2,3}
{
\draw[->] (layer\x) -- (sink);
}


\node at (\w,1.3*\h) {\large Classes};
\node at (2*\w,1.3*\h) {\large Layers};
\node at (0.5*\w,0.11*\h) {$n_c$};
\node at (1.3*\w,0.11*\h) {$\gamma_{c,\ell}$};
\node at (2.5*\w,0.11*\h) {$k_{\ell}$};

}
\caption{Flow network used in the construction of a layered schedule.}
\label{fig:flow-network}
\end{figure}
It is easy to check that $f$ is indeed a feasible maximum flow.
Moreover, all capacities are integral and hence there exists an integral flow $f'$ with the same value.
We now remove all the small jobs from the schedule and assign the placeholders into slots according to $f'$.
In particular, we assign a placeholder small job belonging to class $c$ to a slot in layer $\ell$ if $f'(c,\ell) = 1$.
Since slots that originally did receive some small jobs cannot contain any big load, the definition of $k_{\ell}$ together with flow conservation imply that there are enough slots to do so.
Furthermore, this cannot produce conflicts since each layer receives at most one placeholder job of each class and the presence of a big job of a class $c$ in a layer $\ell$ implies $\gamma_{c,\ell} = 0$ and hence prevents the placement of a placeholder.
Hence, we did construct a layered schedule with makespan at most $(1+2\eps)T'$ for $I_3$.
Finally note that the free space is preserved as well since jobs were only increased inside their containers after the stretching step. 
\end{proof}

\subparagraph{Reinserting the Small Jobs.}

Finally, we discuss the reinsertion of all of the small jobs as well as the use of the original sizes:
\begin{lemma}\label{lem:PTAS_small_reinsertion}
If an $\eps \delta T$-layered $(T',L)$-schedule for $I_3$, then there is also a schedule with makespan $(1+\eps)T' + \eps T$ for $I_1$.
\end{lemma}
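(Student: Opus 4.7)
I would prove the lemma by undoing the three simplifications used to obtain $I_3$ from $I_1$ in reverse order: (i) unround the big jobs, (ii) replace the placeholders with the real small jobs, and (iii) reinsert the small jobs that were dropped when forming $I_2$ from $I_1$.

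\textbf{Step 1 (Unround big jobs).} Replace every rounded size $p'_j = \lceil p_j/(\eps\delta T)\rceil\eps\delta T$ with the original size $p_j\leq p'_j$. Since jobs only shrink, the schedule remains a feasible $\eps\delta T$-layered schedule and its idle time only grows, so it is still a $(T',L)$-schedule for the partially unrounded instance.

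\textbf{Step 2 (Stretch and fill placeholders).} Stretch the schedule by a factor of $(1+\eps)$, making the makespan $(1+\eps)T'$ and each layer slot of length $(1+\eps)\eps\delta T$. For each class $c$ with $p_s(c)>\delta T$, remove its $n_c$ placeholders and greedily pack the original small jobs of $c$ into the corresponding stretched slots. As each small job has size at most $\mu T=\eps^2\delta T$, the greedy loss per slot is at most $\mu T$, so each used slot is filled with at least $(1+\eps)\eps\delta T-\mu T=\eps\delta T$ of load. Summed over the $n_c$ slots of $c$, this accommodates $\geq n_c\eps\delta T\geq p_s(c)$ load, that is, all of $c$'s small jobs, without creating class conflicts (each slot is in a distinct layer already reserved for $c$). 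The result is a valid schedule for $I_2$ of makespan at most $(1+\eps)T'$.

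\textbf{Step 3 (Reinsert the dropped small jobs).} For each class $c$ with $p_s(c)\leq\delta T$, its small jobs (each of size $\leq\mu T$, total size $\leq\delta T\leq\eps T$) were entirely removed. I would consolidate the small jobs of each such class into a single per-class block of size $\leq\eps T$ and append these blocks past time $(1+\eps)T'$ using list scheduling: always place the next block on the machine whose load above $(1+\eps)T'$ is currently smallest. Since no job of any class runs past $(1+\eps)T'$, this step introduces no class conflicts, and each block fits within the $\eps T$ budget.

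\textbf{Main obstacle.} A naive application of list scheduling above $(1+\eps)T'$ yields an extension of $L'/m+\max_c p_s(c)\leq L/m+\eps T$, which can exceed the $\eps T$ allowed by the claim when $L$ is large. The real difficulty is charging most of the reinserted load $L$ against the $L$ units of idle time already present in the $I_3$ schedule (plus the $\eps m T'$ of extra idle capacity created by the stretch) rather than against fresh time past $(1+\eps)T'$. I would resolve this by constructing -- in the spirit of the flow argument in the proof of \cref{lem:PTAS_rounded_sizes_layered} -- an integral assignment of the per-class blocks (split into layer-sized pieces if necessary) to idle layer-slots that are not already blocked by a job of the same class; any overflow that cannot be absorbed in $[0,(1+\eps)T']$ is handled by the $\eps T$ extension. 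The technically delicate part is showing that this assignment exists, which will follow from a max-flow/min-cut argument once the capacities (idle load per layer) and demands (per-class reinsertion load) are set up and verified to respect the class-conflict constraints.
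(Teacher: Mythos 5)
There is a genuine gap, and you have located it yourself: your Step 3 is the crux of the lemma and remains unproven. Your Steps 1--2 (unrounding the big jobs and stretching by $(1+\eps)$ so that each placeholder slot of size $\eps\delta T + \mu T$ absorbs at least $\eps\delta T$ of real small load of its class) match the paper's argument for the classes with $p_s(c) > \delta T$. But for the small jobs dropped when forming $I_2$, your list-scheduling idea fails exactly as you note ($L$ can be of order $mT$), and the flow/min-cut repair is only a sketch: you would have to show that enough idle layer-slots exist that are not blocked by a job of the same class, which is precisely the hard part and is not obviously true -- e.g.\ a class whose big jobs occupy most layers of the schedule leaves few admissible slots for its own small jobs, and splitting a per-class block across layers reintroduces conflict constraints your sketch does not control.

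The paper avoids this entirely by a case split on the \emph{per-class} small load that your proposal never makes, and this is where the choice of $\delta$ pays off. Classes whose small load lies in $(\mu T,\delta T]$ contribute total load at most $\eps^2 mT$ (resp.\ $\eps T$ for constant $m$) by condition (2) in the pigeonhole choice of $\delta$, so they can simply be appended after $T'$ with the greedy procedure of \cref{lem:PTAS_medium_m_part_of_input}, costing only $+\eps T$ -- no charging against the idle time $L$ is needed at all. Classes whose small load is at most $\mu T$ are absorbed into the $\mu T$ of slack that the $(1+\eps)$-stretch creates in every slot: if the class contains a big job, one fixed big job is shrunk by $\mu T$ and the class's small jobs are placed in the freed space (this also sidesteps intra-class conflicts, since the jobs sit inside their own class's window); otherwise the whole class, being at most $\mu T$, goes into a single free slot, whose capacity grew by $\mu T$, and conflicts are impossible because such a class has no big jobs. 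Without this decomposition -- in particular without invoking the parameter condition on $\delta$ and the big-job-shrinking trick -- your argument does not close, so as it stands the proof is incomplete at its decisive step.
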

\begin{proof}
We first discuss the insertion of the small jobs starting with the ones from classes in which small jobs have overall size in $(\mu T,\delta T]$.
Due to the choice of the medium jobs, the respective jobs have overall size of at most $ \eps^2 m T$ if $m$ is part of the input or $\eps T$ if $m$ is constant.
In the former, we can use a simple greedy procedure similar to the one in the proof of \cref{lem:PTAS_medium_m_part_of_input} to place them at the end of the schedule, and in the latter, we can just place all of them at the end of the schedule on an arbitrary machine.
In either case, the objective value grows by at most $\eps T$.

Next, we stretch the schedule by a factor of $(1+ \eps)$ increasing the sizes of the jobs, layers, and free space accordingly.
Now, each placeholder small job has a size of $(1+ \eps)\eps\delta T = \eps\delta T + \mu T$.
Hence, if we remove a placeholder belonging to a class $c$ and greedily place original small jobs of class $c$ into the respective slot, we can guarantee that at least a load of $\eps\delta T$ is placed (unless all of the small jobs of the class already have been placed). 
Finally, we place the jobs from classes in which the small jobs have an overall size of at most $\mu T$.
If there is a big job in the same class, we fix one of them and decrease its size by $\mu T$ (this is a smaller decrease than the increase due to the stretching step) and place the small jobs in the freed space.
Else, we can place them greedily in the free slots placing all the jobs of the same class in the same slot (again utilizing that each free slot was increased by $\mu T$ in the stretching step).
This cannot produce conflicts since these classes do not contain any big jobs.
In a last step we reduce the sizes of the big jobs to their original ones. 
\end{proof}

\subsection{Integer Program}

To find a layered schedule, we utilize an IP approach. 
The corresponding IP is essentially a \emph{module configuration IP} as introduced in \cite{JKMR21} but, for the sake of simplicity, we diverge from the notation in the respective work.
Considering \cref{lem:PTAS_medium_m_constant,lem:PTAS_medium_m_part_of_input,lem:PTAS_small_removal,lem:PTAS_rounded_sizes_layered}, we set $T' = (1+2\eps)T$ and search for an $\eps\delta T$-layered $(T',L)$-schedule for $I_3$.
We introduce some notation.
Let $\layers = \Sett[\big]{\ell\in\mathbb{Z}_{>0} \given (\ell-1) \eps\delta T \leq (1+2\eps)T'}$ be the set of layers, $P$ be the set of distinct processing times in $I_3$, and $n^{(c)}_p$ the number of jobs of size $p$ in class $c$ for each $p\in P$ and $c\in\classes$.
Furthermore, we define a (time) \emph{window} as a pair $(\ell,p) \in \layers \times P$ of a starting layer $\ell$ and a processing time $p$, and a \emph{configuration} $K$ as a selection of windows $\Sett{0,1}^{\windows}$ such that no two conflicting windows are chosen, i.e.,  $\sum_{(\ell, p) \in \windows_{\ell'}} K_{\ell, p} \leq 1$ for each layer $\ell'$.
The set of configurations is denoted as $\confs$, the set of windows as $\windows$, and the the set of windows intersecting layer $\ell$ as $\windows_{\ell}$.
A window $(\ell,p)$ intersects $p/(\eps\delta T)$ many succeeding layers starting with layer $\ell$.
\begin{observation}\label{obs:bounds_ip}
We have $\abs{P} \in \Oh(1/(\eps\delta))$, $\abs{\layers} \in \Oh(1/(\eps\delta))$, $\abs{\windows}\in \Oh(1/(\eps\delta)^2)$, and $\abs{\confs} \in 2^{\Oh(1/(\eps\delta)^2)}$.
\end{observation}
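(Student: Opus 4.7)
The plan is to prove each of the four bounds by a direct counting argument, all of which flow from the single observation that $\eps\delta T$ is simultaneously the rounding unit for big-job sizes and the layer width in the construction of $I_3$.

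First I would bound $\abs{P}$. In $I_3$, every big job has a processing time of the form $k\,\eps\delta T$ with $\delta T < k\,\eps\delta T \leq T$, so the integer $k$ ranges over an interval of length at most $1/\eps$ inside $[1/\eps,1/(\eps\delta)]$, giving $\Oh(1/(\eps\delta))$ distinct big-job sizes. All placeholders representing aggregated small load share the single size $\eps\delta T$ by construction, contributing one further value, so $\abs{P}\in\Oh(1/(\eps\delta))$. Next, the layering grid has spacing $\eps\delta T$ and covers the horizon of length $(1+2\eps)T' \leq 4T$ (using $\eps < 1/2$), which directly yields $\abs{\layers}\in\Oh(1/(\eps\delta))$.

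The remaining two bounds are then immediate. Since a window is a pair in $\layers\times P$, we have $\abs{\windows}\leq\abs{\layers}\cdot\abs{P}\in\Oh(1/(\eps\delta)^2)$. A configuration is nothing more than a $\Sett{0,1}$-vector indexed by $\windows$ (the non-overlap constraint only restricts the set further), hence $\abs{\confs}\leq 2^{\abs{\windows}} = 2^{\Oh(1/(\eps\delta)^2)}$. I do not anticipate any real obstacle: each bound reduces to counting once the common granularity $\eps\delta T$ of sizes and layers is recognized, and the last two bounds are purely combinatorial consequences of the first two.
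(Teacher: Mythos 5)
Your proof is correct and follows essentially the same argument as the paper: processing times and layer borders are multiples of the common granularity $\eps\delta T$ over a horizon of length $\Oh(T)$, giving $\abs{P},\abs{\layers}\in\Oh(1/(\eps\delta))$, and the window and configuration bounds are the purely combinatorial consequences $\abs{\windows}\leq\abs{\layers}\cdot\abs{P}$ and $\abs{\confs}\leq 2^{\abs{\windows}}$. The only blemish is the aside that $k$ ranges over an interval of length at most $1/\eps$ — in fact the range has length $\Theta(1/(\eps\delta))$ — but this is harmless, since the needed bound $\abs{P}\in\Oh(1/(\eps\delta))$ already follows from $k\leq 1/(\eps\delta)+1$.
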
 
\begin{proof}
Due to the rounding, the processing times are multiples of $\eps\delta T$ and upper bounded by $T + \eps\delta $.
Hence, we have $P\in \Oh(1/(\eps\delta))$ and essentially the same argument yields $\layers \in \Oh(1/(\eps\delta))$ which directly implies $\abs{\windows}\in \Oh(1/(\eps\delta)^2)$ which in turn yields $\abs{\confs} \in 2^{\Oh(1/(\eps\delta)^2)}$.
\end{proof}
In the IP, we have a variable $x_{K}\in\Sett{0,\dots,m}$ for each $k\in \confs$, a variable $y^{(c)}_{\ell,p}\in\Sett{0,\dots,n}$ for each class $c\in\classes$ and window $(\ell,p)\in\windows$, as well as the following constraints:
\begin{align}
\sum_{K\in\mathcal{K}}x_{K} & = m & \label{eq:ip_Nconfigs=Nmachines} \\ 
\sum_{K\in\mathcal{K}} K_{\ell,p} x_{K} & = \sum_{c\in\classes}y^{(c)}_{\ell,p} & \forall (\ell,p)\in\windows \label{eq:ip_confs_cover_windows}\\
\sum_{\ell\in \layers} y^{(c)}_{\ell,p} & = n^{(c)}_p & \forall c\in\classes,p\in P \label{eq:ip_windows_cover_jobs}\\
\sum_{(\ell',p)\in\windows_{\ell}} y^{(c)}_{\ell',p} & \leq 1 & \forall c\in\classes, \ell\in\layers \label{eq:ip_no_conflicts_window_choice}
\end{align}
The variables $y^{(c)}_{\ell,p}$ are used to reserve time windows for the placement of jobs belonging to class $c$, \eqref{eq:ip_windows_cover_jobs} guarantees that the correct number is chosen, and due to \eqref{eq:ip_no_conflicts_window_choice} placing the respective jobs in the windows will not create conflicts.
Furthermore, the variables $x_{K}$ are used to chose $m$ configurations (due to \eqref{eq:ip_Nconfigs=Nmachines}).
Each such configuration corresponds to a scheduling pattern on one of the $m$ machines.
In particular, a configuration is by definition a selection of non-overlapping time windows and in \eqref{eq:ip_confs_cover_windows} we make sure that these configurations cover the selected windows.
Hence, it is easy to construct a solution for the IP given a $\eps \delta T$-layered $(T',L)$-schedule and vice-versa yielding:
\begin{lemma}
There exists an $\eps \delta T$-layered $(T',L)$-schedule for $I_3$, if and only if the above IP is feasible.
\end{lemma}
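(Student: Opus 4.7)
The plan is to establish the equivalence by an explicit translation in each direction, reading the IP variables off a schedule and vice versa. For the forward direction, I would take an $\eps\delta T$-layered $(T',L)$-schedule for $I_3$ and, for each machine $i$, define the indicator vector $K^{(i)}$ by setting $K^{(i)}_{\ell,p}=1$ exactly when a job of size $p$ starts at layer $\ell$ on machine $i$. Because the schedule is feasible on the machine, the selected windows on $i$ do not overlap, so $K^{(i)}\in\confs$. Setting $x_K$ to be the number of machines with configuration $K$ gives \eqref{eq:ip_Nconfigs=Nmachines}, and defining $y^{(c)}_{\ell,p}$ as the number of class-$c$ jobs of size $p$ that start at layer $\ell$ (summed over all machines) makes both sides of \eqref{eq:ip_confs_cover_windows} count the same machine/window incidences, while \eqref{eq:ip_windows_cover_jobs} holds because every job of $I_3$ is placed exactly once.

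For the reverse direction, I would use the $x_K$ values to assign one configuration to each of the $m$ machines, which is possible by \eqref{eq:ip_Nconfigs=Nmachines}. For every window $(\ell,p)$, constraint \eqref{eq:ip_confs_cover_windows} says that the number of machines reserving that window equals $\sum_{c} y^{(c)}_{\ell,p}$, the number of jobs we intend to place there. I would then match jobs to slots arbitrarily, using \eqref{eq:ip_windows_cover_jobs} to see that each class $c$ has exactly its $n^{(c)}_p$ jobs of size $p$ distributed across layers. Inside a window, the starting time and duration are fixed, so the placement is well defined, and the makespan stays within $T'$ because $\layers$ is defined so that every window fits within the horizon.

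The one step that genuinely needs care, and which I expect to be the main obstacle in a careful write-up, is verifying that no two jobs of the same class are processed in parallel. Within a single machine this is automatic, since each machine executes a single configuration $K\in\confs$ whose windows are pairwise non-overlapping by definition. The non-trivial part is the cross-machine version of the constraint: this is exactly what \eqref{eq:ip_no_conflicts_window_choice} encodes. For each class $c$ and each layer $\ell$, the sum $\sum_{(\ell',p)\in\windows_\ell} y^{(c)}_{\ell',p}\le 1$ means that at most one window touching layer $\ell$ receives any class-$c$ job in total, so at every moment in time at most one class-$c$ job is being processed across the whole system.

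Finally, the $L$-idle-time requirement is handled implicitly: the total job volume in $I_3$ and the total capacity $m\cdot T'$ are the same in both views, so whenever all jobs of $I_3$ are placed the resulting idle time coincides with the one in the original schedule and is therefore at least $L$. Putting the two constructions together yields the claimed biconditional.
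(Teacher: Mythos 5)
Your proposal is correct and matches the paper's argument: the paper likewise proves the lemma by the natural correspondence where machines' schedules become configurations ($x_K$), window reservations per class become the $y^{(c)}_{\ell,p}$, and constraints \eqref{eq:ip_Nconfigs=Nmachines}--\eqref{eq:ip_no_conflicts_window_choice} encode exactly the machine count, slot coverage, job counts, and same-class non-overlap. Your write-up is, if anything, more explicit than the paper's brief justification (e.g.\ on the idle-time bookkeeping, which is schedule-independent since it equals $mT'$ minus the total load of $I_3$).
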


\subsection{Algorithm and Analysis}

Summing up, we use a binary search framework to get the makespan guess $T$, perform the simplification steps described in \cref{lem:PTAS_medium_m_constant,lem:PTAS_medium_m_part_of_input,lem:PTAS_small_removal,lem:PTAS_rounded_sizes_layered}, formulate and solve the described IP, construct a schedule from the IP solution, and transform it into a schedule for the original instance using the steps described in \cref{lem:PTAS_medium_m_constant,lem:PTAS_medium_m_part_of_input,lem:PTAS_rounded_sizes_layered,lem:PTAS_small_reinsertion}.
For the given makespan guess $T$, we thus find a schedule with makespan at most $(1+\eps)(1+2\eps)T + 2\eps T = (1+ \Oh(\eps))T$ or, if the IP is not feasible, correctly report that a schedule with makespan $T$ does not exist.

Regarding the running time, it is easy to see that the critical step lies in solving the IP, since all the other ones mostly involve simple changes of the instance and fast greedy procedures that obviously run in polynomial time. 
Hence, we take a closer look at the IP and again essentially apply the approach introduced in \cite{JKMR21}, i.e., solving it via n-fold integer programming.

\subparagraph{N-fold Integer Programming.}

A (generalized) n-fold IP is an IP of the form
\[
    \min\Sett{c^Tx \given \mathcal{A} x = b, \ell \leq x\leq u, x\in\mathbb{Z}^{Nt}}
\]
with $N,r,s,t \in \mathbb{Z}_{> 0}$, $c,\ell,u\in\mathbb{Z}^{Nt}$, $b\in\mathbb{Z}^{r + Ns}$, $A_i \in \mathbb{Z}^{r\times t}$ and $B_i \in \mathbb{Z}^{s\times t}$ for each $i\in [N]$, as well as
\[
\mathcal A =
\begin{pmatrix}
A_1	& \dots & A_N \\
B_1	& \dots & 0 \\
\vdots	& \ddots & \vdots \\
0 & \dots & B_N
\end{pmatrix}.
\]
The n-fold IPs and variants thereof are intensively studied in ongoing research and there has been a series of better and better algorithms presented in recent years.
For an overview of most of these developments, we refer to the extensive work \cite{DBLP:journals/corr/abs-1904-01361}.
We will employ the most recent result for this family of problems by Cslovjecsek et al.~\cite{DBLP:conf/soda/CslovjecsekEHRW21}:
\begin{theorem}\label{thm:solving_n-fold}
The $N$-fold integer programming problem with $\Delta$ the maximum absolute value occurring in $\mathcal A$ can be solved in time $2^{\Oh(rs^2)} (rs\Delta)^{\Oh(r^2s + s^2)} (Nt)^{1+o(1)}$.
\end{theorem}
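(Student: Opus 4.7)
\textbf{Plan for proving \cref{thm:solving_n-fold}.}
The plan is to follow the standard augmentation framework for n-fold IPs and combine it with a recursive decomposition that exploits the block structure of $\mathcal A$. First I would show that every feasible integer point can be written as the sum of ``few'' Graver basis elements of $\mathcal A$: using the Steinitz lemma applied to the $r$ linking rows together with the $N$ diagonal $s\times t$ blocks, one obtains the now-standard bound that every element $g$ of $\mathcal{G}(\mathcal A)$ satisfies $\lVert g\rVert_1 \le (rs\Delta)^{\Oh(rs)}$, independent of $N$. Starting from any initial feasible solution $x_0$ (produced by a phase-I variant of the same algorithm), one then iterates the augmentation step $x \leftarrow x + \lambda g$, where $g$ is a Graver-best improving direction and $\lambda \in \mathbb{Z}_{>0}$ is a best step length. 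A geometric-improvement argument shows that after $(Nt)^{1+o(1)}$ many such steps one reaches the optimum.

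The central subroutine is therefore the computation of a Graver-best step. I would formulate it as an auxiliary optimization problem over vectors $g$ with $\lVert g\rVert_\infty$ bounded by the Graver norm, subject to $\mathcal A g = 0$. Because of the block structure, once the vector $w = A_1 g_1 + \dots + A_N g_N \in \mathbb{Z}^r$ contributed by the linking rows is fixed, the problem splits into $N$ independent subproblems of size $\Oh(t)$ on the individual $B_i$ blocks. A naive algorithm would enumerate all $(rs\Delta)^{\Oh(rs)}$ possible values of $w$ and, for each, solve the $N$ independent blocks by a small ILP and add up the results. This already gives a correct algorithm with running time $f(r,s,\Delta)\cdot \operatorname{poly}(N,t)$.

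The hard part, and the place where I expect the real work, is to achieve the quasi-linear $(Nt)^{1+o(1)}$ dependence on the number of bricks instead of something polynomial. The approach I would take is a balanced-binary-tree decomposition of the $N$ blocks: at each internal node one combines the ``profiles'' of its two children, where a profile records, for every attainable value of the partial contribution $w \in \mathbb{Z}^r$, the best objective attainable on the subtree. Combining two profiles is a min-plus convolution over the lattice $\mathbb{Z}^r$, truncated to a domain of size $(rs\Delta)^{\Oh(r^2 s)}$. Using an LP-to-IP proximity bound to keep the active domain small throughout the recursion, and invoking fast (min-plus) convolution on each internal node, the total cost across all $\Oh(N)$ tree nodes becomes $(rs\Delta)^{\Oh(r^2s + s^2)} \cdot (Nt)^{1+o(1)}$, matching the claim. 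Packaging the $2^{\Oh(rs^2)}$ factor from the initial-solution phase and the per-iteration Graver-best search yields the stated bound.
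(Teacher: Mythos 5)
This statement is not proven in the paper at all: \cref{thm:solving_n-fold} is imported as a black box from Cslovjecsek, Eisenbrand, Hunkenschr\"oder, Rohwedder, and Weismantel (SODA 2021), and the paper only cites it before applying it to the module-configuration IP. So there is no in-paper proof to match your sketch against; what you have written is an attempt to reprove a substantial external result, which is more than the paper asks of you.

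Judged on its own terms, your outline has a genuine accounting gap that prevents it from reaching the claimed bound. You propose a Graver-best augmentation scheme in which ``after $(Nt)^{1+o(1)}$ many such steps one reaches the optimum,'' and each Graver-best step is itself computed by a procedure (the balanced-tree profile combination over all $N$ bricks) whose cost is at least near-linear in $N$. Multiplying the two gives $(Nt)^{2+o(1)}$, not $(Nt)^{1+o(1)}$; this quadratic barrier is exactly the obstacle the cited work had to overcome. The actual route of Cslovjecsek et al.\ avoids running $\Omega(N)$ full augmentation iterations: they first solve the LP relaxation in near-linear time by exploiting the block structure, then invoke an LP-to-IP proximity theorem to reduce to an integer program whose solutions have $\ell_1$-norm bounded by a function of $r,s,\Delta$ only, and only then finish with a structured search/DP whose $N$-dependence is near-linear. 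Your sketch mentions proximity, but only to truncate the convolution domain inside a single Graver-best computation, not to bound the number of augmentation phases, so the step ``geometric improvement $\Rightarrow$ near-linear total time'' does not go through as stated. (The attribution of the $2^{\Oh(rs^2)}$ factor to a phase-I computation is likewise asserted rather than derived, but that is secondary to the iteration-count issue.) For the purposes of this paper, the correct move is simply to cite the result, as the authors do.
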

We call $N$ the number of blocks, $r$ the number of global constraints, $s$ the number of local constraints, and $t$ the number of block variables.

\subparagraph{Application to the Present IP.}

We have to slightly change our IP in order to bring it into the above form.
In particular, we copy the variables $x_{K}$ such that each variable is present $\abs{\classes}$ many times but do not use any of these variables except for the original copy (hence the constraints are not changed).
Furthermore, we introduce slack variables for \cref{eq:ip_no_conflicts_window_choice} to transform the constraint into an equation. 
After performing these steps the number of blocks $N$ is equal to $\abs{\classes}$, the number of block variables is given by $\abs{\confs} + \abs{\windows} + \abs{\layers}$, and the number of global and local constraints by $\abs{\windows} + 1$ (\cref{eq:ip_Nconfigs=Nmachines,eq:ip_confs_cover_windows}) and $\abs{P} + \abs{\layers}$, respectively.
Furthermore, the parameter $\Delta$ is equal to $1$.
Hence, the IP can be solved in time $f(1/\eps)\cdot\abs{I}^{\Oh(1)}$ for some doubly exponential function $f$.

\section{Inapproximability Results}\label{sec:inapprox}
We consider the case in which each job may need more than a single resource.
Let us assume that we have a set $\resources$ of resources and each job $j$ needs some subset $\resources(j)$ in order to be processed.
The classes then correspond to subsets of resources $R\subseteq\resources$ with $\jobs(R) = \Sett{j\in\jobs \given \resources(j) =R}$.
We can adapt an APX-hardness result from \cite{DBLP:journals/scheduling/EvenHKR09} by recreating their conflict graph with resources.
This is done by creating a resource $r_e$ per edge $e=\{u,v\}$ and letting jobs $u$ and $v$ require that resource.
This reduction needs 2 machines, job sizes in ${1,2,3,4}$ but roughly as many distinct resources per job, as there are jobs.
Subsequently, we give a new unrelated reduction for an instance of the problem with a constant bound on the number of distinct resources per job.

\begin{theorem}
    There is no $5/4 - \varepsilon$ approximation algorithm with $\varepsilon > 0$ for the \problem with multiple resources per job if $P \neq NP$.
    This holds true, even if no job needs more than 3 resources ($\forall j \in \jobs: \abs{\resources(j)}\leq 3$) and all jobs have processing time 1, 2 or 3 ($\forall j \in \jobs: p(j) \in \{1,2,3\}$).
    Furthermore, this also holds when the number of machines is unlimited.
\end{theorem}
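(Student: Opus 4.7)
The plan is to prove that it is NP-hard to decide whether a \problem instance (with the stated restrictions: processing times in $\{1,2,3\}$, at most 3 resources per job, and arbitrarily many machines) admits a schedule of makespan at most $4$, versus every schedule having makespan at least $5$. Any $(5/4-\varepsilon)$-approximation would correctly distinguish these two cases and hence solve the underlying NP-hard decision problem. Since the number of machines is unlimited, an instance is effectively a set of jobs with resource requirements, and the makespan is governed purely by resource conflicts: each resource behaves like a dedicated ``machine'' that must process its jobs serially. This reduces the problem to a dedicated-resource scheduling/hypergraph coloring flavor, which is where I would draw the hardness from.

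I would reduce from a suitable NP-hard problem such as 3-\textsc{Sat} (or a restricted 3-coloring variant), structured so that the natural scheduling threshold sits between length $4$ and length $5$. For each variable $x_i$, the variable gadget would consist of one or two jobs of size $3$ that share a ``placement'' resource $v_i$. The crucial observation is that within a makespan-$4$ window a size-$3$ job has only the two placements $[0,3]$ and $[1,4]$, encoding the binary choice $x_i\in\{T,F\}$. Each choice leaves the associated resource free in either $[3,4]$ or $[0,1]$, giving a natural way to ``expose'' the truth value to clause gadgets through shared resources. Only three resources are needed per variable gadget job, which is within the per-job budget.

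For each clause $C=(\ell_1,\ell_2,\ell_3)$, the clause gadget needs to be schedulable within $[0,4]$ iff at least one literal is satisfied. This OR-behavior is the main obstacle, because the resource-constraint model naturally encodes AND-semantics (a job can be placed only if \emph{all} its resources are free at that moment). I would overcome this by instantiating, per clause, several small jobs (of sizes in $\{1,2,3\}$) that share a central ``selector'' resource $s_C$ -- forcing them to be scheduled in distinct time slots -- while each additionally uses one literal-specific resource so that it can be placed only in the single slot in which that literal's variable gadget leaves the corresponding resource free. With the right count of such jobs and the selector structure, the system of clause-jobs admits a simultaneous placement into the four available unit slots iff at least one literal is correctly set; otherwise, at least one clause-job is ``squeezed out'' and must be pushed past time $4$, forcing makespan $\geq 5$.

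The two directions of correctness would then be verified as follows. Starting from a satisfying assignment, one can explicitly construct a makespan-$4$ schedule: place each variable gadget in the orientation dictated by its truth value and insert the clause jobs into the free slots exposed by the variable gadgets, relying on the OR-gadget analysis to ensure all clause jobs fit. Conversely, any makespan-$4$ schedule induces a truth assignment (via the two-way choice of placement for each variable gadget's size-$3$ job) under which every clause must be satisfied, else the corresponding clause gadget would be infeasible. Since the construction uses only jobs with sizes in $\{1,2,3\}$ and at most three resources per job, and the reduction is polynomial and independent of $m$, the $5/4-\varepsilon$ inapproximability follows under $P\neq NP$. The hardest engineering step will be balancing the sizes and resource budgets in the clause gadget so that both the forward (feasibility under SAT) and backward (infeasibility under UNSAT) directions go through with only $3$ resources per job.
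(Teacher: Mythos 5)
Your overall plan --- a gap reduction from a SAT variant showing that distinguishing makespan $4$ from makespan $5$ is NP-hard under the stated job restrictions --- is indeed the paper's strategy, and your variable gadget (a binary placement choice inside a length-$4$ window exposing the truth value through a shared resource) matches its spirit. However, the proposal has a genuine gap exactly where you flag the ``hardest engineering step'': the clause gadget and the structure that forces infeasibility are never constructed, and two concrete obstacles are left unresolved. First, the per-job budget of $3$ resources interacts badly with unbounded literal occurrences in plain 3-\textsc{Sat}. If all clause jobs of a literal share the \emph{same} literal-specific resource with the variable gadget, then two clause jobs of that literal conflict with each other and cannot both occupy the single exposed free slot, so the forward direction (satisfiable $\Rightarrow$ makespan $4$) breaks when one literal must satisfy several clauses; if instead you give each occurrence its own connector resource, the variable-gadget job needs one resource per occurrence plus its own class resource, exceeding $3$ unless occurrences are bounded. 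The paper resolves this by reducing from \textsc{Monotone 3-Sat-(2,2)}, where each literal occurs exactly twice, so a variable job uses exactly $1+2=3$ resources and distinct connector resources per occurrence keep same-literal clause jobs non-conflicting.

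Second, your soundness argument (``at least one clause-job is squeezed out and must be pushed past time $4$'') does not yet follow, because with unlimited machines a job can always start at time $0$ on a fresh machine unless resource conflicts forbid it; a false literal in your gadget leaves its resource free in $[0,1]$, so nothing in the sketch prevents the corresponding clause job from being scheduled early. The paper closes this by building dummy chains (pairs of jobs of sizes $3{+}1$ and $2{+}2$ sharing class resources, linked consecutively) that pin every dummy to a fixed window in any makespan-$4$ schedule, anchoring each clause gadget and each variable gadget to the chain, and then running a slot-counting argument showing that the variable jobs must occupy $[0,2]$, the three clause literal jobs must occupy $[1,2]$, $[2,3]$, $[3,4]$, and the one in $[1,2]$ forces its variable job into $[0,1]$, i.e., yields a satisfying assignment. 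Until you (i) fix a bounded-occurrence source problem compatible with the $3$-resources-per-job limit and (ii) design the anchoring/counting mechanism that makes the UNSAT direction force makespan $5$ even with arbitrarily many machines, the argument is a plausible plan rather than a proof.
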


\begin{proof}
    We show this by giving a reduction from the NP-hard \textsc{Monotone 3-Sat-(2,2)} problem \cite{DBLP:journals/dam/DarmannD21}, which is a satisfiability problem with the following restrictions:
    The boolean formula is in 3CNF, each clause contains either only unnegated or negated variables and each literal appears in exactly 2 clauses (and every variable in exactly 4 clauses).
    Note here, that we only use the bounded occurrence of literals, not the monotony.
    
    In the following we write that two (or more) jobs $j$ and $j'$ "share a resource $r$", which means that $r\in \resources(j)$ and $r\in \resources(j')$ and for all other jobs $j^*$, $j^*\neq j$ and $j^*\neq j'$, $r\notin \resources(j^*)$.
    
    Let $\phi$ be the given formula and $\mathfrak{C}$, $\mathfrak{X}$ the sets of clauses and variables in $\phi$, respectively. 
    We start by creating a dummy structure that we can anchor jobs to by using shared resources.
    Create $\abs{\mathfrak{C}}$ many pairs of dummy jobs $j^A_i$, $j^a_i$ with $p(j^A_i) = 3, p(j^a_i) = 1$, which share a unique resource $A_i$.
    Furthermore, $j^a_i$ and $j^A_{i+1}$ share a unique resource $A_{i\rightarrow i+1}$.
    Create $\abs{\mathfrak{X}}$ many pairs of dummy jobs $j^b_i$, $j^B_i$ with $p(j^b_i) = p(j^B_i) = 2$, which share a unique resource $B_i$.
    Furthermore, $j^B_i$ and $j^b_{i+1}$ share a unique resource $B_{i\rightarrow i+1}$.
    Lastly, $j^a_{\abs{\mathfrak{C}}}$ and $j^b_{1}$ share a unique resource $A_{\rightarrow B}$
    
    For every $x_i \in \mathfrak{X}$ create three variable jobs $j_{x_i}$, $j_{\bar{{x_i}}}$ and $j_{d{x_i}}$ which all share a resource $X_{x_i}$, moreover $j_{d{x_i}}$ and $j^B_{i}$ share a resource $B_{x_i}$.
    
    For every $c_i \in \mathfrak{C}$ with $c_i= \{ x_1^{c_i}, x_2^{c_i}, x_3^{c_i} \} $ create four clause jobs $j_{x_1}^{c_i}$, $j_{x_2}^{c_i}$, $j_{x_3}^{c_i}$ and $j_{d}^{c_i}$ which all share a resource $C_{c_i}$.
    Furthermore, $j_{d}^{c_i}$ and $j^A_i$ share a resource $A_{c_i}$, while $j_{x_1}^{c_i}$ and the corresponding negated or unnegated variable job $j_{x_k}$ or $j_{\bar{{x_k}}}$ share a resource $V_{x_k}^{c_i}$.
    Note here, that each variable job $j_{x_k}$ and $j_{\bar{{x_k}}}$ shares two \emph{unique} resources with the corresponding clause jobs.
    
    Finally, we set the number of machines to $2\abs{\mathfrak{C}} + 2\abs{\mathfrak{X}}$ (remark: we could also give an unlimited number of machines, as the resources limit the number of concurrently usable machines either way).
    
	\begin{figure}[ht]
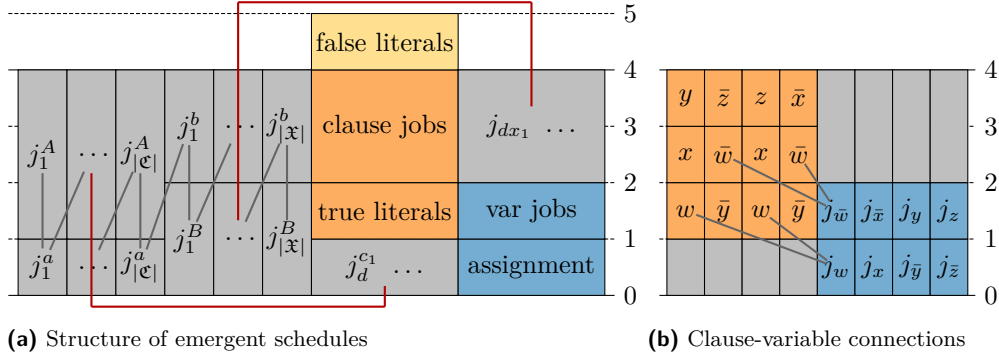

		\centering
		\begin{subfigure}[b]{0.6\textwidth}
    		\tikz[xscale=0.65,yscale=3]{
\hlines{12}{
    {5/4}/$5$//,
    {4/4}/$4$/solid/,
    {3/4}/$3$//,
    {2/4}/$2$//,
    {1/4}/$1$//,
    0/$0$/solid/
};

\schedule up [fill=lightgray] {
    {
        .25/$j_1^a$//,
        .75/$j_1^A$//
    }/1/,
    {
        .25/\dots//,
        .75/\dots//
    }/1/,
    {
        .25/$j^a_{|\noexpand\mathfrak{C}|}$//,
        .75/$j^A_{|\noexpand\mathfrak{C}|}$//
    }/1/,
    {
        .5/$j_1^B$//,
        .5/$j_1^b$//
    }/1/,
    {
        .5/\dots//,
        .5/\dots//
    }/1/,
    {
        .5/$j_{|\noexpand\mathfrak{X}|}^B$//,
        .5/$j_{|\noexpand\mathfrak{X}|}^b$//
    }/1/,
    {
        .25/$j_d^{c_1}$ \dots //,
        .25/true literals/fill=morange/,
        .5/clause jobs/fill=morange/,
        .25/false literals/fill=myellow/
    }/3/,
    {
        .25/assignment/fill=mblue/,
        .25/var jobs/fill=mblue/,
        .5/$j_{dx_1}$ \dots //
    }/3/
};

\newcommand{\mycolor}{black!60!white}

\draw[thick, shorten <=0.25cm, shorten >=0.25cm, \mycolor] (0.5,0.125) -- (0.5,0.625);
\draw[thick, shorten <=0.25cm, shorten >=0.25cm, \mycolor] (0.5,0.125) -- (1.5,0.625);
\draw[thick, shorten <=0.25cm, shorten >=0.25cm, \mycolor] (1.5,0.125) -- (2.5,0.625);
\draw[thick, shorten <=0.25cm, shorten >=0.25cm, \mycolor] (2.5,0.125) -- (2.5,0.625);

\draw[thick, shorten <=0.25cm, shorten >=0.25cm, \mycolor] (2.5,0.125) -- (3.5,0.75);

\draw[thick, shorten <=0.25cm, shorten >=0.25cm, \mycolor] (3.5,0.25) -- (3.5,0.75);
\draw[thick, shorten <=0.25cm, shorten >=0.25cm, \mycolor] (3.5,0.25) -- (4.5,0.75);
\draw[thick, shorten <=0.25cm, shorten >=0.25cm, \mycolor] (4.5,0.25) -- (5.5,0.75);
\draw[thick, shorten <=0.25cm, shorten >=0.25cm, \mycolor] (5.5,0.25) -- (5.5,0.75);

\newcommand{\myaltcolor}{black!30!red}

\draw[thick, shorten <=0.25cm, \myaltcolor] (1.5,0.625) -- (1.5,-0.05);
\draw[thick, \myaltcolor] (1.5,-0.05) -- (7.5,-0.05);
\draw[thick, shorten >=0.25cm, \myaltcolor] (7.5,-0.05) -- (7.5,0.125);

\draw[thick, shorten <=0.25cm, \myaltcolor] (4.5,0.25) -- (4.5,1.30);
\draw[thick, \myaltcolor] (4.5,1.30) -- (10.5,1.30);
\draw[thick, shorten >=0.25cm, \myaltcolor] (10.5,1.30) -- (10.5,0.755);
}		
    		\caption{Structure of emergent schedules}
    		\label{fig:redu1}
		\end{subfigure}
		\begin{subfigure}[b]{0.35\textwidth}
			\centering
    		\tikz[xscale=0.5,yscale=3]{
\hlines{8}{
    {4/4}/$4$/solid/,
    {3/4}/$3$//,
    {2/4}/$2$//,
    {1/4}/$1$//,
    0/$0$/solid/
};

\schedule up [fill=lightgray] {
    {
        .25///,
        .25/$w$/fill=morange/,
        .25/$x$/fill=morange/,
        .25/$y$/fill=morange/
    }/1/,
    {
        .25///,
        .25/$\noexpand\bar{y}$/fill=morange/,
        .25/$\noexpand\bar{w}$/fill=morange/,
        .25/$\noexpand\bar{z}$/fill=morange/
    }/1/,
    {
        .25///,
        .25/$w$/fill=morange/,
        .25/$x$/fill=morange/,
        .25/$z$/fill=morange/
    }/1/,
    {
        .25///,
        .25/$\noexpand\bar{y}$/fill=morange/,
        .25/$\noexpand\bar{w}$/fill=morange/,
        .25/$\noexpand\bar{x}$/fill=morange/
    }/1/,
    {
        .25/$j_w$/fill=mblue/,
        .25/$j_{\noexpand\bar{w}}$/fill=mblue/,
        .5///
    }/1/,
    {
        .25/$j_x$/fill=mblue/,
        .25/$j_{\noexpand\bar{x}}$/fill=mblue/,
        .5///
    }/1/,
    {
        .25/$j_{\noexpand\bar{y}}$/fill=mblue/,
        .25/$j_y$/fill=mblue/,
        .5///
    }/1/,
    {
        .25/$j_{\noexpand\bar{z}}$/fill=mblue/,
        .25/$j_z$/fill=mblue/,
        .5///
    }/1/
};

\newcommand{\mycolor}{black!60!white}

\draw[thick, shorten <=0.15cm, shorten >=0.15cm, \mycolor] (0.5,0.375) -- (4.5,0.125);
\draw[thick, shorten <=0.15cm, shorten >=0.15cm, \mycolor] (2.5,0.375) -- (4.5,0.125);
\draw[thick, shorten <=0.15cm, shorten >=0.15cm, \mycolor] (1.5,0.625) -- (4.5,0.375);
\draw[thick, shorten <=0.15cm, shorten >=0.15cm, \mycolor] (3.5,0.625) -- (4.5,0.375);

}
    		\caption{Clause-variable connections}
    		\label{fig:redu2}
		\end{subfigure}
		\caption{Dummy structure and connection between clause and variable jobs. Grey lines represent a resource each, red lines represent pairwise resources of $j^A_i$ and $j_{d}^{c_i}$ ($j^B_{i}$ and $j_{d{x_i}}$, respectively).}
	\end{figure}
    
    \begin{lemma}
        There is an optimal schedule with makespan 4 if and only if there is a satisfying assignment for the \textsc{Monotone 3-Sat-(2,2)} problem. Otherwise the optimal schedule has a makespan of 5.
    \end{lemma}
    We first show that there is a trivial schedule with makespan 5 for each instance of the resulting scheduling problem.
    Place the dummy jobs as in \Cref{fig:redu1}, after that, for each $j_{d{x_i}}$ place the two corresponding jobs $j_{x_i}$ and $j_{\bar{{x_i}}}$ directly below it (in any order).
    Lastly, for every $j_{d}^{c_i}$ leave the timestep directly above it empty, and place $j_{x_1}^{c_i}$, $j_{x_2}^{c_i}$, $j_{x_3}^{c_i}$ above the empty time step, finishing in timestep 5.
    It should be easy to see, that this is always possible.
    
    Secondly we show how to construct a schedule with makespan 4 if there is a satisfying assignment.
    We again start by placing the dummy jobs as in \Cref{fig:redu1}.
    For each $j_{d{x_i}}$ we place the two corresponding jobs $j_{x_i}$ and $j_{\bar{{x_i}}}$ below it. Now, look at the satisfying assignment for $\phi$, if $x_i$ is true (false) in the assignment, $j_{x_i}$ is placed below (above) $j_{\bar{{x_i}}}$.
    The job corresponding to the true assignment finishes at timestep 1, the other at 2.
    For every $j_{d}^{c_i}$ place $j_{x_1}^{c_i}$, $j_{x_2}^{c_i}$, $j_{x_3}^{c_i}$ above it.
    From the three jobs, choose one of which the corresponding literal evaluates to true in the given assignment to be placed directly above $j_{d}^{c_i}$ (note that there has to be at least one such job, since the assignment satisfies $\phi$).
    This is the only of the three (non-dummy) clause jobs that overlaps the variable jobs placed earlier but the variable job it shares a resource with, was scheduled in the first timestep (see \Cref{fig:redu2}).
    
    Lastly, we show how to construct a satisfying assignment from a schedule with makespan 4. 
    One can verify, that each dummy job in such a schedule has a fixed time window where it has to be scheduled due to the conflicts with other dummy jobs as (we ignore that the whole schedule can be "flipped on its head" since it is equivalent).
    Furthermore, in such a schedule every time interval on every machine must be filled.
    Each pair of $j^A_i$, $j^a_i$ or $j^b_i$, $j^B_i$ occupies an interval of $[0,4]$, $j_{d{x_i}}$ is scheduled in $[0,1]$ and every $j_{d}^{c_i}$ is scheduled in $[2,4]$ (see \Cref{fig:redu1}).
    We count the remaining open slots: $[0,1]$: $\abs{\mathfrak{X}}$, $[1,2]$: $\abs{\mathfrak{X}}+\abs{\mathfrak{C}}$ and $[2,4]$: $\abs{\mathfrak{C}}$.
    The variable jobs $j_{x_i}$ and $j_{\bar{{x_i}}}$ can only be scheduled in $[0,2]$ (due to their dummy job) and can not be scheduled concurrently (due to their shared resource).
    Therefore, for every pair $j_{x_i}$ and $j_{\bar{{x_i}}}$ one job is scheduled in $[0,1]$ and one in $[1,2]$.
    After that the remaining open slots are: $[1,4]$: $\abs{\mathfrak{C}}$.
    Following an analogously argumentation for each triple of clause jobs $j_{x_1}^{c_i}$, $j_{x_2}^{c_i}$, $j_{x_3}^{c_i}$ one job gets scheduled in $[1,2]$, $[2,3]$ and $[3,4]$.
    For every of those in $[1,2]$ the corresponding variable job has to be scheduled in $[0,1]$ (because they share a resource), which gives us, that the variable jobs in $[0,1]$ directly correspond to a satisfying assignment for the original \textsc{Monotone 3-Sat-(2,2)} Problem.
\end{proof}

Remark: Following a similar construction we can show the same inapproximability result for unit jobs and 8 or less resources per job.
Furthermore, it is possible to give a $4/3 - \varepsilon$ inapproximability result for the problem by giving a reduction from the NAE-3SAT problem. That reduction uses unit jobs, but a non-constant number of resources per job.

\section{Conclusion}

In this paper, we did greatly improve the state of the art regarding the approximability of \problem.
There are several interesting avenues emerging for further investigations.
Firstly, there is the question of whether a PTAS for \problem without resource augmentation can be achieved.
It seems plausible that the approximation schemes results of the present work could be further refined to reach this goal.
For the case with only a constant number of machines, on the other hand, an FPTAS is not ruled out at this point.

Moreover, it would be interesting to explore natural extensions of \problem and, in particular, to investigate for which variants approximation schemes may or may not be feasible. 
From the negative perspective, we have already provided initial results in this paper.
We would like to point out one further question in this direction:
Note that \problem can be seen as a special case of scheduling with conflicts where the conflict graph is a cograph. 
This problem is known to be NP-hard already for unit size jobs \cite{DBLP:conf/mfcs/BodlaenderJ93} and it would be interesting to explore inapproximability for arbitrary sizes.
Regarding the design of approximation schemes, on the other hand, variants where the corresponding conflict graph is a particularly simple cograph may be interesting.

Finally, from a broader perspective, it seems interesting to explore the possibilities of N-fold IPs and related concepts \cite{DBLP:journals/corr/abs-1904-01361} for scheduling with additional resources.

\bibliography{bibbib}

\end{document}